\newenvironment{sequation}{\begin{equation}\small}{\end{equation}}
\newenvironment{proof}{{\noindent\it Proof. }}{\hfill $\blacksquare$\par}
\def\BibTeX{{\rm B\kern-.05em{\sc i\kern-.025em b}\kern-.08em
		T\kern-.1667em\lower.7ex\hbox{E}\kern-.125emX}}
\newcommand*{\addFileDependency}[1]{
  \typeout{(#1)}
  \@addtofilelist{#1}
  \IfFileExists{#1}{}{\typeout{No file #1.}}
}
\newcommand*{\myexternaldocument}[1]{
    \externaldocument{#1}
    \addFileDependency{#1.tex}
    \addFileDependency{#1.aux}
}
\definecolor{b}{rgb}{0.0, 0, 0}
\definecolor{b1}{rgb}{0.0, 0, 0}
\definecolor{b2}{rgb}{0.0, 0, 0}
\newskip\theorempreskipamount
\newskip\theorempostskipamount
\newtheorem{theorem}{Theorem}
\newtheorem{definition}{Definition}
\newtheorem{lemma}{Lemma}
\newtheorem{remark}{Remark}
\begin{document}

\title{Joint Computing Resource Allocation and Task Offloading in Vehicular Fog Computing Systems Under Asymmetric Information}

\author{Geng~Sun,~\IEEEmembership{Senior~Member,~IEEE},
     Siyi~Chen,
     Zemin Sun,~\IEEEmembership{Member,~IEEE}, 
     Long He,
     Jiacheng Wang, \\ 
     Dusit Niyato,~\IEEEmembership{Fellow,~IEEE}, 
     Zhu Han,~\IEEEmembership{Fellow,~IEEE}, and 
     Dong In Kim,~\IEEEmembership{Life Fellow,~IEEE}

    \thanks{This study is supported in part by the National Natural Science Foundation of China (62272194, 62471200, 6250076101), in part by the Science and Technology Development Plan Project of Jilin Province (20250101027JJ), \textcolor{b2}{in part by the National Science Foundation (NSF) (ECCS-2302469), in part by Amazon, in part by Japan Science and Technology Agency (JST) Adopting Sustainable Partnerships for Innovative Research Ecosystem (ASPIRE) (JPMJAP2326), in part by Seatrium New Energy Laboratory, in part by the Singapore Ministry of Education (MOE) Tier 1 (RT5/23, RG24/24), in part by the Nanyang Technological University (NTU) Centre for Computational Technologies in Finance (NTU-CCTF), and in part by the Research Innovation and Enterprise (RIE) 2025 Industry Alignment Fund - Industry Collaboration Projects (IAF-ICP) (Award I2301E0026), administered by A*STAR.} (\textit{Corresponding author: Zemin Sun.})}
     \IEEEcompsocitemizethanks
     {
     \IEEEcompsocthanksitem Geng Sun is with the College of Computer Science and Technology, Key Laboratory of Symbolic Computation and Knowledge Engineering of Ministry of Education, Jilin University, Changchun 130012, China, and also with the College of Computing and Data Science, Nanyang Technological University, Singapore 639798 (e-mail: sungeng@jlu.edu.cn).
     \IEEEcompsocthanksitem Siyi Chen, Zemin Sun, and Long He are with the College of Computer Science and Technology, Key Laboratory of Symbolic Computation and Knowledge Engineering of Ministry of Education, Jilin University, Changchun 130012, China (e-mails: sychen23@mails.jlu.edu.cn, sunzemin@jlu.edu.cn, and helong0517@foxmail.com).
     \IEEEcompsocthanksitem Jiacheng Wang and Dusit Niyato are with the College of Computing and Data Science, Nanyang Technological University, Singapore 639798 (e-mails: jiacheng.wang@ntu.edu.sg and dniyato@ntu.edu.sg).
    \IEEEcompsocthanksitem \textcolor{b2}{Zhu Han is with the Department of Electrical and Computer Engineering at the University of Houston, Houston, TX 77004 USA, and also with the Department of Computer Science and Engineering, Kyung Hee University, Seoul, South Korea, 446-701} (e-mail: hanzhu22@gmail.com).
     \IEEEcompsocthanksitem Dong In Kim is with the Department of Electrical and Computer Engineering, Sungkyunkwan University, Suwon 16419, South Korea (email: dongin@skku.edu).
     }
}

\IEEEtitleabstractindextext{%
\begin{abstract}	
\par Vehicular fog computing (VFC) has emerged as a promising paradigm, which leverages the idle computational resources of nearby fog vehicles (FVs) to complement the computing capabilities of conventional vehicular edge computing. However, utilizing VFC to meet the delay-sensitive and computation-intensive requirements of the FVs poses several challenges. First, the limited resources of road side units (RSUs) struggle to accommodate the growing and diverse demands of vehicles. This limitation is further exacerbated by the information asymmetry between the controller and FVs due to the reluctance of FVs to disclose private information and to share resources voluntarily. This information asymmetry hinders the efficient resource allocation and coordination. Second, the heterogeneity in task requirements and the varying capabilities of RSUs and FVs complicate efficient task offloading, thereby resulting in inefficient resource utilization and potential performance degradation. To address these challenges, we first present a hierarchical VFC architecture that incorporates the computing capabilities of both RSUs and FVs. Then, we formulate a delay minimization optimization problem (DMOP), which is an NP-hard mixed integer nonlinear programming (MINLP) problem. To solve the DMOP, we propose a joint computing resource allocation and task offloading approach (JCRATOA), which comprises the components of computing resource allocation and task offloading. Specifically, we propose a convex optimization-based method for RSU resource allocation and a contract theory-based incentive mechanism for FV resource allocation. Moreover, we present a two-sided matching method for task offloading by employing the matching game. Additionally, we theoretically prove the polynomial complexity of JCRATOA. Simulation results demonstrate that the proposed JCRATOA outperforms the benchmark approaches, \color{b1}{achieving at least 7.6\%, 6.6\%, 6.25\%, and 11.9\% improvements in terms of the task completion delay, task completion ratio, system throughput, and resource utilization fairness, respectively, while satisfying the energy constraints of task vehicles (TVs), RSUs, and FVs.}
\end{abstract}
	
\begin{IEEEkeywords}
    Vehicular fog computing, task offloading, resource allocation, contract theory, information asymmetry, matching game.
\end{IEEEkeywords}}

\maketitle
\IEEEdisplaynontitleabstractindextext
\IEEEpeerreviewmaketitle

\section{Introduction}
\label{sec_introduction}

\par The proliferation of road vehicles and the advancement of vehicular networks are propelling the emergence of various vehicular applications such as real-time navigation and collision detection. Moreover, the development of generative AI~\cite{Sun2025GenAI} further accelerates these advancements, thus enabling functionalities such as predictive maintenance and natural language-based emergency response~\cite{Sun2024LLM}. Most of these applications are computation-hungry and delay-sensitive, which drive unprecedented requirements for computing resources to satisfy the ultralow task execution delay. However, conventional cloud computing struggles to meet the stringent requirements of these applications due to the long communication distance between vehicles and remote cloud servers. To address this challenge, multi-access computing (MEC) has been regarded as a promising technology by migrating the cloud computing capabilities to road side units (RSUs) in close proximity to vehicles, thereby driving the advancement vehicular edge computing. By offloading the tasks to adjacent RSUs equipped with MEC servers, the computation performance of vehicles can be significantly extended in a low-latency and cost-effective way.

\par The geographic-random and time-varying requirements of vehicles make it challenging to deploy a sufficient number of MEC servers at a low cost. On the one hand, an inadequate number of MEC servers can lead to server overloads at certain MEC servers, particularly during peak periods. On the other hand, densely deploying MEC servers can incur high installation costs and result in the resource wastage during off-peak times. To address this limitation, vehicular fog computing (VFC) offers a promising solution by leveraging the underutilized computing resources of nearby vehicles~\cite{Wu2024Delay}. Specifically, the vehicles with idle resources serve as fog vehicles (FVs) to assist the RSUs in task processing, especially as future intelligent vehicles are expected to be equipped with more powerful onboard computing units~\cite{Liu2023Mobility}. Therefore, the tasks of a vehicle can be offloaded to a neighboring FV when the direct connectivity to RSUs is not possible, or when the RSU in range is overloaded. Despite the above mentioned advantages, the widespread deployment of VFC faces several challenges.

\textbf{Resource allocation.} First, compared to the cloud with rich resources, the computing resources of RSUs are limited. Without efficient resource allocation, it is difficult for an RSU to meet the computation-hungry and latency-sensitive demands of multiple vehicles simultaneously, especially during the peak hours\textcolor{b}{~\cite{Liu2025JointRobust}}. Moreover, due to the costs of task processing and the risks of privacy leakage, self-interested and privacy-aware vehicles are often reluctant to reveal the private information, such as their resource availability and willingness to collaborate. This reluctance causes information asymmetry between the controller and FVs, as the MBS lacks accurate and real-time knowledge of the resource status or intentions of FVs\textcolor{b}{~\cite{Lu2025Enhancing}}. Consequently, this information asymmetry further discourages FVs from voluntarily sharing resources or collaborating efficiently, ultimately leading to the inefficient utilization of the available resources\textcolor{b}{~\cite{Chen2023Comprehensive}}. Consequently, efficiently utilizing the computing resources of RSUs and FVs to meet the stringent demands of vehicles remains a significant challenge. 

\textbf{Task offloading.} Different vehicles generate tasks with diverse computational demands, while different RSUs and FVs possess varying processing capabilities\textcolor{b}{~\cite{Chen2025Efficient}}. Without an optimized offloading method, the RSUs and FVs are often under-loaded or over-loaded, which leads to poor task processing performance and inadequate resource utilization\textcolor{b}{~\cite{Hevesli2024Task}}. For example, a resource-hungry task may be offloaded to an FV that lacks sufficient computational power, thus causing delays or task failure, while more capable RSUs or FVs remain underutilized. As a result, the heterogeneity in both task requirements and computing resources poses a challenge in designing an efficient task offloading method. 

\par To overcome the above challenges, we propose a joint optimization approach for computing resource allocation and task offloading. The contributions are as follows.
\begin{itemize}
	\item \textit{\textbf{System Architecture.}} We propose a hierarchical VFC architecture consisting of a vehicle layer with a set of task vehicles (TVs), a fog layer with a set of FVs, an edge layer with a set of RSUs, and a control layer with a macro base station (MBS). Under the coordination of the MBS, the task offloading decisions of vehicles and the computing resource allocation decisions of FVs and RSUs are determined.

    \item \textit{\textbf{Problem Formulation.}} Considering the delay sensitivity of the vehicular tasks, we formulate a delay minimization optimization problem (DMOP) to minimize the task completion delay of vehicles under the energy constraints of TVs, RSUs, and FVs. Moreover, we prove that DMOP is an NP-hard mixed integer nonlinear programming (MINLP) problem.
    
    \item \textit{\textbf{Algorithm Design.}} To solve the DMOP, we propose a joint computing resource allocation and task offloading approach (JCRATOA), which includes the components of computing resource allocation and task offloading. Specifically, for computing resource allocation, the problem is decomposed into subproblems of RSU computing resource allocation and FV computing resource allocation, which are solved by using a convex optimization-based method and a contract theory-based incentive mechanism, respectively. For task offloading, we present a two-sided matching method by employing the matching game. The proposed JCRATOA offers a suboptimal solution with acceptable computational complexity, thus ensuring a balance between solution quality and efficiency.

    \item \textit{\textbf{Performance Evaluation.}} The performance of the proposed JCRATOA is evaluated through theoretical analysis and simulation. First, we prove that the worst-case computational complexity of JCRATOA is polynomial. Moreover, the simulation results demonstrate that the proposed JCRATOA clearly outperforms the other benchmark approaches in terms of {\color{b1}task completion delay, task completion ratio, system throughput, and resource utilization fairness}, while ensuring the energy constraints of TVs, RSUs, and FVs.

\end{itemize}

\par The remainder of this paper is organized as follows. Section \ref{Sec_Related_Work} reviews the related work. In Section \ref{System_Model}, we present the system model. Next, the optimization problem is formulated and analyzed in Section \ref{sec_Problem_Formulation}. The proposed JCRATOA is presented in Section \ref{The_Proposed_Joint_Optimization_Approach}. \textcolor{b1}{Sections \ref{SIMULATION_RESULTS_AND_ANALYSIS} and \ref{discussion} show the simulation results and discussions.} This work is concluded in Section \ref{CONCLUSION}. 

\section{Related Work}
\label{Sec_Related_Work}
 
\textcolor{b}{In this section, we comprehensively review the existing research works. Moreover, we summarize the differences between the related works and this work in Table 1 of the supplementary material.}

\subsection{Edge-assisted Vehicular Network Architecture}

\par MEC has been extensively studied to extend the computing capability of the vehicles. For example, Shah \textit{et al.}~\cite{Shah2022SDN} considered a software-defined networking-based MEC architecture for vehicular networks, where multiple MEC-enabled RSUs provide computing services for vehicles. Moreover, Li \textit{et al.}~\cite{Li2024Mob} introduced a non-orthogonal multiple access (NOMA)-assisted vehicular framework, where an RSU equipped with an MEC server offers computation service for vehicles on the road segment. Additionally, Jung \textit{et al.}~\cite{Jung2024GamiCO} presented a multi-interface and MEC-enabled vehicular architecture with multiple mmWave-based small base stations and a cellular-based macro base station. {\color{b}Furthermore, Sun \textit{et al.}~\cite{Sun2024Profit} considered an MEC-enabled cooperative vehicular networking architecture, where vehicles communicate with base stations via cellular networks and can offload computation tasks to MEC servers. Besides, Wang \textit{et al.}~\cite{Wang2025Delay} considered a cell free massive multiple input multiple output enabled VFC network. In this system, vehicles communicate with nearby RSUs to offload computation tasks for cooperative processing.} However, the abovementioned studies mainly focused on the MEC-enabled vehicular system, but do not fully leverage the computing capabilities of the vehicles. Due to the high deployment costs, it is unrealistic to densely deploy MEC servers. As a result, the amount of computing requirements could lead to overloads at MEC servers and long delay, especially during the peak periods.

\par To alleviate the workloads at MEC servers and reduce processing delays, VFC has emerged as a promising solution by leveraging the underutilized computing resources of nearby vehicles. For example, Lin \textit{et al.}~\cite{Lin2024Energy} proposed a multi-fog-assisted VFC system to support inter-vehicular task offloading. Furthermore, Wei \textit{et al.}~\cite{Wei2023OCVC} presented a cooperative VFC architecture, where each vehicle can join different fogs simultaneously. This architecture allows the computing resources to be exploited in an overlapping manner. Besides, Zhang \textit{et al.}~\cite{Zhang2022Joint} employed a collaborative VFC framework, where a single MBS, a set of edge servers, and the vehicles with abundant computing resources cooperatively provide services for the vehicles with limited computation powers. Additionally, Mao \textit{et al.}~\cite{Mao2023Demand} proposed an on-demand capacity planning VFC system, where the FVs are routed to the places with computing demands. {\color{b}Moreover, Yin \textit{et al.}~\cite{Yin2024Joint} proposed a hybrid offloading vehicle edge computing system, where vehicles can offload computational tasks to RSUs or other vehicles.}

\par However, the aforementioned studies were conducted under the assumption of symmetric information between the controllers and FVs, which indicates that all vehicles are willing to act as FVs. Nevertheless, in a realistic VFC system, vehicles are often reluctant to share information due to the selfishness and privacy sensitivity, which leads to asymmetric information between the controllers and vehicles. To address the limitations of existing works, we propose a hierarchical VFC architecture that operates under asymmetric information. This architecture effectively integrates the near-computing capabilities of RSUs and the idle computing resources of FVs, while accounting for the characteristics of privacy awareness and selfishness exhibited by FVs.

\subsection{Resource Allocation and Task Offloading}

\par Researchers have studied various aspects of VFC systems, with a primary focus on resource allocation and task offloading. Given the delay sensitivity of vehicular tasks, several studies focused on delay minimization for VFC. For example, Tang \textit{et al.}~\cite{Tang2020Mobile} focused on minimizing the total response latency of VFC by jointly optimizing the task scheduling and resource allocation. Furthermore, Nan \textit{et al.}~\cite{Nan2023Joint} aimed to minimize the average latency of task offloading through optimizing the task offloading and computational resource allocation in VFC. Moreover, Fan \textit{et al.}~\cite{Fan2023Joint} formulated a joint resource allocation and task offloading problem for VFC, with the aim of minimizing the total task processing delay for all vehicles. Hou \textit{et al.}~\cite{Hou24Hierarchical} aimed to minimize the mean offloading delay of tasks in VFC by optimizing the computing resource allocation and task offloading. {\color{b}Besides, Hu \textit{et al.}~\cite{Hu2025Computation} jointly optimized the offloading decisions, computing resource allocation, and transmission power allocation to minimize the maximum service delay experienced by all vehicles.} The abovementioned works mainly focused on minimizing latency, without considering the impact of energy consumption on system performance. However, different from the cloud computing, the MEC servers and FVs have limited energy resources. Prioritizing delay optimization alone can lead to a significant increase in energy consumption for RSUs and FVs, which is impractical for the VFC system.
 
\par Considering the energy constraints of RSUs and FVs, several studies took into account the energy consumption in the problem formulation. For example, Cong \textit{et al.}~\cite{Cong2023Latency} explored the problem of minimizing the task offloading cost in vehicular networks, where the cost was theoretically modeled by integrating the delay and energy consumption. Moreover, Zhang \textit{et al.}~\cite{Zhang2024Partial} studied the resource allocation strategy for a multi-user VFC system, with the aim of minimizing the weighted sum of delay and energy consumption. Furthermore, Huang \textit{et al.}~\cite{Huang2022Revenue} aimed to reduce the energy consumption of task execution for vehicles under the constraints of delay. {\color{b}Additionally, Tian \textit{et al.}~\cite{Tian2025Task} investigated the task offloading and resource allocation in vehicular edge computing networks, taking both energy consumption and delay into consideration. Besides, Wakgra \textit{et al.}~\cite{Wakgra2024Multi} considered an optimization problem of task offloading for VFC system, with the aim of minimizing the average weighted sum cost of the system in terms of delay and energy consumption.} However, these works did not consider the key dynamic features of the VFC system such as the mobility of vehicles and the variability of the wireless channel, which have a significant impact on decision making. In contrast to these studies, we formulate a delay minimization problem under the energy constraints of both RSUs and FVs, while also considering the channel dynamic and vehicle mobility.

\subsection{Optimization Approaches}

\par To solve the complex optimization problem of resource allocation and task offloading, researchers have explored various optimization approaches by adopting advanced methods such as heuristic algorithms, and deep reinforcement learning (DRL). For example, Sun \textit{et al.}~\cite{Sun2023Vehicular} designed an ant colony algorithm for the multi-objective optimization of task offloading and job scheduling in the vehicular edge computing networks. Wang \textit{et al.}~\cite{Wang2022Distributed} developed an online heuristic algorithm to make real-time offloading decisions for vehicles within the VFC system. Moreover, Huang \textit{et al.}~\cite{Huang2023Joint} proposed a dynamic task offloading and resource allocation approach by leveraging DRL to deal with the high-dimensional and continuous states and the action spaces. \textcolor{b}{Luo \textit{et al.}~\cite{Luo2024DRL} proposed a DRL algorithm with embedded penalty mechanisms to find out real-time solution for computational resource optimization of MEC servers. Furthermore, Liu \textit{et al.}~\cite{Liu2024Joint} presented a DRL-based dual timescale scheme to jointly optimize the long-term service caching and short-term  offloading and resource allocation. In~\cite{10736570}, the authors proposed a diffusion-based DRL approach for deep neural network task offloading, and resource allocation in vehicular networks. Additionally, Hazarika \textit{et al.}~\cite{Hazarika2024} explored a federated DRL approach for efficient learning while maintaining privacy in vehicular networks. Besides, Shang \textit{et al.}~\cite{Shang2024Joint} designed a proximal policy optimization (PPO)-based approach to jointly optimize service caching and task offloading for mobile edge-cloud computing. He \textit{et al.}~\cite{He2025Low} proposed a multi-objective task-aware service offloading algorithm for medical Internet of things systems by employing deep deterministic policy gradients (DDPG).}

\par However, the aforementioned approaches may not be suitable for our VFC system due to several limitations. First, swarm intelligence algorithms generally require numerous iterations to converge, thereby making them less adaptable to the dynamic nature of VFC systems. Moreover, heuristic algorithms often fail to guarantee optimal solutions and tend to require significant iterations, which results in high computational overhead and longer processing delays. In addition, while DRL is effective in training agents to make decisions, it generally requires extensive sample data to achieve the optimal outcomes, which results in long training times and considerable computational resources. This makes it unsuitable to solve the joint optimization problem in the delay-sensitive and resource-limited VFC system.


\par Considering that FVs are selfish and are unwilling to share the idle resources, recent studies have focused on designing incentive methods. For example, Sun \textit{et al.}~\cite{Sun2022Dynamic} proposed a two-stage incentive mechanism based on the Stackelberg game. This mechanism enables the interaction of vehicles and RSUs for efficient resource allocation. Additionally, Cao \textit{et al.}~\cite{Cao2024Incentive} proposed an optimal differentiated pricing method to stimulate the service vehicles to allocate the available computing resources to the task vehicles. Moreover, Dai \textit{et al.}~\cite{Dai2021Vehicle} modeled the trading process between UAVs and vehicles as a bargaining game to incentivize vehicles for task offloading. \textcolor{b}{Besides, Zhang \textit{et al.}~\cite{Zhang2025Resource} proposed a multi-task incentive mechanism through optimizing reward rates. Chen \textit{et al.}~\cite{Chen2024Game} presented a price incentive mechanism to motivate idle vehicles to participate in the task offloading process.} However, the pricing strategies in the aforementioned works were developed under the assumption of symmetric information, which neglects the self-interested nature and privacy concerns of vehicles. In the realistic VFC system, the information is often asymmetric, where the selfish FVs may misreport their actual states, such as the amount of computational resources. 

	\begin{table*}[!t]
        \centering
	\vspace{0em}
	\setlength{\abovecaptionskip}{0pt}%
	\setlength{\belowcaptionskip}{0pt}%
	\caption{Summary of notations}
	\label{tab_notation}
	\renewcommand*{\arraystretch}{.9}
	\begin{center}
		\begin{tabular}{|p{.187\textwidth}|m{.32\textwidth}||m{.1\textwidth}|m{.28\textwidth}|}
			\hline
			\textbf{Symbol}&\textbf{Description}&\textbf{Symbol}&\textbf{Description}\\
			\hline
				$\mathcal{N} = \{1, 2, \ldots, n, \ldots, N\}$ & The set of TVs&$v\in\mathcal{N}\cup\mathcal{M}$&The index of vehicle v\\ 
			\hline
				$\mathcal{M} = \{1, 2, \ldots, m, \ldots, M\}$ & The set of FVs&$s\in\mathcal{K}\cup\mathcal{M}$ & The index of edge server $s$ \\
			\hline
				$\mathcal{K} = \{1, 2, \ldots, k, \ldots, K\}$ & The set of RSUs&$\Psi_n(t)$ & The task generated by TV $n$ at time t \\ 
			\hline
                    $\mathcal{T}=\{1, 2, \ldots,t,\ldots,T\}$ & System timeline &$D_n^{\text{in}}(t)$ & The input data size of the task\\ 
                    
                \hline
                    $D_n^{\text{out}}(t)$ & The output data size of the task & $C_n(t)$ & The required computing resources of the task \\ 
                    
                \hline
                    $t^{\max}_n(t)$& The maximum allowed delay for task completion & $\textbf{v}_{v}(t)$ &The velocity vector at time slot $t$\\ 
                    
                \hline
                    $\bar{\textbf{v}}_{v}$,$\alpha$& The asymptotic mean and memory level of velocity, & $\mathbf{w}_v$ & Uncorrelated random Gaussian process \\ 
                    
                \hline
                    $\mathbf{q}_{v}(t)=[x_v(t),y_v(t)]$& The horizontal coordinate of vehicle $v$ & $r_{n,k}(t)$ & The uplink data rate from TV $n$ to RSU $k$ \\ 
                    
                \hline
                    $B_{n,k}$&The communication bandwidth between TV $n$ and RSU $k$& $p_{n,k}$ & The transmit power from TV $n$ to RSU $k$\\ 
                \hline
                    $\gamma_{n,k}(t)$& The channel gain between TV $n$ and RSU $k$ & $N_0$ & Noise power \\ 
                \hline
                    $d_{n,k}(t)$&The distance between TV $n$ and RSU $k$& $\alpha_k$ & The path loss exponent of the V2I link \\ 
                \hline
                    $h_{n,k}(t)$ & The component of small-scale fading of the V2I link& $r_{n,m}(t)$ & The data rate from TV $n$ to FV $m$ \\ 
                \hline
                    $B_{n,m}$& The communication bandwidth between TV $n$ and the FV $m$ &  $p_{n,m}$ & The transmit power from TV $n$ to FV $m$\\ 
                \hline
                    $\gamma_{n,m}(t)$& The channel gain between TV $n$ and the FV $m$ & $d_{n,m}(t)$ & The distance between TV $n$ and FV $m$ \\ 
                \hline
                    $\alpha_m$& The path loss exponent of the V2V link & $h_{n,m}(t)$ & The component of small-scale fading of the V2V link \\ 
                \hline  
                    $o_{n,a}(t),a\in n\cup \mathcal{M}_n(t) \cup \mathcal{K}$&   The task offloading decision& $\mathcal{M}_n(t)$ & The set of FVs within the range of TV $n$ in time slot $t$ \\ 
                \hline  
                    $o_{n,n}(t)$/$o_{n,k}(t)$/$o_{n,m}(t)$& The task offloading decision for offloading task locally/RSU $k$/FV $m$ & $o_{n,s}(t)$ & The task offloading decision for offloading task on edge server $s$ \\                     
                \hline  
                    $T_{n,n}(t)$/$T_{n,k}(t)$/$T_{n,m}(t)$& The task completion delay for processing task on vehicle $n$/RSU $k$/ FV $m$ & $f_n$ & The computing resources of TV $n$ \\ 
                \hline  
                    $f_{n,k}(t)$& The computing resources allocated by the RSU $k$ to task $\Psi_n(t)$ in time slot $t$ & $r_f$ &The data rate of fiber link \\ 
                \hline  
                    $f_{n,m}(t)$& The computing resources allocated by the FV $m$ to task $\Psi_n(t)$ in time slot $t$ &$T_{n}(t)$  & Total completion delay \\ 
                \hline   
                    $E_{n}(t)/E_{n,s}(t)$& The energy consumption of TV $n$/server $s$  &  $\kappa^{\text{TV}}/\kappa_s$&The effective switched capacitance of the TV $n$/server $s$ \\ 
                \hline 
                    $\mathbf{O},\mathbf{F}$& The decisions of task offloading and  computing resource allocation& $e$ & The unit cost of energy consumption \\ 
                \hline 
                    $E_n^{\text{max}}$/$E_k^{\text{max}}$/$E_m^{\text{max}}$& The energy constraints of TV $n$/RSU $k$/FV $m$ & $f_{k}^{\text{max}}$/$f_{m}^{\text{max}}$ & The maximum computing resources of RSU $k$/FV $m$ \\ 
                \hline 
                    $\sigma_l$& The strength of the willingness to contribute resources  & $f_l^{\text{max}}$ & The maximum computational resource that an FV can contribute \\ 
                \hline 
                    $\Theta=\{\theta_1,\theta_2,\ldots ,\theta_L\}$& The set of types of FVs & $\Pi_{n}(t)$ & Matching result \\ 
                \hline 
                    $f_l(t)$& The computing resources allocated by FV with type $\theta_l$  &$w_l(t)$  & The rewards of FV with type $\theta_l$ \\ 
                \hline 
                    $E_l(t)$& The energy consumed by the FV with type $\theta_l$ for task computing & $M_l$ & The total number of type $\theta_l$ FVs \\ 
                \hline 
                    $w_l^*(t)$,$f_l^*(t)$& The optimal rewards and computing resource allocation of FV with type $\theta_l$& $f_{n,m}^*(t)$ & The optimal computing resources that each FV $m$ should allocate to TV $n$ \\ 
                \hline 
                    $\mathcal{P}_{n}(t)/\mathcal{P}_{s}(t)$& The preference lists of TVs/servers & $(\mathcal{A},\mathcal{P}(t),\Pi(t))$ & Current matching \\ 
                \hline       
                    $\Phi_{n,s}(t)/\Phi_{s,n}(t)$& The preference value of TV $n$/server $s$ on server $s$/TV $n$  & $\tau$ & Time slot duration \\ 
                \hline   
		\end{tabular}
	\end{center}
    \vspace{-2em}
\end{table*}     

\section{Models and Preliminaries}
\label{System_Model}

\par In this section, we first propose a hierarchical VFC architecture. Then, we introduce the basic models, communication model, and computation model in the VFC system. The notations are listed in Table \ref{tab_notation}.

\subsection{System Overview}
\par In Fig. \ref{system_model}, we consider a hierarchical VFC architecture under asymmetric information in urban scenario. This architecture comprises a vehicle layer with a set of TVs $\mathcal{N} = \{1, 2, \ldots, n, \ldots, N\}$ and a set of FVs $\mathcal{M} = \{1, 2, \ldots, m, \ldots, M\}$, an edge layer with a set of RSUs $\mathcal{K} = \{1, 2, \ldots, k, \ldots, K\}$, and a control layer with an MBS. Specifically, \textit{at the vehicle layer}, the TVs periodically generate vehicular tasks such as autonomous driving and infotainment applications, which require offloading services due to their delay sensitivity and computing intensity. Moreover, each TV can decide to process the task locally, upload it to the connected RSU, or offload it to an FV within its range. Additionally, the FVs share the idle computing resources to the nearby TVs for task processing. \textit{At the edge layer}, the RSUs equipped with the MEC servers\footnote{The RSU and MEC server will be used interchangeably.} are deployed along the road with non-overlapping coverage radius to provide offloading services for the TVs. These RSUs are interconnected with the MBS and with each other through fiber links~\cite{Sun2024TJCCT}. In addition, each RSU is responsible for collecting local information on its own status, the vehicle states, and the channel state information, which are then uploaded to the control layer. \textit{At the control layer}, the MBS is equipped with a controller for decision making, and it is connected to the RSUs for information collection and decision distribution. 

\par In the open and dynamic VFC system, the FVs are often reluctant to disclose their private information (e.g., real-time resource availability or collaboration intent) due to privacy concerns, thus leading to information asymmetry between the MBS and FVs. Consequently, this asymmetric information prevents the MBS from obtaining comprehensive knowledge on the available resources and collaboration intentions of the FVs. As a result, without accurate and complete knowledge of the system states, the MBS struggles to optimize resource allocation and manage workloads effectively, which ultimately results in inefficiencies in utilizing the idle computational resources of FVs.

\par The system operates in a time-slotted manner, where the system time is discretized into $T$ time slots $\mathcal{T}=\{1, 2, \ldots,t,\ldots,T\}$ with equal slot duration $\tau$~\cite{Zhang2024MultiObjective}. Note that the TVs and FVs are collectively referred to as vehicles, indexed by $v\in\mathcal{N}\cup\mathcal{M}$, and the RSUs and FVs are collectively referred to as edge servers, indexed by $s\in\mathcal{K}\cup\mathcal{M}$.

\begin{figure}[t] 
	\setlength{\abovecaptionskip}{0pt}   
	\setlength{\belowcaptionskip}{0pt} 
	\centering
	\includegraphics[width =3.5in]{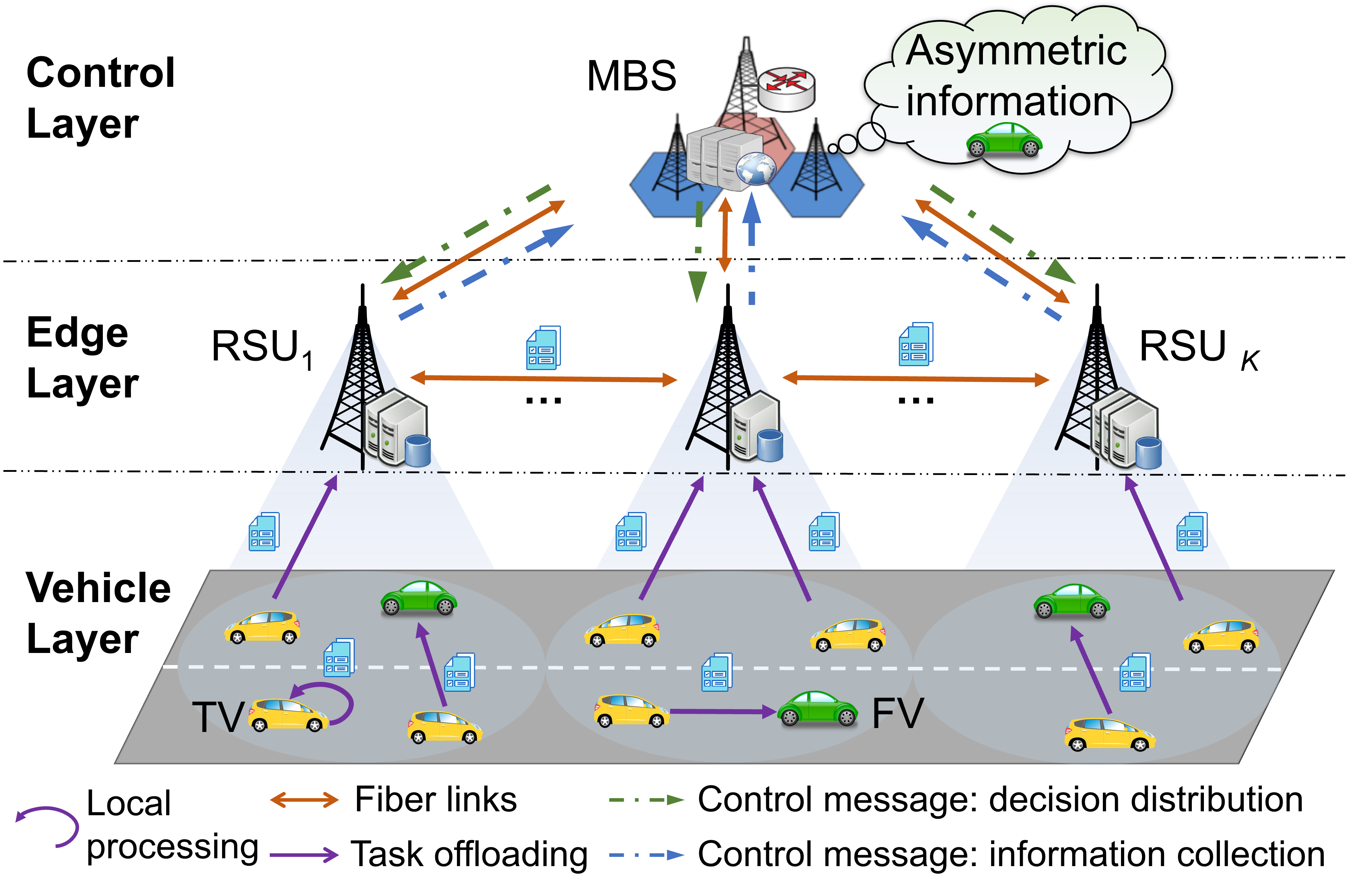}
        \caption{\textcolor{b}{The architecture of the hierarchical VFC system under asymmetric information consists of a vehicle layer, an edge layer, and a control layer. Each TV can execute the tasks locally, or offload the tasks to an RSU or an FV for edge computing. The RSUs, equipped with MEC servers, provide edge computing services and are connected to the MBS via fiber links. The MBS acts as a centralized controller, responsible for collecting system information and making offloading decisions under incomplete knowledge caused by the private information of FVs.}}
	\label{system_model}
\end{figure}

\subsection{Basic Models}

\par The basic models of the system are given as follows.

\par\textbf{Task Model.} We consider that each TV generates a computational task per time slot~\cite{Wu2021Fog}. Specifically, the task of TV $n$ is denoted by $\Psi_n(t)=(D_n^{\text{in}}(t),D_n^{\text{out}}(t),C_n(t),t^{\max}_n(t))$, where $D_n^{\text{in}}(t)$ represents the input data size, $D_n^{\text{out}}(t)$ denotes the output data size, $C_n(t)$ is the required computing resources of the task (in cycles), and $t^{\max}_n(t)$ indicates the maximum allowed delay for task completion.

\par \textbf{Vehicle Mobility Model.} The mobility of each vehicle $v$ is modeled as a Gauss-Markov mobility model~\cite{Sun2025Aerial}. Specifically, the velocity of vehicle $v$ is given as follows:
\begin{equation}
    \textbf{v}_{v}(t+1)=\alpha \mathbf{v}_{v}(t)+(1-\alpha)\bar{\textbf{v}}_{v}+\sqrt{1-\alpha^2}\mathbf{w}_v, \ v\in\mathcal{N}\cup\mathcal{M},
\end{equation}

\noindent where $\textbf{v}_{v}(t)$ denotes the velocity vector at time slot $t$, $\bar{\textbf{v}}_{v}$ is the asymptotic mean of velocity, and $\alpha$ ($0\leq\alpha\leq1$) denotes the memory level, which reflects the temporal-dependent degree. \textcolor{b}{Moreover, $\mathbf{w}_v$ represents the uncorrelated random Gaussian process, i.e., $\mathbf{w}_v \sim{f^{\text{Gua}}(0,\varsigma^2)}$, where $\varsigma$ denotes the asymptotic standard deviation of velocity.} We denote the horizontal coordinate of each vehicle $v$ as $\mathbf{q}_{v}(t)=[x_v(t),y_v(t)]^{\text{T}}$. Therefore, the location of each vehicle $v$ evolves as:
\begin{equation}
    \mathbf{q}_{v}(t+1)=\mathbf{q}_{v}(t)+\mathbf{v}_{v}(t)\tau, \ v\in\mathcal{N}\cup\mathcal{M}.
\end{equation}

\subsection{Communication Model}

\par To mitigate the unreliable communication caused by interference, we consider that each server $s$ utilizes different frequency band to provide computing services for TVs. Specifically, the task of a TV can be offloaded to an RSU via V2I communication links, and to an FV through V2V communication links. Moreover, we consider that each RSU can serve multiple TVs in each time slot due to the relatively powerful computing capability, while each FV can only serve one TV per time slot because of its limited resources.
Considering the complexity of the communications in vehicular networks, the channel gain is calculated by integrating the commonly used probabilistic LoS channel with the large-scale and small-scale fadings as
\begin{equation}
\label{eq_channelPowerGain}
	h_{n,s}^t={\mathbb{P}_{n,s}^\text{L}(t)} h_{n,s}^{t,\text{L}}(t) + (1-\mathbb{P}_{n,s}^\text{L}(t))h_{n,s}^{t,\text{N}}(t),
\end{equation}

\noindent where $\mathbb{P}_{n,s}^\text{L}(t)$ denotes the probability of LoS transmission between TV $n$ and edge server $s$, $h_{n,s}^{x}(t)$ represents the channel power gain between TV $n$ and edge server $s$, and $ x\in\{\mathrm{L},\mathrm{N}\}$ represents LoS or NLoS links. Moreover, the details of $\mathbb{P}_{n,s}(t)$ and $h_{n,s}^{x}(t)$ are presented as follows.

\subsubsection{LoS Probability}

\par For V2I communication, according to the 3GPP standard \cite{3GPPTR389012020}, the LoS and NLoS probabilities of the communication between TV $n$ and RSU $s$ (i.e., $s\in\mathcal{K}$) is given as:
\begin{equation}
\label{LoS_V2I}
    \begin{aligned}
    \mathbb{P}_{n,s}^\text{L}(t)=\left\{\begin{array}{cl}
1, \, d_{n,s}^h(t)\leq \, 18 \ \text{m} \\
\frac{18}{d_{n,s}^h(t)}+e^{\frac{-d_{n,s}^h(t)}{36}}\big(1-\frac{18}{d_{n,s}^h(t)}\big), \, d_{n,s}^h(t)>18 \ \text{m},
\end{array}\right.
    \end{aligned}
\end{equation}

\noindent where $d_{n,s}^h(t)$ represents the horizontal distance between TV $n$ and RSU $s$.

\par For V2V communication, according to \cite{ETSITR1032571}, the LoS probability between TV $n$ and FV $s$ ($s\in\mathcal{M}$) is given as
\begin{equation}
\label{LoS_V2V}
  \mathbb{P}_{n,s}^\text{L}(t)=\min\{1,1.05e^{-0.014d_{n,s}(t)}\}
\end{equation}

\noindent where $d_{n,s}(t)$ is the distance between TV $n$ and FV $s$.

\subsubsection {Channel Gain}

\par The channel gain between TV $n$ and server $s$ in time slot $t$ is uniformly given as~\cite{Sun2024TJCCT} $h_{n,s}^{x}(t)=|h_{n,s}^{\text{Sm},x}(t)|^2/(10^{-h_{n,s}^{\text{La},x}(t)/10})$, where $h_{n,s}^{\text{Sm},x}(t)$ and $h_{n,s}^{\text{La},x}(t)$ denote the parameters of \textit{small-scale fading} and \textit{large-scale fading}, respectively, which are given in detail as follows.

\par \textbf{Small-scale fading.} The small-scale fading between TV $n$ and server $s$ can be modeled as a parametric-scalable and good-fitting generalized fading, i.e., Nakagami-$m$ fading \cite{Sun2024TJCCT}, which is given as:
\begin{equation}
\label{eq_LoS_probability}
    \begin{aligned}
        &h_{n,s}^{\text{Sm},x}(t)\sim f^{\text{Nak}}\big(h_{n,s}^{\text{Sm},x}(t),\mathbf{m}_{y}^{x}\big) \\
        &= \frac{2{(\mathbf{m}_y^x)}^{\mathbf{m}_y^x} (h_{n,s}^{\text{Sm},x}(t))^{2\mathbf{m}_y^x-1} e^{(-\mathbf{m}_y^x (h_{n,s}^{\text{Sm},x}(t))^2/\overline{p})}}{\Gamma(\mathbf{m}_y^x) (\overline{p})^{\mathbf{m}_{y}^{x}}}, \ j\in \{b,\mathcal{U}\},
    \end{aligned}
\end{equation}

\noindent where $\overline{p}$ is the average received power, $\Gamma(\cdot)$ is the Gamma function, and $\mathbf{m}_{y}^{x}\in\{m_\mathrm{V2I}^\mathrm{L},m_\mathrm{V2I}^\mathrm{N},m_\mathrm{V2V}^\mathrm{L},m_\mathrm{V2V}^\mathrm{N}\}$ is the Nakagami-$m$ fading parameters of LoS/NLoS channel for V2I/V2V communication. 

\par \textbf{Large-scale fading.} First, the large-scale fading of LoS link for V2I communication between TV $n$ and RSU $k$ is given as~\cite{3GPPTR389012020}:
\begin{sequation}  
\label{LoS_V2I}  
    h_{n,k}^{\text{La},\text{L}}(t)=  
    \left\{  
    \begin{aligned}   
    &32.4 + 21\log_{10} (d_{n,k}(t)) + 20\log_{10}(f_c)+\vartheta^{\text{L}}, \\& 10\leq d_{n,k}^h(t)\leq d_{n,k}^{\prime}, \\   
    &32.4+40 \log_{10}(d_{n,k}(t))+20 \log_{10}(f_c)-9.5 \\&\times \log_{10}\big((d_{n,k}^{\prime})^2+(H_n-H_k)^2\big)+\vartheta^{\text{L}}, \\ &d_{n,k}^{\prime}<d_{n,k}^h(t) \leq 5 \ \text{km},
    \end{aligned}  
    \right.  
\end{sequation}  

\noindent where $f_c$ denotes the center radio frequency (in Hz), $d_{n,k}^h(t)$ represents the horizontal distance between TV $n$ and RSU $k$, and $\vartheta^{\text{L}}$ is the shadow fading. Moreover, $d_{n,k}^{\prime}=4H_kH_nf_c/c$ is the breakpoint distance, where $c=10\times 10^8$ m/s denotes the light speed, $H_k$ represents the effective antenna height at RSU $k$, and $H_n$ is the effective antenna height at TV $n$.

\par Second, the large-scale fading of NLoS link for V2I communication is given as~\cite{3GPPTR389012020}:
\begin{sequation}  
\label{NLoS_V2I}    
    \begin{aligned}
         &h_{n,k}^{\text{La},\text{N}}(t)= \max\big( h_{n,k}^{\text{La},\text{L}}(t), 35.3\log_{10}(d_{n,k}(t))+22.4\\&+21.3\log_{10}(f_c)-0.3(H_n-1.5)\big).
    \end{aligned}  
\end{sequation}  

\par Third, the large-scale fading of LoS link for V2V communication is given as~\cite{ETSITR1032571}:
\begin{equation}  
\label{LoS_V2V}  
\begin{aligned}
    h_{n,s}^{\text{La},\text{N}}(t)= 38.77+16.7\log_{10}(d_{n,k}(t))+18.2\log_{10}(f_c)
\end{aligned}
\end{equation}  

\par Finally, the large-scale fading of NLoS link for V2V communication is given as~\cite{ETSITR1032571}:
\begin{equation}  
\label{NLoS_V2V}  
\begin{aligned}
    h_{n,s}^{\text{La},\text{N}}(t)= 36.85 + 30\log10(d_{n,k}(t)) + 18.9\log_{10}(f_c).
\end{aligned}
\end{equation}  

\subsubsection{\textcolor{b}{Transmission Rate}}

\par {\color{b} For V2I and V2V communications, we adopt the orthogonal frequency-division multiple access (OFDMA) technique, which has been widely used in latency-sensitive and resource-constrained MEC systems.} Therefore, the transmission rate from TV $n$ to server $s$ is given as:
\begin{equation}
r_{n,s}(t)=B_{n,k}\log_2\big(1+p_{n}h_{n,k}(t)/N_0\big), \, \forall n\in \mathcal{N}, s\in\mathcal{K} \cup \mathcal{M},
\end{equation}

\noindent where $B_{n,s}$ denotes the communication bandwidth between TV $n$ and 
 edge server $s$, $p_{n}$ represents the transmit power of TV $n$, $N_0$ is the background noise, and $h_{n,s}(t)$ means the channel gain.

\begin{remark}
 Although OFDMA does not support spectrum reuse, its orthogonal subcarriers eliminate mutual interference, leading to more reliable and faster transmission. Moreover, the advanced multiple access schemes such as non-orthogonal multiple access (NOMA) requires dynamic user grouping and successive interference cancellation~\cite{Liu2024RIS}, which introduces high complexity for RSUs and FVs with limited processing capability. Therefore, OFDMA is more practical for the delay-sensitive and resource-constrained VFC scenario. 
\end{remark}

\subsection{Computation Model}

\par The tasks generated by each TV $n$ can be computed locally or offloaded to FVs and RSUs, which depends on the task offloading decision. Specifically, the task offloading decision of TV $n$ at time slot $t$ is defined as $o_{n,a}(t)\in\{0,1\}$, where $a\in n\cup \mathcal{M}_n(t) \cup \mathcal{K}$ denotes the offloading destinations of TV $n$ and $\mathcal{M}_n(t)$ is the set of FVs within the range of TV $n$. Moreover, $o_{n,n}(t)=1$ denotes that the task is processed locally, $o_{n,k}(t)=1$ means that the task is offloaded to RSU $k$, and $o_{n,m}(t)=1$ indicates that the task is offloaded to FV $m$. \textcolor{b}{Note that the delay of result feedback can be disregarded when considering the task completion delay. This is because for many intelligent applications, the size of the results is typically significantly smaller than that of the input data.}

\subsubsection{Task Completion Delay}

\par When TV $n$ processes task $\Psi_n(t)$ locally, the task completion delay is given as
\begin{equation}
\label{eq_local_delay}
    T_{n,n}(t)=C_n(t)/f_{n},
\end{equation}

\noindent where $f_n$ represents the computing resources of TV $n$.

\par When TV $n$ offloads task $\Psi_n(t)$ to RSU $k$, we consider two cases. In the first case, if TV $n$ is located in the coverage of the RSU, the task is transmitted directly to RSU $k$ and executed there. In the second case, the task is first transmitted to the nearest RSU $k^{\prime}$ from TV $n$, and then forwarded to RSU $k$ via fiber links. Therefore, the task completion delay primarily consists of task upload delay, task relay delay, and task computation delay, which is given as:
\begin{equation}
\label{eq_off_delay_V2I}
 T_{n,k}(t)=\underbrace{D_n^{\text{in}}(t)/r_{n,k^{\prime}}(t)}_{\text{Upload delay}}+\underbrace{h_{k^{\prime},k}C_n(t)/r_f}_{\text{Relay delay}}+\underbrace{C_n(t)/f_{n,k}(t)}_{\text{Computation delay}},
\end{equation}

\noindent where $k^{\prime}$ represents the RSU within whose coverage TV $n$ is located, $h_{k,k^{\prime}}$ represents the number of hops between RSU $k^{\prime}$ and RSU $k$, $r_f$ denotes the data rate of fiber link, and $f_{n,k}(t)$ is the computing resources allocated by  RSU $k$ to task $\Psi_n(t)$ in time slot $t$. Note that $h_{k,k^{\prime}}=0$ if $k= k^{\prime}$.

\par When TV $n$ offloads task $\Psi_n(t)$ to FV $m$ ($m\in\mathcal{M}_n(t)$), the task completion delay mainly includes task upload delay and task computation delay~\cite{Sun2024JointTask}, which can be expressed as: 
\begin{equation}
\label{eq_off_delay_V2V}
 T_{n,m}(t)=\underbrace{D_n^{\text{in}}(t)/r_{n,m}(t)}_{\text{Upload delay}}+\underbrace{C_n(t)/f_{n,m}(t)}_{\text{Computation delay}},
\end{equation}

\noindent where $f_{n,m}(t)$ represents the computing resources allocated by the FV $m$ to task $\Psi_n(t)$ in time slot $t$.

\par Therefore, based on \eqref{eq_local_delay}, \eqref{eq_off_delay_V2I}, and \eqref{eq_off_delay_V2V}, the task completion delay of TV $n$ is given as:
\begin{equation}
\label{eq_total_delay}
 T_{n}(t)=o_{n,n}(t)T_{n,n}(t)+\sum_{s\in\mathcal{K}\cup\mathcal{M}_n(t)}o_{n,s}(t)T_{n,s}(t),
\end{equation}

\noindent where $s$ represents the set of potential servers (i.e., RSUs and FVs) that can provide computing service for TV $n$.

\subsubsection{Energy Consumption}

\par The energy consumption of TV $n$ includes the computation energy and transmission energy, which is given as
\begin{equation}
E_{n}(t)=\underbrace{o_{n,n}(t)\kappa^{\text{TV}} C_n(t)f_{n}^2}_{\text{Computation energy}}+\underbrace{o_{n,s}(t)p_{n,s}D_n^{\text{in}}(t)/r_{n,s}(t)}_{\text{Transmission energy}},
\end{equation}

\noindent where $s\in\mathcal{K}\cup\mathcal{M}_n(t)$, and $\kappa^{\text{TV}} \geq0$ is the effective switched capacitance for the CPU of the TV.

\par Similarly, the energy consumption of server $s$ is mainly incurred by the task computation, which can be given as:
\begin{equation}
E_{n,s}(t)=\kappa_s C_n(t)f_{n,s}^2(t), \ \kappa_s\in\{\kappa^{\text{FV}},\kappa^{\text{RSU}}\}
\end{equation}

\noindent where $\kappa_s$ is the effective switching capacitance of the CPU for server $s$. Specifically, $\kappa_s = \kappa^{\text{FV}}$ when $s\in\mathcal{M}$, and $\kappa_s = \kappa^{\text{RSU}}$ when $s\in\mathcal{K}$.

\begin{figure*}[h]
    \centering
    \setlength{\abovecaptionskip}{0pt}%
    \setlength{\belowcaptionskip}{2pt}%
    \includegraphics[width =7in]{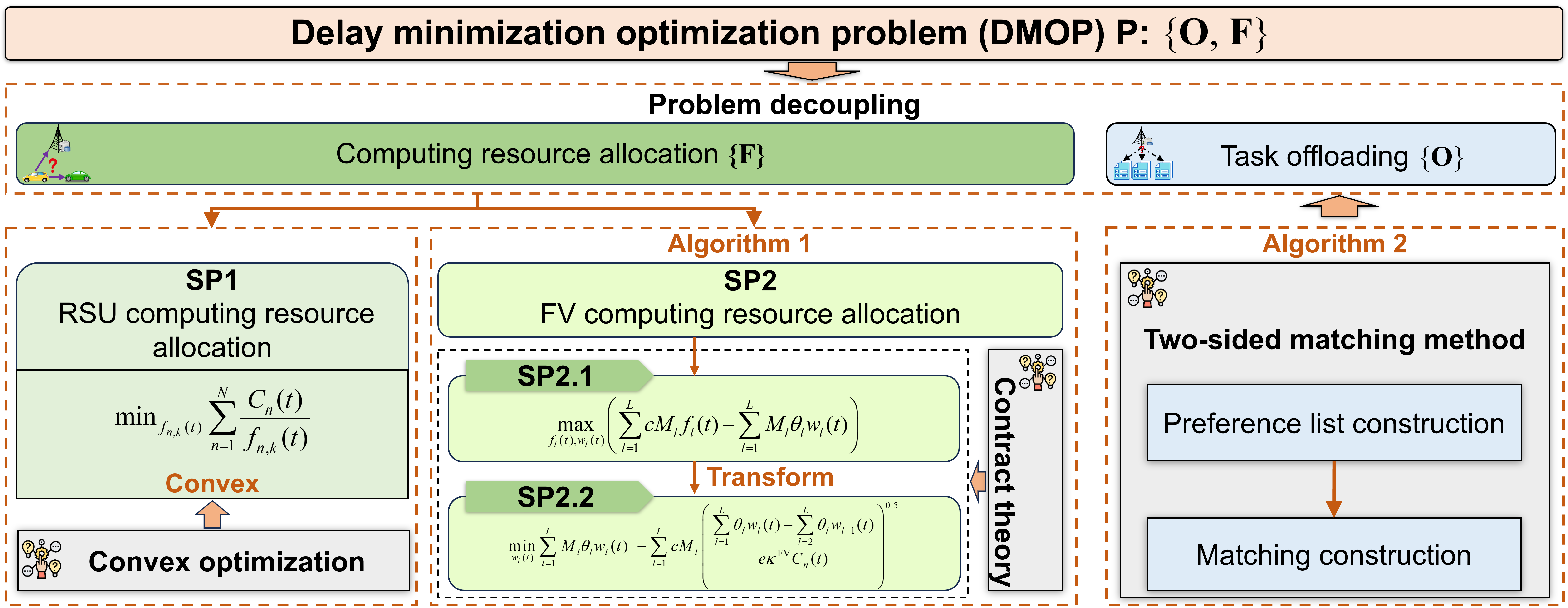}
    \caption{\textcolor{b}{The framework of JCRATOA. The original problem DMOP is first decomposed into a computing resource allocation subproblem and a task offloading subproblem. First, the computing resource allocation subproblem is further decomposed into subproblems of RSU computing resource allocation and FV computing resource allocation, which are solved by convex optimization and contract theory, respectively. Subsequently, the task offloading subproblem is solved through a two-sided matching method.}}
\label{fig_framwork}
 \vspace{-1.6em}
\end{figure*}

\section{Problem Formulation and Analysis}
\label{sec_Problem_Formulation}

\subsection{Problem Formulation}

\par \textcolor{b1}{In delay-sensitive VFC environments, ensuring the timely execution of computation-intensive and latency-critical tasks is imperative, particularly for safety-critical applications such as autonomous driving. Moreover, the limited computing capability of RSUs and FVs, coupled with frequent topology changes caused by vehicle mobility, further intensifies this delay sensitivity. In contrast, RSUs and vehicles generally possess sufficient and stable power supplies such as large batteries or direct power connections, making energy consumption less critical in the short term. Accordingly, delay is often the dominant performance metric in practical VFC systems, while energy consumption should remain within operational bounds. Therefore, rather than jointly minimizing delay and energy through a weighted-sum objective, we adopt a constraint-based formulation that minimizes delay while imposing explicit energy constraints on TVs, RSUs, and FVs. This formulation aligns with the characteristics of real-time VFC systems, providing strict guarantees on energy usage and stable delay performance without requiring complex parameter tuning \cite{Zhang2023Efficient, Huiyi2021Resource}.}

\par Consequently, the objective of this work is to minimize the task completion delay of TVs by jointly optimizing the decisions of computing resource allocation $\mathbf{F}=\{f_{n,s}(t)\}_{n\in\mathcal{N},s\in\mathcal{M}\cup\mathcal{K},t\in\mathcal{T}}$ and task offloading $\mathbf{O}=\{o_{n,a}(t)\}_{n\in\mathcal{N},a\in n\cup \mathcal{M} \cup \mathcal{K}, t\in\mathcal{T}}$ under the energy constraints. Consequently, the DMOP is formulated as
\begin{subequations}
    \begin{alignat}{2}
    \mathbf{P}: \quad &\min_{\mathbf{O},\mathbf{F}}  \sum_{n=1}^N T_n(t),\label{P}\\
    \text{s.t.} \quad 
    &o_{n,n}(t) \in \{0,1\},\forall n \in \mathcal{N},\label{P_C1}\\
    &o_{n,s}(t) \in \{0,1\},\forall n \in \mathcal{N}, \, s\in \mathcal{M}_n(t) \cup \mathcal{K}, \label{P_C11}\\
    &0 \leq o_{n,n}(t)+o_{n,m}(t)+o_{n,k}(t) \leq 1, \notag\\ 
    &\forall n \in\mathcal{N}, \, k\in \mathcal{K}, \, m\in \mathcal{M}_n(t),\label{P_C2}\\
    &T_n(t) \leq t_n^{\max}(t),\forall n \in \mathcal{N},\label{P_C3}\\
    &E_{n}(t) \leq E_n^{\text{max}},\forall n\in\mathcal{N},\label{P_C4}\\
    &\sum_{n=1}^{N}o_{n,m}(t)E_{n,m}(t) \leq E_m^{\text{max}},\forall m\in\mathcal{M},\label{P_C5}\\
    &\sum_{n=1}^{N}o_{n,k}(t)E_{n,k}(t) \leq E_k^{\text{max}},\forall k\in\mathcal{K},\label{P_C6}\\
    &\sum_{n=1}^{N}o_{n,m}(t)f_{n,m}(t) \leq f_{m}^{\text{max}},\forall  m \in\mathcal{M},\label{P_C8}\\
    &\sum_{n=1}^{N}o_{n,k}(t)f_{n,k}(t) \leq f_{k}^{\text{max}}, k \in\mathcal{K},\label{P_C9}
    \end{alignat}
\end{subequations}

\noindent where $E_n^{\text{max}}$, $E_k^{\text{max}}$, and $E_m^{\text{max}}$ represent the energy constraints of TV $n$, RSU $k$, and FV $m$, respectively. Furthermore, $f_{k}^{\text{max}}$ and $f_{m}^{\text{max}}$ denote the maximum computing resources of RSU $k$ and FV $m$. \textcolor{b}{Moreover, constraints \eqref{P_C1}, \eqref{P_C11}, and \eqref{P_C2} indicate that each TV can only select one type of task offloading decision. In other words, each TV can process its task locally, offload it to an RSU, or offload it to an FV. Furthermore, constraint (\ref{P_C3}) enforces that the task completion delay should not exceed the maximum allowable delay. Additionally, constraints (\ref{P_C4}), (\ref{P_C5}), and (\ref{P_C6}) indicate that the energy consumption of TV $n$, FV $m$, and RSU $k$ should remain within their respective energy budgets. Besides, constraints (\ref{P_C8}) and (\ref{P_C9}) guarantee that the computing resource allocation of FV $m$ and RSU $k$ do not surpass the maximum allowable resource limits.}

{\color{b}
\begin{theorem}
\label{theorem_NP_hard}
The problem formulated in DMOP is an NP-hard and non-convex MINLP.
\end{theorem}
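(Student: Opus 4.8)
\noindent\emph{Proof strategy.} The plan is to verify the three assertions one at a time, since each rests on a different structural feature of $\mathbf{P}$. That $\mathbf{P}$ is a \emph{mixed-integer nonlinear program} is immediate by inspection: the offloading variables $\mathbf{O}$ are binary by \eqref{P_C1}--\eqref{P_C11}, the resource variables $\mathbf{F}$ are continuous and nonnegative, and nonlinearity enters through the reciprocal computation-delay terms $C_n(t)/f_{n,s}(t)$ in \eqref{eq_local_delay}, \eqref{eq_off_delay_V2I}, \eqref{eq_off_delay_V2V}, through the quadratic energy terms $f_{n,s}^2(t)$, and through the products of decision variables $o_{n,s}(t)T_{n,s}(t)$ in the objective via \eqref{eq_total_delay} and $o_{n,s}(t)E_{n,s}(t)$ in \eqref{P_C5}--\eqref{P_C6}. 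Hence $\mathbf{P}$ is an MINLP, and it remains to argue non-convexity and NP-hardness.

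For non-convexity I would give two complementary observations. First, the integrality of $\mathbf{O}$ in \eqref{P_C1}--\eqref{P_C11} alone makes the feasible set non-convex, since a convex combination of two admissible offloading patterns is in general not integral. Second---so that the claim is not merely about discreteness---I would note that even the continuous relaxation is non-convex: the objective contains the bilinear terms $o_{n,s}(t)T_{n,s}(t)$ and the coupling constraints \eqref{P_C5}--\eqref{P_C6} contain the trilinear terms $o_{n,s}(t)\kappa_s C_n(t)f_{n,s}^2(t)$; already a bilinear map $(x,y)\mapsto xy$ has a Hessian with eigenvalues $\pm 1$, hence indefinite, so neither the objective nor these constraint functions are convex. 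Therefore $\mathbf{P}$ is a non-convex MINLP.

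For NP-hardness I would exhibit a polynomial-time reduction from the (strongly NP-complete) \textsc{Bin Packing} problem---equivalently, from the feasibility version of the Generalized Assignment Problem, whose packing structure mirrors constraints \eqref{P_C6} and \eqref{P_C9}. Given item sizes $a_1,\dots,a_N$ and $K$ bins of common capacity $B$, I would build a single-slot DMOP instance with $K$ RSUs, $N$ TVs and no FVs, in which local processing is rendered infeasible by taking $f_n$ small enough that $C_n(t)/f_n>t_n^{\max}(t)$, so (reading \eqref{P_C2} as the equality $o_{n,n}(t)+\sum_{k}o_{n,k}(t)=1$ intended by the model) every task must be sent to exactly one RSU. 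The remaining parameters---channel quality, $C_n(t)$, $t_n^{\max}(t)$, $E_n^{\max}$, $E_k^{\max}$---are chosen so that the transmission, relay and TV-energy constraints are slack while the deadline \eqref{P_C3} forces $f_{n,k}(t)\ge a_n$ for any task $n$ routed to RSU $k$, and $f_k^{\max}=B$. Then a DMOP point is feasible exactly when the induced assignment satisfies $\sum_{n:\,o_{n,k}(t)=1}a_n\le B$ for every $k$ (the allocation $f_{n,k}(t)=a_n$ witnessing feasibility), i.e.\ exactly when the items pack into the bins. Since \textsc{Bin Packing} is NP-complete and the construction is polynomial, deciding feasibility---and hence solving---$\mathbf{P}$ is NP-hard; combined with the two previous paragraphs this establishes that $\mathbf{P}$ is a non-convex, NP-hard MINLP.

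The main obstacle, and where the careful work lies, is the interaction between the continuous block $\mathbf{F}$ and the reduction: because each $T_{n,s}(t)$ is strictly decreasing in $f_{n,s}(t)$ while the corresponding energy is strictly increasing in it, an optimizer will not leave the allocations at prescribed values, so I must pin down the gadget parameters---essentially making the resource-capacity constraint \eqref{P_C9} the unique binding one for offloaded tasks and the deadline \eqref{P_C3} the unique binding one per task---so that the feasibility question collapses cleanly onto \textsc{Bin Packing}, and separately verify that \eqref{P_C4}, \eqref{P_C5} and \eqref{P_C6} are inactive in the constructed instance. The combinatorial core of the argument is standard; the delicacy is entirely in this equivalence bookkeeping and in fixing the minor modeling ambiguity of the ``process nothing'' option in \eqref{P_C2}.
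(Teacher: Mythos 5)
Your proposal is correct and follows essentially the same route as the paper's Appendix~A argument: identify the mixed binary/continuous structure and the nonlinear (reciprocal, quadratic, bilinear) terms to conclude the problem is a non-convex MINLP, then establish NP-hardness by restricting to a single-slot instance in which the offloading decision reduces to a classical NP-hard packing/assignment problem (your Bin Packing gadget versus the paper's knapsack/generalized-assignment-style restriction is an immaterial difference). Your explicit handling of the opt-out ambiguity in constraint \eqref{P_C2} and of pinning the continuous allocations so that only \eqref{P_C3} and \eqref{P_C9} bind is sound and, if anything, more careful bookkeeping than the paper records.
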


\begin{proof} 
The proof is presented in Appendix A of the supplemental material.
\end{proof}
}

\subsection{Problem Analysis}

\par Solving DMOP directly could introduce several challenges as follows:
    \begin{itemize}
    \item \textit{MINLP problem and coupled decision variables.} \textcolor{b}{First, as presented by Theorem \ref{theorem_NP_hard}, the formulated DMOP is an NP-hard and non-convex MINLP, which is computationally intractable to solve in polynomial time.} Additionally, the decision variables of different nodes are mutual-coupled and interdependent with each other, which makes it challenging to solve the formulated DMOP directly.
    
    \item \textit{Asymmetric information.} The MBS requires detailed information about the TVs and FVs, such as the position and available resources, to make accurate decisions. However, in an open and dynamic VFC system, the privacy-aware and self-interested FVs are often reluctant to disclose the private information and voluntarily share the idle resources, thereby leading to the information asymmetry between the MBS and FVs. Consequently, the formulated DMOP becomes an optimization problem under incomplete information. This lack of complete information further increases the complexity of the problem-solving process and reduces the accuracy of the obtained solutions, particularly in achieving  efficient FV resource allocation due to the privacy concerns of the FVs.

    \item \textit{Heterogeneous preferences.} In the considered VFC system, different TVs have varying requirements for various tasks, while different RSUs and FVs possess diverse computing resources. Therefore, the TVs exhibit heterogeneous preferences on different servers for task offloading, and the servers also possess different preferences for TVs. This introduces challenges in efficiently associating each TV with an appropriate server.
    \end{itemize}

\vspace{-0.4em}

\section{The Proposed JCRATOA}
\label{The_Proposed_Joint_Optimization_Approach}

\par Based on the aforementioned challenges, achieving an optimal solution is computationally infeasible in real-time VFC scenarios due to the NP-hardness of the formulated DMOP. Therefore, we propose JCRATOA to ensure computational efficiency and practical feasibility in addressing the formulated DMOP. In this section, we first present the motivations for proposing JCRATOA. Then, we introduce JCRATOA in detail, which consists of the components of computing resource allocation and task offloading. Specifically, for computing resource allocation, the formulated DMOP is divided into subproblems of RSU computing resource allocation and FV computing resource allocation, which are solved by using the convex optimization method and an incentive mechanism, respectively. For task offloading, we present a two-sided matching method by employing the matching game. \textcolor{b}{Note that decomposing DMOP into subproblems preserves the optimality of the solution. This is because the decision variables are considered together throughout the decoupling process. Moreover, the simulation results also demonstrate the superiority of the proposed JCRATOA in terms of task processing performance under the energy constraints.} The framework of the proposed JCRATOA is given in Fig. \ref{fig_framwork}.

\subsection{Motivations}
 \par The motivations for proposing JCRATOA are presented as follows.
 
    \begin{itemize}
        \item \textit{Decoupling the interdependent decision variables.} Despite the effectiveness of DRL in decision-making, the coupled decision variables in the DMOP create complex action spaces, which leads to extensive training time and numerous interactions with the environment. Therefore, the coupling of the decision variables motivates us to decouple the DMOP into manageable subproblems. Specifically, the computing resource allocation at servers and task offloading at TVs can be naturally decoupled, as they are performed by different entities. This separation not only simplifies the decision making process, but also allows each type of node perform its respective action, which makes the solution scalable.
        
        \item \textit{Mitigating the asymmetric information.} We employ the contract theory to deal with the asymmetric information in the VFC system~\cite{Li2022Joint}. Specifically, the contract theory provides an effective framework to mitigate information asymmetry by stimulating agents to reveal private information truthfully. Moreover, the contract theory enables optimal or near-optimal resource allocation by offering rewards that motivate resource sharing. Besides, the contract theory is more scalable, as it avoids multiple rounds of communication by using the predefined contracts, which makes it suitable for dynamic and large-scale environments. 
        
        \item \textit{Handling the heterogeneous preferences.} The matching game is adopted to deal with heterogeneous preferences between TVs and servers. First, the matching game can establish mutual-beneficial matching between TVs and servers with heterogeneous preferences. This ensures that the tasks of TVs can be offloaded to satisfactory destinations. Furthermore, the matching game guarantees stable and balanced outcome, which enhances adaptivity to the dynamic VFC and prevents task overloading at certain servers. Additionally, while the matching game provides a near-optimal solution, the solution obtained using the Gale Shapley algorithm has the complexity of $\mathcal{O}(N)$, thus making it suitable for the dynamic and real-time VFC scenarios.
    \end{itemize}

\subsection{Computing Resource Allocation}
\par Considering that the RSUs and FVs are two types of servers, which are characterized by different mobility patterns and computing capabilities, we decompose the formulated DMOP into an RSU computing resource allocation subproblem and an FV computing resource allocation subproblem. Specifically, the convex optimization method is adopted to solve the subproblem of RSU computing resource allocation. Moreover, a contract theory-based incentive mechanism is proposed to motivate FVs to cooperate in resource sharing.

\subsubsection{Computing Resource Allocation of RSUs}
\label{RSU_allocation}

\par Given the task offloading decision $\mathbf{\hat{O}}=\{\hat{o}_{n,a}(t)\}_{n\in\mathcal{N},a\in n\cup \mathcal{M}_n(t) \cup \mathcal{K}, t\in\mathcal{T}}$
 and removing the irrelevant terms, problem $\mathbf{P}$ is transformed into the subproblem of RSU computing resource allocation, which is as follows:
\begin{subequations}
    \begin{alignat}{2} 
    \mathbf{SP1}: \quad &\min_{f_{n,k}(t)}\sum_{n=1}^{N}C_n(t)/f_{n,k}(t)\label{SP_RSU}\\
    \text{s.t.} \quad 
    &\eqref{P_C3}, \eqref{P_C6}, \eqref{P_C9} \notag.
    \end{alignat}
\end{subequations}

\par Problem $\mathbf{SP1}$ is a convex optimization problem, as given in Theorem \ref{the_sp1_convex}. Accordingly, problem $\mathbf{SP1}$ can be solved in polynomial time by using the Matlab fmincon tools. 

\begin{theorem}
\label{the_sp1_convex}
Problem $\mathbf{SP1}$ is a convex optimization problem.
\end{theorem}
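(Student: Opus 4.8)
The plan is to verify convexity of $\mathbf{SP1}$ by examining both the objective function and the feasible region defined by constraints \eqref{P_C3}, \eqref{P_C6}, and \eqref{P_C9} under the fixed offloading decision $\mathbf{\hat{O}}$. First I would observe that with $\mathbf{\hat{O}}$ fixed, the decision variables reduce to the vector $\{f_{n,k}(t)\}$ over the TVs assigned to RSU $k$, and each such variable is strictly positive (a nonzero amount of resource must be allocated to any task actually offloaded to $k$), so the natural domain is $f_{n,k}(t) > 0$.

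Next I would treat the objective \eqref{SP_RSU}. The objective is a finite sum of terms of the form $C_n(t)/f_{n,k}(t)$ with $C_n(t) > 0$ constant. Each term is a positive constant times $1/f$, whose second derivative is $2C_n(t)/f^3 > 0$ on the positive orthant, hence convex; since the terms involve separate variables, the Hessian of the sum is diagonal with positive entries, so the objective is (strictly) convex. A nonnegative sum of convex functions is convex.

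Then I would check the constraints one at a time. Constraint \eqref{P_C3}, $T_n(t) \le t_n^{\max}(t)$, reduces under the fixed offloading decision to an upper bound on a term $D_n^{\text{in}}(t)/r_{n,k'}(t) + h_{k',k}C_n(t)/r_f + C_n(t)/f_{n,k}(t) \le t_n^{\max}(t)$; the first two summands are constants, so this is again of the form $\text{const} + C_n(t)/f_{n,k}(t) \le \text{const}$, i.e. a convex function of $f_{n,k}(t)$ bounded above, giving a convex sublevel set. Constraint \eqref{P_C9}, $\sum_n \hat{o}_{n,k}(t) f_{n,k}(t) \le f_k^{\max}$, is linear in the decision variables, hence defines a half-space. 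Constraint \eqref{P_C6}, $\sum_n \hat{o}_{n,k}(t) E_{n,k}(t) \le E_k^{\max}$ with $E_{n,k}(t) = \kappa_k C_n(t) f_{n,k}^2(t)$, is a nonnegative weighted sum of squares of the variables, i.e. a convex quadratic, bounded above — again a convex sublevel set. Intersecting these convex sets with the convex domain $f_{n,k}(t) > 0$ yields a convex feasible region, so $\mathbf{SP1}$ minimizes a convex objective over a convex set and is a convex program.

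The only mild subtlety — and the step I would be most careful about — is the reduction of the delay constraint \eqref{P_C3} and the energy/resource constraints to the claimed forms: one must confirm that fixing $\mathbf{\hat{O}}$ genuinely removes the binary variables (so that $\hat{o}_{n,k}(t)$ appear only as known $0/1$ weights, not as optimization variables) and that the quantities $r_{n,k'}(t)$, $h_{k',k}$, $C_n(t)$, $\kappa_k$ are all constants independent of $f_{n,k}(t)$. Once that bookkeeping is in place, the convexity conclusions are immediate from the elementary facts that $1/f$ and $f^2$ are convex on $f>0$ and that affine maps and nonnegative combinations preserve convexity; no deeper argument is needed.
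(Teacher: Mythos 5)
Your proposal is correct and follows essentially the same route as the paper's proof: with $\mathbf{\hat{O}}$ fixed, the objective is a separable sum of terms $C_n(t)/f_{n,k}(t)$ whose (diagonal) Hessian is positive on $f_{n,k}(t)>0$, and constraints \eqref{P_C3}, \eqref{P_C6}, \eqref{P_C9} are respectively a sublevel set of a convex reciprocal term, a convex quadratic sublevel set, and a half-space, so the feasible region is convex. Your bookkeeping remark that the fixed binaries $\hat{o}_{n,k}(t)$ and the rates enter only as constants is exactly the point that makes the reduction valid, so no gap remains.
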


\begin{proof}
The proof is presented in Appendix B of the supplemental material.
\end{proof}

\subsubsection{Computing Resource Allocation of FVs}\label{FV_allocation}

\par Considering that the FVs may be unwilling to disclose private information to the MBS, we present a contract theory-based incentive mechanism to motivate FVs to share the idle resources. Specifically, the MBS offers serial contracts to different FVs, which specifies the computing resources to be shared and the corresponding utility that the FVs will receive. The FVs then decide whether to accept or decline the contract based on the obtained utility. This incentive mechanism is designed to stimulate FVs to truthfully reveal the available computing resources and collaborate in resource sharing.

\par \textbf{\textit{1) Utility Functions of FVs and MBS.}} We present utility functions of FVs and MBS to model the interaction between FVs and the MBS under incomplete information. 

\par \textbf{FV Type.} Considering that different FVs have varying computing resources, the concept of FV type is first introduced to quantify their resource sharing willingness. Intuitively, higher-type FVs are more inclined to contribute resources than the lower-type FVs. We define the type of the $l$th FV as follows:
\begin{equation}
    \theta_l = \sigma_l f_l^{\text{max}},
\end{equation}

\noindent where $\sigma_l$ indicates the strength of the willingness to contribute resources and $f_l^{\text{max}}$ denotes the maximum computational resource that an FV can contribute. Specifically, the MBS classifies the FVs into $L$ types, denoted as $\Theta=\{\theta_1,\theta_2,\ldots ,\theta_L\}$, which are sorted in ascending order such that $\theta_1<\theta_2<\cdots <\theta_L$. Then, the MBS sorts the FVs according to their willingness of resource sharing. Additionally, a contract item $(f_l(t),w_l(t))$ is designed for each type of FV, where $f_l(t)$ is the computing resources allocated by FV with type $\theta_l$, and $w_l(t)$ denotes the corresponding rewards that the FVs receive by allocating the resource.

\par \textbf{Utility Function of FVs.} The utility function of type $\theta_l$ FV that accepts the contract item $(f_l(t),w_l(t))$ can be calculated as the difference between the reward and costs, which is as follows:
\begin{equation}\label{utility_type_l}
\begin{aligned}
    U_l^{\text{FV}}(f_l(t),w_l(t))&=\theta_lw_l(t)-eE_l(t)\\
    &=\theta_lw_l(t)-e\kappa^{\text{FV}} C_n(t)f_l^2(t),
\end{aligned}
\end{equation}

\noindent where $e$ represents the unit cost of energy consumption, and $E_l(t)$ denotes the energy consumed by the FV for task computing. 

\par \textbf{Utility Function of the MBS.} The MBS does not know the exact types of the FVs due to the information asymmetry. Instead, the MBS knows the probability of the types derived from historical observations. We suppose that there are $L$ types of FVs known to the MBS, and each FV is independently classified as type $\theta_l$ with the same probability $\lambda_l$. According to the types of FVs, utility of the MBS is calculated as the payment received from the TVs minus the cost incurred in acquiring computing resources from the FVs~\cite{Diamanti2022Trading}, which is as follows:
\begin{equation}
    U^{\text{MBS}}(f_l(t),w_l(t),c)=\sum_{l=1}^{L}cM_lf_{l}(t)-\sum_{l=1}^{L}M_l\theta_lw_l(t),
\end{equation}

\noindent where $c$ represents the unit cost of the computing resources and $M_l=\lambda_lM$ denotes the total number of type $\theta_l$ FVs.

\textbf{\textit{2) Subproblem Formulation.}} We formulate the subproblem of FV resource computing. First, we present the feasibility conditions based on the utility functions. Specifically, the feasible contract $(f_l(t),w_l(t))$ should satisfy the individual rationality (IR) and incentive compatible (IC) conditions, which are given in Definitions \ref{Definition_IR} and \ref{Definition_IC}.

\begin{definition}
\label{Definition_IR}
The IR constraint indicates that a non-negative utility should be assigned to an FV if it agrees to the contract term. Therefore, the IR condition is formally expressed as: 
\begin{equation}
\label{IR}
    \theta_lw_l(t)-e\kappa^{\text{FV}} C_n(t)f_l(t)^2 \geq 0,\forall l \in\{1,2,\ldots,L\}.
\end{equation}
\end{definition}

\begin{definition}
\label{Definition_IC}
The IC constraint guarantees that an FV of type $\theta_l$ prefers the contract item $(f_{l}(t),w_l(t))$ over any other contract item $(f_{j}(t),w_j(t)),\forall j\in \{1,2,\ldots, L\}, j\neq l$. Therefore, the IC condition is given as:
\begin{equation}\label{IC}
    \theta_lw_l(t)-e\kappa^{\text{FV}} C_n(t)f_l^2(t)\geq \theta_lw_j(t)-e\kappa^{\text{FV}} C_n(t)f_j^2(t).
\end{equation}
\end{definition}

\par According to the contract theory, the subproblem of FV computing resource allocation can be reformulated by maximizing the utility of the MBS while considering the IR and IC constraints, which is as follows:
\begin{subequations}
    \begin{alignat}{2} 
    \mathbf{SP2}: \, &\max_{f_l(t),w_l(t)}\big(\sum_{l=1}^{L}cM_lf_{l}(t)-\sum_{l=1}^{L}M_l\theta_lw_l(t)\big)\label{sp2_a}\\
    \text{s.t.} \,
    &0\leq f_l(t)\leq f_l^{\text{max}},\forall l\in\{1,2,\ldots,L\},\label{p4_c_f}\\
    &(\ref{IR}),(\ref{IC}) \notag.
    \end{alignat}
    \label{P4}
\end{subequations}
\vspace{-0.8em}
\par The optimization problem $\mathbf{SP2}$ is difficult to be solved due to the complex constraints incurred by $L$ IR conditions and $L(L-1)$ IC conditions. Therefore, we will analyze the contract properties and reduce the constraints as follows. 

\textbf{\textit{(3) Properties of Feasible Contracts.}} Based on the definitions of the IR and IC constraints, several necessary conditions of feasible contracts can be derived.

\par\textbf{First}, we can derive from the IC constraint that a higher type of FV receives a higher reward. Conversely, a higher reward received by an FV indicates a higher type of the FV. Moreover, if two FVs have the same types, they receive the same rewards. This can be concluded in \textcolor{b}{Lemma \ref{lemma_1}.}

\begin{lemma}
\label{lemma_1}	
For any feasible contract item $(f_l(t), w_l(t))$, it holds that $w_i(t) > w_j(t)$ if and only if $\theta_i > \theta_j$, and $w_i(t) = w_j(t)$ if and only if $\theta_i = \theta_j$, $\forall i, j \in \{1, 2, \ldots, L\}$.
\end{lemma}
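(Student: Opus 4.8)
\textbf{Proof proposal for Lemma \ref{lemma_1}.}

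The plan is to work directly from the IC conditions in Definition \ref{Definition_IC}, combining the inequality in both directions for a pair of types to obtain a monotone relation between the rewards and the types. First I would fix two indices $i, j \in \{1, 2, \ldots, L\}$ and write the IC constraint for a type-$\theta_i$ FV preferring its own item over item $j$, namely
\begin{equation}
\theta_i w_i(t) - e\kappa^{\text{FV}} C_n(t) f_i^2(t) \geq \theta_i w_j(t) - e\kappa^{\text{FV}} C_n(t) f_j^2(t),
\end{equation}
and symmetrically the IC constraint for a type-$\theta_j$ FV preferring item $j$ over item $i$. Adding these two inequalities cancels the energy-cost terms $e\kappa^{\text{FV}} C_n(t)(f_i^2(t)+f_j^2(t))$ on both sides, leaving $\theta_i w_i(t) + \theta_j w_j(t) \geq \theta_i w_j(t) + \theta_j w_i(t)$, which rearranges to $(\theta_i - \theta_j)(w_i(t) - w_j(t)) \geq 0$. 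This single inequality already forces $w_i(t)$ and $w_j(t)$ to be ordered the same way as $\theta_i$ and $\theta_j$.

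Next I would split into cases to extract the three claims. If $\theta_i > \theta_j$, then $\theta_i - \theta_j > 0$, so $(\theta_i - \theta_j)(w_i(t) - w_j(t)) \geq 0$ gives $w_i(t) \geq w_j(t)$; to upgrade this to the strict inequality $w_i(t) > w_j(t)$, I would argue that if $w_i(t) = w_j(t)$ then the type-$\theta_i$ IC constraint would reduce to $-e\kappa^{\text{FV}} C_n(t) f_i^2(t) \geq -e\kappa^{\text{FV}} C_n(t) f_j^2(t)$, i.e. $f_i(t) \leq f_j(t)$, while the type-$\theta_j$ IC constraint would give $f_j(t) \leq f_i(t)$, so $f_i(t) = f_j(t)$ and the two contract items coincide — which can be ruled out (distinct types are given distinct items, or else one can collapse them without loss, contradicting $\theta_i > \theta_j$ being a genuinely separate type). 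Conversely, if $w_i(t) > w_j(t)$, then $(\theta_i - \theta_j)(w_i(t) - w_j(t)) \geq 0$ with the second factor positive forces $\theta_i - \theta_j \geq 0$; equality $\theta_i = \theta_j$ is impossible because within the sorted list $\theta_1 < \cdots < \theta_L$ equal types are not indexed separately, so $\theta_i > \theta_j$. Finally, for the equality case, $\theta_i = \theta_j$ makes $(\theta_i - \theta_j)(w_i(t) - w_j(t)) = 0$ automatically, and the two IC constraints become $\theta_i w_i(t) - e\kappa^{\text{FV}}C_n(t) f_i^2(t) \geq \theta_i w_j(t) - e\kappa^{\text{FV}}C_n(t) f_j^2(t)$ and its reverse, whose conjunction forces $\theta_i w_i(t) - e\kappa^{\text{FV}}C_n(t)f_i^2(t) = \theta_i w_j(t) - e\kappa^{\text{FV}}C_n(t)f_j^2(t)$; combined with the contrapositives of the two strict statements just proved ($w_i(t)\neq w_j(t)$ would force $\theta_i\neq\theta_j$), this yields $w_i(t) = w_j(t)$, and the reverse direction $w_i(t)=w_j(t) \Rightarrow \theta_i=\theta_j$ is the contrapositive of the first biconditional.

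The main obstacle I anticipate is the promotion from the weak inequality $w_i(t) \geq w_j(t)$ to the strict $w_i(t) > w_j(t)$ when $\theta_i > \theta_j$: the "added" IC inequality alone only gives the non-strict version, so the argument must lean on the individual IC constraints and on the modeling convention that distinct types carry distinct contract items (equivalently, that $f_i(t) = f_j(t)$ together with $w_i(t) = w_j(t)$ would make type $l=i$ and $l=j$ indistinguishable). I would state this convention explicitly at the start of the proof so the strict-inequality step is clean. Everything else is a short algebraic manipulation of Definition \ref{Definition_IC}.
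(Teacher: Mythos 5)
Your proof is correct and takes essentially the same route as the paper's: pair the two IC constraints for types $\theta_i$ and $\theta_j$, add them so the cost terms $e\kappa^{\text{FV}}C_n(t)f^2(t)$ cancel, obtain $(\theta_i-\theta_j)\bigl(w_i(t)-w_j(t)\bigr)\geq 0$, and then settle the strict and equality cases using the strict ordering $\theta_1<\cdots<\theta_L$. Your explicit acknowledgement that upgrading $w_i(t)\geq w_j(t)$ to a strict inequality needs the modeling convention that distinct types are assigned distinct contract items (since IC alone admits pooling contracts) is the right way to handle the only delicate step.
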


\begin{proof}
The proof is provided in Appendix C of the supplemental material.
\end{proof}

\par\textbf{Second}, by generalizing the results of Lemma \ref{lemma_1}, we can know that the rewards received by the FVs are monotonic with respect to the types, as given in Theorem \ref{theorem_1}.

\begin{theorem}
\label{theorem_1}
Monotonicity of rewards. For any feasible contract, the rewards for different types of FVs are as follows:
\begin{equation}
0< w_1(t)<\cdots<w_i(t)<\cdots<w_L(t).
\end{equation}
\end{theorem}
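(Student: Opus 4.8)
The plan is to prove the strict chain $0 < w_1(t) < w_2(t) < \cdots < w_L(t)$ by combining the ordering of the types with the two results already established, namely Lemma~\ref{lemma_1} and the IR condition of Definition~\ref{Definition_IR}. Recall that the types are indexed in strictly ascending order, $\theta_1 < \theta_2 < \cdots < \theta_L$. First I would invoke Lemma~\ref{lemma_1}: since $\theta_i > \theta_j \iff w_i(t) > w_j(t)$, applying this to each consecutive pair $(\theta_l, \theta_{l+1})$ with $\theta_l < \theta_{l+1}$ immediately yields $w_l(t) < w_{l+1}(t)$ for all $l \in \{1, \ldots, L-1\}$, which gives the chain of strict inequalities $w_1(t) < w_2(t) < \cdots < w_L(t)$.

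It then remains to establish the left endpoint $w_1(t) > 0$. For this I would use the IR constraint~\eqref{IR} applied to the lowest type $\theta_1$, which states $\theta_1 w_1(t) - e\kappa^{\text{FV}} C_n(t) f_1(t)^2 \geq 0$, hence $w_1(t) \geq e\kappa^{\text{FV}} C_n(t) f_1(t)^2 / \theta_1 \geq 0$. To upgrade this to a strict inequality, I would argue that a feasible contract offered to a participating FV of the lowest type must allocate a strictly positive amount of computing resource $f_1(t) > 0$ — otherwise the contract item $(0, w_1(t))$ contributes nothing to the MBS's objective~\eqref{sp2_a} and would be pruned (equivalently, such an FV is effectively not participating). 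With $f_1(t) > 0$, $C_n(t) > 0$, $\kappa^{\text{FV}} > 0$, $e > 0$, and $\theta_1 > 0$, the bound gives $w_1(t) > 0$. Combining with the chain from the previous paragraph completes the proof.

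The main obstacle, and the point that needs care in the writeup, is justifying the strictness at the bottom end: Lemma~\ref{lemma_1} alone only gives the ordering among the $w_l(t)$, not positivity, and the IR condition alone only gives $w_1(t) \geq 0$. The argument must therefore lean on the observation that contract items with $f_l(t) = 0$ are vacuous and excluded from consideration (this is the standard convention in contract-theoretic resource allocation, and is consistent with the problem $\mathbf{SP2}$ being about incentivizing actual resource sharing). Once $f_1(t) > 0$ is accepted, the rest is a routine substitution into~\eqref{IR}, and the monotone chain follows directly from Lemma~\ref{lemma_1} applied pairwise along the sorted type sequence.
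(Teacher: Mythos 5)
Your argument follows the same route the paper takes: the strict chain $w_1(t)<\cdots<w_L(t)$ is obtained exactly as the paper describes, by applying Lemma~\ref{lemma_1} pairwise along the ascending type sequence $\theta_1<\cdots<\theta_L$, and the lower bound comes from the IR condition of the lowest type. That part is correct and needs no further comment.

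The one point worth flagging is the strict positivity $w_1(t)>0$, which you correctly identify as the delicate step. Your fix is to posit $f_1(t)>0$ (pruning ``vacuous'' contract items), and with that premise the substitution into \eqref{IR} indeed gives $w_1(t)>0$. However, this extra premise is not supplied by the paper and is in mild tension with it: Theorem~\ref{theorem_2} states $0\leq f_1(t)<\cdots<f_L(t)$, explicitly allowing the lowest type to be allocated zero resources, and Lemma~\ref{lemma_5} makes the type-$\theta_1$ IR constraint tight at the optimum, so $f_1(t)=0$ would force $w_1(t)=0$. In other words, the strict inequality at the left end of the chain is not a consequence of feasibility (IR and IC) alone; it genuinely requires the additional participation assumption you introduce, or else the bound weakens to $w_1(t)\geq 0$. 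Since you state the assumption explicitly rather than smuggling it in, your proof is as rigorous as the claim permits, but you should present $f_1(t)>0$ as an assumption on participating FVs (or note the $f_1(t)=0$ edge case) rather than as something derivable from problem $\mathbf{SP2}$ itself.
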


\begin{proof}
The proof is presented in Appendix D of the supplemental material.
\end{proof}

\par\textbf{Third}, according to the IC conditions, we know that the higher reward of an FV indicates that it allocates more computing resources. Conversely, the more computing resource allocation of the FV results in a higher reward. Moreover, if two FVs contribute the same amount of computing resources, they receive identical rewards. This is mathematically presented in Lemma \ref{lemma_2}.

\begin{lemma}
\label{lemma_2}	
For any feasible contract $(f_l(t), w_l(t))$, it holds that $w_i(t) > w_j(t)$ if and only if $f_i(t) > f_j(t)$. Additionally, $w_i(t) = w_j(t)$ if and only if $f_i(t) = f_j(t)$, $\forall i, j \in \{1, 2, \ldots, L\}$.
\end{lemma}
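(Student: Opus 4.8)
\textbf{Proof proposal for Lemma \ref{lemma_2}.}

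The plan is to exploit the IC conditions \eqref{IC} together with the already-established monotonicity of rewards (Theorem \ref{theorem_1}) and the one-to-one correspondence between rewards and types (Lemma \ref{lemma_1}). The key observation is that the FV cost term $e\kappa^{\text{FV}} C_n(t) f_l^2(t)$ is strictly increasing in $f_l(t)$ for $f_l(t) \geq 0$, so larger allocated resources mean larger cost; a feasible contract must compensate for this through a larger reward, and conversely. I will make this rigorous by playing the IC inequality for the pair $(i,j)$ against the IC inequality for the pair $(j,i)$.

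First I would write down both IC constraints for a generic pair $i,j$: type $\theta_i$ weakly prefers its own item to item $j$, giving $\theta_i w_i(t) - e\kappa^{\text{FV}} C_n(t) f_i^2(t) \geq \theta_i w_j(t) - e\kappa^{\text{FV}} C_n(t) f_j^2(t)$, and symmetrically type $\theta_j$ weakly prefers its own item, giving $\theta_j w_j(t) - e\kappa^{\text{FV}} C_n(t) f_j^2(t) \geq \theta_j w_i(t) - e\kappa^{\text{FV}} C_n(t) f_i^2(t)$. Rearranging the first yields $\theta_i\big(w_i(t)-w_j(t)\big) \geq e\kappa^{\text{FV}} C_n(t)\big(f_i^2(t)-f_j^2(t)\big)$, and the second yields $e\kappa^{\text{FV}} C_n(t)\big(f_i^2(t)-f_j^2(t)\big) \geq \theta_j\big(w_i(t)-w_j(t)\big)$. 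Chaining these gives $\theta_i\big(w_i(t)-w_j(t)\big) \geq \theta_j\big(w_i(t)-w_j(t)\big)$, i.e. $(\theta_i-\theta_j)\big(w_i(t)-w_j(t)\big) \geq 0$, which is consistent with Lemma \ref{lemma_1} and confirms that $w_i(t)>w_j(t) \iff \theta_i>\theta_j$.

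Next, to pin down the $f$–$w$ relationship, suppose $w_i(t) > w_j(t)$. By Lemma \ref{lemma_1} this forces $\theta_i > \theta_j$, hence $\theta_i - \theta_j > 0$. Then the sandwiched inequality $\theta_i(w_i(t)-w_j(t)) \geq e\kappa^{\text{FV}} C_n(t)(f_i^2(t)-f_j^2(t)) \geq \theta_j(w_i(t)-w_j(t))$ shows the middle term is bounded below by a strictly positive quantity, so $f_i^2(t) - f_j^2(t) > 0$; since both allocations are nonnegative, $f_i(t) > f_j(t)$. For the converse, assume $f_i(t) > f_j(t)$, hence $f_i^2(t) - f_j^2(t) > 0$, and suppose toward a contradiction that $w_i(t) \leq w_j(t)$. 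If $w_i(t) < w_j(t)$ then by Lemma \ref{lemma_1} $\theta_i < \theta_j$, but then the first rearranged IC inequality reads $\theta_i(w_i(t)-w_j(t)) \geq e\kappa^{\text{FV}} C_n(t)(f_i^2(t)-f_j^2(t)) > 0$ with a negative left-hand side, a contradiction; if $w_i(t) = w_j(t)$ then the same inequality gives $0 \geq e\kappa^{\text{FV}} C_n(t)(f_i^2(t)-f_j^2(t)) > 0$, again impossible. Hence $w_i(t) > w_j(t)$. The equality statement $w_i(t) = w_j(t) \iff f_i(t) = f_j(t)$ then follows by contraposition of the two strict equivalences (an ordering on one side forces the corresponding strict ordering on the other, so equality on one side is equivalent to equality on the other).

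The only mild subtlety — the part I would be most careful about — is the degenerate handling when $w_i(t)=w_j(t)$: one must invoke Lemma \ref{lemma_1} to conclude $\theta_i=\theta_j$ and then feed this back into the two IC inequalities to squeeze $f_i^2(t)=f_j^2(t)$, rather than trying to divide by $(\theta_i-\theta_j)$. Everything else is a direct manipulation of the IC inequalities and the strict monotonicity of $x\mapsto x^2$ on $[0,\infty)$, so no genuine obstacle is expected; the argument is essentially a standard single-crossing / revealed-preference chaining adapted to the quadratic energy cost.
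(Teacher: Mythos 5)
Your proposal is correct and follows essentially the same route as the paper's proof: write the two IC inequalities for the pair $(i,j)$, sandwich the cost difference $e\kappa^{\text{FV}}C_n(t)\bigl(f_i^2(t)-f_j^2(t)\bigr)$ between $\theta_i\bigl(w_i(t)-w_j(t)\bigr)$ and $\theta_j\bigl(w_i(t)-w_j(t)\bigr)$, and invoke Lemma \ref{lemma_1} together with the strict monotonicity of the quadratic cost and positivity of the types to obtain both the strict and the equality cases. The only implicit assumption you use, $\theta_j>0$, is guaranteed by the definition $\theta_l=\sigma_l f_l^{\max}$, so no gap remains.
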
	

\begin{proof}
The proof is presented in Appendix E of the supplemental material.
\end{proof}

\par\textbf{Finally}, based on Theorem \ref{theorem_1} and Lemma \ref{lemma_2}, we know that the computing resource allocation is monotonic with respect to the reward, as presented in Theorem \ref{theorem_2}.

\begin{theorem} 
\label{theorem_2}
Monotonicity of resource allocation. For any feasible contract $(f_l(t),w_l(t))$, the computing resource allocation for different types of FVs is as follows:
\begin{equation}
   0\leq f_1(t)<\cdots<f_i(t)<\cdots<f_L(t).
\end{equation}

\end{theorem}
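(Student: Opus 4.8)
\textbf{Proof proposal for Theorem~\ref{theorem_2}.}

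The plan is to combine the monotonicity of rewards established in Theorem~\ref{theorem_1} with the reward--resource equivalence from Lemma~\ref{lemma_2}, and then separately rule out equality between consecutive resource levels. First I would recall that Theorem~\ref{theorem_1} gives the strict chain $0 < w_1(t) < \cdots < w_i(t) < \cdots < w_L(t)$ for any feasible contract. Then, applying Lemma~\ref{lemma_2}, which states that $w_i(t) > w_j(t)$ if and only if $f_i(t) > f_j(t)$, to each consecutive pair $(i, i+1)$ immediately transfers the strict ordering of rewards into a strict ordering of the allocated resources: $w_{i+1}(t) > w_i(t)$ forces $f_{i+1}(t) > f_i(t)$ for every $i \in \{1, \ldots, L-1\}$. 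Chaining these inequalities yields $f_1(t) < f_2(t) < \cdots < f_L(t)$.

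Next I would address the lower endpoint. Unlike the rewards, where Theorem~\ref{theorem_1} guarantees $w_1(t) > 0$ strictly, the lowest-type FV may be allocated zero resources: the feasibility constraint \eqref{p4_c_f} only requires $0 \le f_l(t) \le f_l^{\text{max}}$, and the IR condition \eqref{IR} is satisfied with equality when $f_1(t) = 0$ and $w_1(t) = 0$ as well, but since $w_1(t) > 0$ by Theorem~\ref{theorem_1} the lowest type could still be assigned a small positive reward for zero resource in a boundary-feasible contract; hence the statement carries the non-strict sign $0 \le f_1(t)$ at the left end only. I would make this explicit by noting that all subsequent inequalities in the chain are strict because they descend directly from the strict reward inequalities via Lemma~\ref{lemma_2}, which contains no degenerate case once $\theta_i \ne \theta_j$ (equivalently $w_i(t) \ne w_j(t)$) is known.

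The main obstacle I anticipate is making sure that the ``if and only if'' in Lemma~\ref{lemma_2} is genuinely available in the direction I need --- namely that a strict reward gap implies a strict resource gap --- rather than merely the converse. If the proof of Lemma~\ref{lemma_2} in Appendix~E only establishes one implication cleanly, I would instead argue directly from the IC inequality \eqref{IC}: writing the type-$\theta_i$ IC constraint against item $j = i+1$ and the type-$\theta_{i+1}$ IC constraint against item $j = i$, adding them, and cancelling the common reward terms yields $(\theta_{i+1} - \theta_i)\big(w_{i+1}(t) - w_i(t)\big) \ge 0$ together with a companion inequality on $f^2$; combined with $\theta_{i+1} > \theta_i$ and Theorem~\ref{theorem_1} this pins down $f_{i+1}^2(t) > f_i^2(t)$, and since all $f_l(t) \ge 0$ the square root preserves the strict inequality. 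This fallback keeps the argument self-contained even if the bidirectional form of Lemma~\ref{lemma_2} is not directly citable.
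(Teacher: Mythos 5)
Your proposal is correct and follows essentially the same route as the paper: the paper also obtains Theorem~\ref{theorem_2} by combining the strict reward chain of Theorem~\ref{theorem_1} with the reward--resource equivalence of Lemma~\ref{lemma_2}, with the non-strict left endpoint $0 \le f_1(t)$ coming from the feasibility constraint \eqref{p4_c_f}. Your IC-based fallback is sound in spirit (though the summed constraints cancel the $f^2$ terms, so the strict gap $f_{i+1}^2(t) > f_i^2(t)$ actually comes from a single IC inequality together with $w_{i+1}(t) > w_i(t)$ and $\theta_i > 0$), but it is not needed since Lemma~\ref{lemma_2} is stated as a biconditional.
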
 

\begin{proof}
The proof is presented in Appendix F of the supplemental material.
\end{proof}

\textbf{\textit{(4) Constraints Reduction.}} Due to the large number of constraints presented above, problem $\mathbf{SP2}$ is complex to be solved. Therefore, we reduce some of the constraints as follows.

\par\textbf{First}, we reduce the IR constraints to a single IR constraint, as satisfying the IR constraint for the lowest type ensures that the IR constraints for higher types will also be satisfied, which is presented in Lemma \ref{lemma_3}.

\begin{lemma}
\label{lemma_3}
If the IR constraint of type $\theta_1$ FVs is satisfied, then the IR constraints of type $\theta_l$ ($l\in \{2,3, \ldots ,L\}$) FVs will also be satisfied. That is, the IR constraints is reduced as:
\begin{equation}\label{simplified_IR}
    \theta_1w_1(t)-e\kappa^{\text{FV}} C_n(t)f_1^2(t)\geq 0.
\end{equation}
\end{lemma}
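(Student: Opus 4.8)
\textbf{Proof proposal for Lemma \ref{lemma_3}.}

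The plan is to show that the single IR constraint \eqref{simplified_IR} for the lowest type $\theta_1$, together with the IC conditions already established, implies the IR constraints $\theta_l w_l(t) - e\kappa^{\text{FV}} C_n(t) f_l^2(t) \geq 0$ for every $l \in \{2, 3, \ldots, L\}$. First I would fix an arbitrary $l \geq 2$ and invoke the IC condition of type $\theta_l$ with respect to the contract item of type $\theta_1$, namely $\theta_l w_l(t) - e\kappa^{\text{FV}} C_n(t) f_l^2(t) \geq \theta_l w_1(t) - e\kappa^{\text{FV}} C_n(t) f_1^2(t)$. This lower-bounds the type-$l$ utility by the utility type $\theta_l$ would obtain if it picked type $\theta_1$'s contract item.

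Next I would bound the right-hand side from below using the type ordering $\theta_l > \theta_1$ and the positivity of the reward $w_1(t) > 0$ (which follows from Theorem \ref{theorem_1}). Since $w_1(t) > 0$ and $\theta_l > \theta_1$, we have $\theta_l w_1(t) > \theta_1 w_1(t)$, hence
\begin{equation}
\theta_l w_1(t) - e\kappa^{\text{FV}} C_n(t) f_1^2(t) \geq \theta_1 w_1(t) - e\kappa^{\text{FV}} C_n(t) f_1^2(t) \geq 0,
\end{equation}
where the last inequality is exactly the assumed simplified IR constraint \eqref{simplified_IR}. Chaining this with the IC inequality from the first step yields $\theta_l w_l(t) - e\kappa^{\text{FV}} C_n(t) f_l^2(t) \geq 0$, which is the IR constraint for type $\theta_l$. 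Since $l \geq 2$ was arbitrary, all higher-type IR constraints hold, so the full set of $L$ IR constraints collapses to the single constraint \eqref{simplified_IR}.

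This argument is essentially a short chain of inequalities, so there is no serious obstacle; the only subtlety is making sure the ingredients I borrow — the IC condition relating type $l$ to type $1$, the strict positivity $w_1(t) > 0$, and the type ordering $\theta_1 < \theta_l$ — are all already available, which they are from Definition \ref{Definition_IC}, Theorem \ref{theorem_1}, and the ascending ordering of $\Theta$ assumed when the types are defined. One should also note the implicit assumption that the energy cost term uses the same $C_n(t)$ across types (consistent with how the utility functions were written), so that the comparison of the two contract items for a fixed FV is well posed.
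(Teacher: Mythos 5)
Your proof is correct and follows essentially the same standard chain of inequalities as the paper's argument: apply the IC constraint of type $\theta_l$ against the type-$\theta_1$ contract item, use $\theta_l > \theta_1$ together with $w_1(t)\geq 0$, and then invoke the assumed type-$\theta_1$ IR constraint. The only (harmless) redundancy is citing Theorem \ref{theorem_1} for $w_1(t)>0$; nonnegativity of $w_1(t)$ already follows from the type-$\theta_1$ IR constraint itself, since $\theta_1>0$ and the energy-cost term $e\kappa^{\text{FV}} C_n(t)f_1^2(t)$ is nonnegative.
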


\begin{proof}
The proof is presented in Appendix G of the supplemental material.
\end{proof}

\par\textbf{Second}, we reduce the IC constraints by introducing four notations, i.e., downward incentive constraints (DICs), upward incentive constraints (UICs), local DICs (LDICs), and local UICs (LUICs). The DIC is defined as the IC constraint between type $\theta_i$ FV and type $\theta_j$ FV ($j\in\{1,\dots,i-1\}$), and the LDIC is defined as the IC constraint between FV of type $\theta_i$ and FV of type $\theta_{i-1}$. Similarly, the IC constraints between FV of type $\theta_i$ and FV of type $\theta_j$ ($j\in\{i+1,\dots,L\}$) are referred as UIC, and the IC constraint between FV of type $\theta_{i-1}$ and FV of type $\theta_{i+1}$ is referred as LUIC~\cite{Kazmi2022ANovel}. Based on these definitions, we reduce the IC constraints to LDICs and the IR constraints to LUICs, as given in Lemma \ref{lemma_4}.

\begin{lemma}
\label{lemma_4}
 The IC constraints can be reduced to LDICs as: 
\begin{equation}
\label{LDICs}
\begin{aligned}
\theta_iw_i(t)-e\kappa^{\text{FV}} C_n(t)f_i^2(t)\geq \theta_iw_{i-1}(t)-e\kappa^{\text{FV}} C_n(t)f_{i-1}^2(t),\\ 
\forall i\in \{2,\ldots,L\},
\end{aligned}
\end{equation}

\noindent and to LUICs as: 
\begin{equation}
\begin{aligned}
\theta_iw_i(t)-e\kappa^{\text{FV}} C_n(t)f_i^2(t)\geq \theta_iw_{i+1}(t)-e\kappa^{\text{FV}} C_n(t)f_{i+1}^2(t),\\ 
\forall i \in\{1,\ldots,L-1\}.
\end{aligned}
\end{equation}
\end{lemma}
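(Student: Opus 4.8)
The plan is to prove Lemma \ref{lemma_4} by the standard contract-theory argument that reduces the full set of $L(L-1)$ IC constraints to the $L-1$ local constraints, exploiting the monotonicity results already established. The key structural facts I will use are Theorem \ref{theorem_1} (monotonicity of rewards, $w_1(t)<\cdots<w_L(t)$) and Theorem \ref{theorem_2} (monotonicity of resource allocation, $f_1(t)<\cdots<f_L(t)$), both of which follow from the IR/IC definitions.

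First I would show that the LDICs imply all the DICs, i.e., that $\theta_iw_i(t)-e\kappa^{\text{FV}}C_n(t)f_i^2(t)\geq\theta_iw_j(t)-e\kappa^{\text{FV}}C_n(t)f_j^2(t)$ for every $j<i$. The argument is by induction on the gap $i-j$, chaining adjacent LDICs. Consider three consecutive types $\theta_{i-2}<\theta_{i-1}<\theta_i$. By the LDIC between $\theta_i$ and $\theta_{i-1}$ we have $\theta_iw_i(t)-e\kappa^{\text{FV}}C_n(t)f_i^2(t)\geq\theta_iw_{i-1}(t)-e\kappa^{\text{FV}}C_n(t)f_{i-1}^2(t)$, and by the LDIC between $\theta_{i-1}$ and $\theta_{i-2}$ we have $\theta_{i-1}w_{i-1}(t)-e\kappa^{\text{FV}}C_n(t)f_{i-1}^2(t)\geq\theta_{i-1}w_{i-2}(t)-e\kappa^{\text{FV}}C_n(t)f_{i-2}^2(t)$. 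Since $\theta_i>\theta_{i-1}$ and, by Theorem \ref{theorem_1}, $w_{i-1}(t)\geq w_{i-2}(t)$, we can scale the second inequality: $\theta_i\big(w_{i-1}(t)-w_{i-2}(t)\big)\geq\theta_{i-1}\big(w_{i-1}(t)-w_{i-2}(t)\big)\geq e\kappa^{\text{FV}}C_n(t)\big(f_{i-1}^2(t)-f_{i-2}^2(t)\big)$. Adding this to the first LDIC gives $\theta_iw_i(t)-e\kappa^{\text{FV}}C_n(t)f_i^2(t)\geq\theta_iw_{i-2}(t)-e\kappa^{\text{FV}}C_n(t)f_{i-2}^2(t)$, which is the DIC between $\theta_i$ and $\theta_{i-2}$. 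Iterating this telescoping argument down to any $j<i$ recovers every DIC.

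Next I would handle the upward direction symmetrically: the LUICs imply all the UICs, via the analogous induction using $\theta_i<\theta_{i+1}$ together with the monotonicity $w_{i+1}(t)\geq w_{i+2}(t)$ from Theorem \ref{theorem_1} (read in the decreasing-index direction) to scale adjacent LUICs and chain them upward. Finally I would verify the consistency claim implicit in the lemma: that the LDICs together with the single reduced IR constraint \eqref{simplified_IR} (Lemma \ref{lemma_3}) actually force the LUICs, so that imposing LDICs and the lowest-type IR suffices to recover the entire original feasible region. This uses the fact that when an LDIC holds with the appropriate monotonicity of $f$ and $w$, the corresponding reverse (upward) adjacent inequality is automatically satisfied because the ``information rent'' term has the right sign; concretely, combining the LDIC between $\theta_{i+1}$ and $\theta_i$ with $\theta_{i+1}>\theta_i$ and $f_{i+1}(t)>f_i(t)$ yields $\theta_iw_i(t)-e\kappa^{\text{FV}}C_n(t)f_i^2(t)\geq\theta_iw_{i+1}(t)-e\kappa^{\text{FV}}C_n(t)f_{i+1}^2(t)$.

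The main obstacle I anticipate is getting the scaling step exactly right: at each chaining step one must multiply an adjacent incentive inequality by a \emph{different} type coefficient than the one it was written with, and this is only legitimate because the reward differences $w_{i-1}(t)-w_{i-2}(t)$ are nonnegative (Theorem \ref{theorem_1}) so that replacing $\theta_{i-1}$ by the larger $\theta_i$ preserves the inequality direction. Care is also needed at the boundary type $\theta_1$, where the IR constraint rather than a further LDIC anchors the induction, and in confirming that the squared-resource terms $f_l^2(t)$ (rather than $f_l(t)$) appearing in \eqref{IC} do not disrupt the telescoping — they do not, since only differences of these terms appear and they cancel pairwise along the chain. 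Once the sign bookkeeping is handled, the reduction is routine.
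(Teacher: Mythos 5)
Your downward chaining is the standard reduction and is essentially the paper's route: take the LDIC at each adjacent pair, use Theorem \ref{theorem_1} to see that the reward difference is nonnegative so that replacing $\theta_{i-1}$ by the larger $\theta_i$ preserves the inequality, and telescope to recover every DIC. Two points need repair, however. First, in the upward direction the auxiliary fact you invoke, $w_{i+1}(t)\geq w_{i+2}(t)$, contradicts Theorem \ref{theorem_1}; what the symmetric argument actually needs is $w_{i+1}(t)\leq w_{i+2}(t)$, so that the difference $w_{i+1}(t)-w_{i+2}(t)$ is \emph{nonpositive} and multiplying it by the \emph{smaller} coefficient $\theta_i$ (in place of $\theta_{i+1}$) preserves the direction:
\begin{equation*}
\theta_i\big(w_{i+1}(t)-w_{i+2}(t)\big)\;\geq\;\theta_{i+1}\big(w_{i+1}(t)-w_{i+2}(t)\big)\;\geq\;e\kappa^{\text{FV}}C_n(t)\big(f_{i+1}^2(t)-f_{i+2}^2(t)\big).
\end{equation*}
With that sign corrected, the UIC chaining goes through exactly as in your downward case, so this is a fixable slip rather than a structural flaw.

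Second, your closing step---that the LDICs together with the reduced IR constraint already force the LUICs---is not part of Lemma \ref{lemma_4} at all; it is the content of Lemma \ref{lemma_6}, and the one-line justification you sketch does not follow as stated. The LDIC between $\theta_{i+1}$ and $\theta_i$ yields $e\kappa^{\text{FV}}C_n(t)\big(f_{i+1}^2(t)-f_i^2(t)\big)\leq\theta_{i+1}\big(w_{i+1}(t)-w_i(t)\big)$, whereas the LUIC for type $\theta_i$ requires the opposite-direction bound $e\kappa^{\text{FV}}C_n(t)\big(f_{i+1}^2(t)-f_i^2(t)\big)\geq\theta_i\big(w_{i+1}(t)-w_i(t)\big)$; the latter is obtained only after the LDICs are enforced as tight (Lemma \ref{lemma_5}), which is how the paper organizes that separate reduction. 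For Lemma \ref{lemma_4} itself you should simply drop that step: the lemma asserts only that the full family of $L(L-1)$ IC constraints is equivalent to the union of the LDICs and the LUICs, which your two chaining arguments (with the corrected sign) establish.
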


\begin{proof}
The proof is presented in Appendix H of the supplemental material.
\end{proof}

\par\textbf{Finally}, we further simplify the LDICs, LUICs and IR constraint. Specifically, to maximize the utility of the MBS, the LDICs and IR constraints for FV of type $\theta_1$ can be enforced as tight, which is presented in Lemma \ref{lemma_5}. Furthermore, the LUICs can be replaced by the LDICs. In other words, if the LDICs are satisfied, the LUICs can be satisfied, as given in Lemma \ref{lemma_6}. 

\begin{lemma}
\label{lemma_5}
If the utility of the MBS is maximized, then both the LDICs and the IR constraints for FVs of type $\theta_1$ must be tight.
\begin{subequations}
    \begin{alignat}{2}
    &\theta_lw_l(t)-e\kappa^{\text{FV}} C_n(t)f_l^2(t)=\theta_lw_{l-1}(t)-e\kappa^{\text{FV}} C_n(t)f_{l-1}^2(t),\nonumber \\
    &\hspace{5.4cm} \forall l\in\{2,\ldots,L\},\\
    &\theta_1w_1(t)-e\kappa^{\text{FV}} C_n(t)f_1^2(t)=0.
    \end{alignat}
\end{subequations}
\end{lemma}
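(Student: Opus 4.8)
\textbf{Proof proposal for Lemma \ref{lemma_5}.}

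The plan is to establish both tightness claims by contradiction, exhibiting in each case a feasible perturbation of the contract that strictly increases the MBS utility $U^{\text{MBS}}=\sum_{l=1}^{L}cM_lf_l(t)-\sum_{l=1}^{L}M_l\theta_lw_l(t)$, which contradicts optimality. Since the objective is strictly decreasing in each reward $w_l(t)$, the natural lever is to lower rewards as much as the remaining constraints (the reduced LDICs of Lemma \ref{lemma_4} and the single reduced IR constraint of Lemma \ref{lemma_3}) permit.

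First I would treat the IR constraint for type $\theta_1$. Suppose at the optimum $\theta_1 w_1(t)-e\kappa^{\text{FV}}C_n(t)f_1^2(t)=\delta>0$. I would then decrease $w_1(t)$ to $w_1(t)-\delta/\theta_1$, leaving all $f_l(t)$ and all other $w_l(t)$ unchanged. This keeps the IR constraint \eqref{simplified_IR} satisfied (now with equality) and strictly raises $U^{\text{MBS}}$ by $M_1\delta>0$. The only subtlety is that the LDIC for $i=2$, namely $\theta_2 w_2(t)-e\kappa^{\text{FV}}C_n(t)f_2^2(t)\ge \theta_2 w_1(t)-e\kappa^{\text{FV}}C_n(t)f_1^2(t)$, involves $w_1(t)$ on its right-hand side; lowering $w_1(t)$ only relaxes this inequality, so it remains valid. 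Hence the perturbed contract is still feasible and strictly better, a contradiction; therefore the type-$\theta_1$ IR constraint is tight.

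Next I would show each LDIC is tight, proceeding inductively from $l=2$ up to $l=L$. Suppose the LDIC for some index $l$ has slack: $\theta_l w_l(t)-e\kappa^{\text{FV}}C_n(t)f_l^2(t)=\theta_l w_{l-1}(t)-e\kappa^{\text{FV}}C_n(t)f_{l-1}^2(t)+\epsilon$ with $\epsilon>0$, and assume (inductive hypothesis) all LDICs with smaller index are already tight. I would decrease $w_l(t)$ to $w_l(t)-\epsilon/\theta_l$. This makes the LDIC for index $l$ tight, strictly increases $U^{\text{MBS}}$ by $M_l\epsilon>0$, and I must check the two neighbouring constraints that reference $w_l(t)$: the LDIC for index $l+1$ has $w_l(t)$ on its right-hand side, so decreasing $w_l(t)$ only relaxes it; and if $l<L$ one might worry about an LUIC, but by Lemma \ref{lemma_6} the LUICs are implied by the LDICs once the latter hold, so they need not be tracked separately (alternatively, since Lemma \ref{lemma_4} already reduced the constraint set to the LDICs plus the single IR, only the LDIC for $l+1$ and the IR constraint matter, and the IR constraint involves only $w_1(t)$, untouched for $l\ge 2$). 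Thus the perturbation is feasible and strictly improving, contradicting optimality, so every LDIC is tight. Writing out these equalities for $l\in\{2,\ldots,L\}$ together with the type-$\theta_1$ IR equality gives exactly the system in the statement.

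The main obstacle is bookkeeping: one must be careful that lowering a single reward $w_l(t)$ does not violate any other constraint in the reduced set. The key observation that makes this clean is the directionality already encoded in the reduction of Lemma \ref{lemma_4} — each $w_l(t)$ appears with a $+$ sign only in its own LDIC (left-hand side) and in the next LDIC (right-hand side), so decreasing it tightens its own constraint while slackening the next one, and the IR constraint is insulated from all rewards except $w_1(t)$. I would state this sign/appearance pattern explicitly at the start of the argument so that each perturbation step is a one-line verification rather than a re-derivation.
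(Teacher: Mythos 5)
Your overall route — argue by contradiction, exploit that $U^{\text{MBS}}$ is strictly decreasing in every reward, and exhibit a feasible reward-lowering perturbation — is exactly the approach the paper takes, and your handling of the IR constraint and of the strict utility gain is fine. The one step that does not hold up as written is your dismissal of the upward constraints. After Lemma \ref{lemma_4} the reduced constraint set still contains the LUICs alongside the LDICs (your parenthetical that Lemma \ref{lemma_4} leaves ``LDICs plus the single IR'' is a misreading of its statement), and your perturbations lower a \emph{single} reward: lowering only $w_l(t)$ lowers the left-hand side of the LUIC of type $\theta_l$, i.e., $\theta_l w_l(t)-e\kappa^{\text{FV}}C_n(t)f_l^2(t)\geq \theta_l w_{l+1}(t)-e\kappa^{\text{FV}}C_n(t)f_{l+1}^2(t)$, and lowering only $w_1(t)$ does the same to the LUIC of type $\theta_1$, so neither perturbation is automatically feasible. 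Appealing to Lemma \ref{lemma_6} to ignore the LUICs is also delicate: a plain ``LDICs $\Rightarrow$ LUICs'' implication does not hold for arbitrary feasible rewards; the standard (and natural) proof of Lemma \ref{lemma_6} uses the \emph{tight} LDICs together with reward monotonicity (Theorem \ref{theorem_1}) — that is, the very conclusion of Lemma \ref{lemma_5} you are proving — so your first justification risks circularity unless you state explicitly that Lemma \ref{lemma_5} is proved for the relaxed problem whose constraints are only \eqref{simplified_IR} and the LDICs, with the LUICs verified ex post.

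The repair is simple and worth making explicit: perturb a whole tail of rewards uniformly instead of a single one. If the LDIC of type $\theta_l$ has slack $\epsilon>0$, decrease $w_j(t)$ by $\delta=\epsilon/\theta_l$ for every $j\geq l$: the LDIC at $l$ becomes tight; every IC constraint between two types in $\{l,\ldots,L\}$ has both sides reduced by the same amount ($\theta_i\delta$ for type $\theta_i$), hence is preserved — this covers all LDICs \emph{and} LUICs with indices above $l$; constraints with $w_l(t)$ only on their right-hand side (the LDIC at $l$... already handled, and the LUIC of type $\theta_{l-1}$) are relaxed; the IR constraint involves only $w_1(t)$ and is untouched; and $U^{\text{MBS}}$ increases by $\delta\sum_{j\geq l}M_j\theta_j>0$, the desired contradiction. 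For the IR step, decrease \emph{all} rewards by $\delta/\theta_1$ where $\delta$ is the IR slack; every IC constraint is preserved for the same ``both sides shift equally'' reason, IR becomes tight, and the utility strictly increases. With these uniform shifts your contradiction argument goes through against the full reduced constraint set, with no reliance on Lemma \ref{lemma_6} and no inductive bookkeeping.
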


\begin{proof}
The proof is presented in Appendix I of the supplemental material.
\end{proof}

\begin{lemma}
\label{lemma_6}
If all LDICs are satisfied, then all LUICs also hold.
\end{lemma}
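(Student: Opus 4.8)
The plan is to prove Lemma~\ref{lemma_6} by showing that satisfaction of all local downward incentive constraints (LDICs) together with the monotonicity properties already established implies that every local upward incentive constraint (LUIC) holds. First I would fix an arbitrary index $i\in\{1,\ldots,L-1\}$ and write down the LUIC to be verified, namely
\begin{equation*}
\theta_iw_i(t)-e\kappa^{\text{FV}} C_n(t)f_i^2(t)\geq \theta_iw_{i+1}(t)-e\kappa^{\text{FV}} C_n(t)f_{i+1}^2(t).
\end{equation*}
The corresponding LDIC for type $\theta_{i+1}$ (from Lemma~\ref{lemma_4}) reads
\begin{equation*}
\theta_{i+1}w_{i+1}(t)-e\kappa^{\text{FV}} C_n(t)f_{i+1}^2(t)\geq \theta_{i+1}w_i(t)-e\kappa^{\text{FV}} C_n(t)f_i^2(t).
\end{equation*}
Rearranging the LDIC gives $\theta_{i+1}\big(w_{i+1}(t)-w_i(t)\big)\geq e\kappa^{\text{FV}} C_n(t)\big(f_{i+1}^2(t)-f_i^2(t)\big)$, and the goal LUIC, after rearranging, is equivalent to $\theta_i\big(w_{i+1}(t)-w_i(t)\big)\leq e\kappa^{\text{FV}} C_n(t)\big(f_{i+1}^2(t)-f_i^2(t)\big)$.

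The key step is then to chain these two inequalities through the type ordering. By Theorem~\ref{theorem_1} we have $w_{i+1}(t)>w_i(t)$, so $w_{i+1}(t)-w_i(t)>0$, and by the ascending ordering of types $\theta_i<\theta_{i+1}$. Multiplying the positive quantity $w_{i+1}(t)-w_i(t)$ by the smaller coefficient yields $\theta_i\big(w_{i+1}(t)-w_i(t)\big)<\theta_{i+1}\big(w_{i+1}(t)-w_i(t)\big)$, and combining with the rearranged LDIC gives
\begin{equation*}
\theta_i\big(w_{i+1}(t)-w_i(t)\big)<\theta_{i+1}\big(w_{i+1}(t)-w_i(t)\big)\leq e\kappa^{\text{FV}} C_n(t)\big(f_{i+1}^2(t)-f_i^2(t)\big),
\end{equation*}
which is exactly the rearranged form of the LUIC for index $i$. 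Since $i$ was arbitrary, all LUICs hold. For a cleaner write-up I would also handle the generic LUIC between type $\theta_{i-1}$ and $\theta_{i+1}$ by telescoping: iterate the single-step LDICs from $i+1$ down to $i-1$ and use transitivity together with the monotonicity of rewards and resource allocations (Theorems~\ref{theorem_1} and~\ref{theorem_2}) to absorb the intermediate terms.

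The main obstacle I anticipate is purely bookkeeping rather than conceptual: one must be careful that the inequality direction is preserved when dividing or multiplying by $\theta$-differences, and that the strictness established by Theorems~\ref{theorem_1} and~\ref{theorem_2} (strict monotonicity of both $w_l(t)$ and $f_l(t)$, so that $w_{i+1}(t)-w_i(t)>0$ and $f_{i+1}^2(t)-f_i^2(t)>0$) is actually invoked, since without strictness the step replacing $\theta_{i+1}$ by $\theta_i$ could fail to give the needed inequality. A secondary subtlety is handling the non-strict boundary case $f_1(t)\geq 0$ (possibly zero) from Theorem~\ref{theorem_2}, which only affects the $i=1$ instance and is accommodated by using $\leq$ rather than $<$ there; everything else follows by the same chaining argument.
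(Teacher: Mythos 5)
Your rearrangements of the two constraints are correct, but the key chaining step silently reverses the LDIC. You correctly state that the LDIC for type $\theta_{i+1}$ rearranges to $\theta_{i+1}\big(w_{i+1}(t)-w_i(t)\big)\geq e\kappa^{\text{FV}} C_n(t)\big(f_{i+1}^2(t)-f_i^2(t)\big)$, i.e., it gives an \emph{upper} bound on the cost difference $e\kappa^{\text{FV}} C_n(t)\big(f_{i+1}^2(t)-f_i^2(t)\big)$, whereas the LUIC you must prove, $\theta_i\big(w_{i+1}(t)-w_i(t)\big)\leq e\kappa^{\text{FV}} C_n(t)\big(f_{i+1}^2(t)-f_i^2(t)\big)$, requires a \emph{lower} bound on that same quantity. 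In your displayed chain you then write ``$\theta_{i+1}\big(w_{i+1}(t)-w_i(t)\big)\leq e\kappa^{\text{FV}} C_n(t)\big(f_{i+1}^2(t)-f_i^2(t)\big)$,'' which is the opposite of the LDIC you just derived, so the conclusion does not follow. The monotonicity facts $\theta_i<\theta_{i+1}$ and $w_{i+1}(t)>w_i(t)$ only give $\theta_i\Delta w<\theta_{i+1}\Delta w$; combined with the true LDIC this sandwiches nothing.

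Moreover, the implication you are trying to prove from the LDIC \emph{inequalities} plus monotonicity alone is false, so no bookkeeping fix will rescue this route. Take $L=2$, $e\kappa^{\text{FV}} C_n(t)=1$, $\theta_1=1$, $\theta_2=2$, $f_1(t)=1$, $f_2(t)=\sqrt{1.5}$, $w_1(t)=2$, $w_2(t)=3$: then IR holds, $w$ and $f$ are strictly increasing, and the LDIC holds ($4.5\geq 3$), yet the LUIC fails ($1\not\geq 1.5$). The missing ingredient is Lemma \ref{lemma_5}: in the simplified problem $\mathbf{SP2.1}$ the LDICs are imposed (and at the optimum hold) with \emph{equality}, i.e., $\theta_{i+1}\big(w_{i+1}(t)-w_i(t)\big)= e\kappa^{\text{FV}} C_n(t)\big(f_{i+1}^2(t)-f_i^2(t)\big)$. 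With this equality the argument closes in one line: $\theta_i\big(w_{i+1}(t)-w_i(t)\big)\leq \theta_{i+1}\big(w_{i+1}(t)-w_i(t)\big)= e\kappa^{\text{FV}} C_n(t)\big(f_{i+1}^2(t)-f_i^2(t)\big)$ by $\theta_i<\theta_{i+1}$ and $w_{i+1}(t)\geq w_i(t)$ (Theorem \ref{theorem_1}), which is exactly the LUIC; extending to non-adjacent upward constraints then proceeds by the telescoping you sketch. So the structure of your write-up is salvageable, but you must invoke the binding LDICs rather than the LDIC inequalities.
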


\begin{proof}
The proof is presented in Appendix J of the supplemental material.
\end{proof}

\par \textbf{\textit{(5) Optimal FV Computing Resource Allocation}}. Upon reducing the IR and IC constraints, problem $\mathbf{SP2}$ can be simplified as:

\vspace{-0.8em}

{\small
\begin{subequations}\label{P5}
    \begin{alignat}{2} 
   \mathbf{SP2.1}: \,  &\max_{f_l(t),w_l(t)}(\sum_{l=1}^{L}cM_lf_{l}(t)-\sum_{l=1}^{L}M_l\theta_lw_l(t))\\
    \text{s.t.} \,
    &\theta_1w_1(t)-e\kappa^{\text{FV}} C_n(t)f_1^2(t)=0,\label{p5_c1}\\
    &\theta_lw_l(t)-e\kappa^{\text{FV}} C_n(t)f_l^2(t)=\theta_lw_{l-1}(t)-\nonumber \\
    &e\kappa^{\text{FV}} C_n(t)f_{l-1}^2(t),\forall l\in\{2,\ldots,L\},\label{p5_c2}\\
    &0\leq w1(t)\leq\cdots\leq w_L(t),\forall l\in\{1,\ldots,L\},\label{p5_w}\\
    &w_L(t)<c,  \label{p5_w_c}\\
    &(\ref{p4_c_f}) \notag.
    \end{alignat}
\end{subequations}
}

\begin{theorem}
\label{theorem_p5_convex}
Problem $\mathbf{SP2.1}$ is a convex optimization problem.
\end{theorem}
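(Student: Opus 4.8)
The plan is to show that after the constraint reductions established in Lemmas~\ref{lemma_3}--\ref{lemma_6}, the feasible set of $\mathbf{SP2.1}$ is convex and the objective is concave (for a maximization problem), so that $\mathbf{SP2.1}$ qualifies as a convex optimization problem. First I would examine the objective $\sum_{l=1}^{L}cM_lf_l(t)-\sum_{l=1}^{L}M_l\theta_l w_l(t)$: since $c$, $M_l$, and $\theta_l$ are fixed constants independent of the decision variables $f_l(t)$ and $w_l(t)$, the objective is an affine function of $(f_1(t),\dots,f_L(t),w_1(t),\dots,w_L(t))$, hence both concave and convex, which is the easy part.

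Next I would analyze the constraints. The box constraints \eqref{p4_c_f} and the ordering constraints \eqref{p5_w}, \eqref{p5_w_c} are all linear inequalities in the decision variables, so each defines a half-space (or an intersection thereof), which is convex. The potential obstacle is the pair of \emph{equality} constraints \eqref{p5_c1} and \eqref{p5_c2}: the term $e\kappa^{\text{FV}}C_n(t)f_l^2(t)$ is quadratic, so these are not affine in general. The key observation I would use is that each such equality can be solved explicitly for $w_l(t)$: from \eqref{p5_c1}, $w_1(t)=\frac{e\kappa^{\text{FV}}C_n(t)}{\theta_1}f_1^2(t)$, and recursively from \eqref{p5_c2}, $w_l(t)=w_{l-1}(t)+\frac{e\kappa^{\text{FV}}C_n(t)}{\theta_l}\bigl(f_l^2(t)-f_{l-1}^2(t)\bigr)$, so that by telescoping $w_l(t)=e\kappa^{\text{FV}}C_n(t)\sum_{j=1}^{l}\frac{1}{\theta_j}\bigl(f_j^2(t)-f_{j-1}^2(t)\bigr)$ with $f_0(t)=0$. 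Substituting these expressions eliminates the $w_l(t)$ variables entirely, reducing $\mathbf{SP2.1}$ to an unconstrained-in-$w$ problem over $(f_1(t),\dots,f_L(t))$ in which the objective becomes $\sum_l cM_l f_l(t)-e\kappa^{\text{FV}}C_n(t)\sum_l M_l\theta_l\bigl(\sum_{j=1}^{l}\frac{1}{\theta_j}(f_j^2(t)-f_{j-1}^2(t))\bigr)$, a separable quadratic in $f$ whose Hessian I would then show is negative semidefinite (equivalently, the negated objective has a positive semidefinite Hessian), so the objective is concave.

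The remaining step is to confirm that the transformed constraint set in the $f$-variables stays convex: the box constraints $0\le f_l(t)\le f_l^{\text{max}}$ are unchanged and linear, while the monotonicity constraint $f_1(t)<\cdots<f_L(t)$ (from Theorem~\ref{theorem_2}, implied by \eqref{p5_w} under the substitution) is linear, and the constraint $w_L(t)<c$ becomes $e\kappa^{\text{FV}}C_n(t)\sum_{j=1}^{L}\frac{1}{\theta_j}(f_j^2(t)-f_{j-1}^2(t))<c$, which is a sublevel set of a convex quadratic function of $f$ (since the coefficients $\frac{1}{\theta_j}-\frac{1}{\theta_{j+1}}>0$ by the ordering $\theta_1<\cdots<\theta_L$, the quadratic form is positive semidefinite) and therefore convex. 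Combining the affine/concave objective with the convex feasible region yields the claim. I expect the main obstacle to be verifying the sign of the Hessian of the substituted objective and of the $w_L(t)<c$ quadratic: this requires carefully collecting the coefficient of each $f_l^2(t)$ after the telescoping substitution and using $\theta_1<\theta_2<\cdots<\theta_L$ to conclude the required semidefiniteness, rather than any conceptual difficulty.
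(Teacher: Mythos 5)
Your argument is sound: the equalities \eqref{p5_c1}--\eqref{p5_c2} do telescope to $w_l(t)=e\kappa^{\text{FV}}C_n(t)\sum_{j=1}^{l}\tfrac{1}{\theta_j}\bigl(f_j^2(t)-f_{j-1}^2(t)\bigr)$, and after collecting terms the coefficient of $f_j^2(t)$ in the substituted objective is $e\kappa^{\text{FV}}C_n(t)\bigl(\tfrac{A_j}{\theta_j}-\tfrac{A_{j+1}}{\theta_{j+1}}\bigr)$ with $A_j=\sum_{l\geq j}M_l\theta_l$, which is strictly positive because $A_j=M_j\theta_j+A_{j+1}$ and $\theta_j<\theta_{j+1}$; hence the objective in $f$ is a concave separable quadratic, and the transformed constraints (box, monotonicity of $f$, and the convex-quadratic sublevel set from \eqref{p5_w_c}) are convex, so the conclusion follows. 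However, your elimination goes in the opposite direction from the paper's. As Algorithm~\ref{resource_allocation_algorithm} indicates, the paper substitutes the equalities to express $f_l(t)=\sqrt{\tfrac{1}{e\kappa^{\text{FV}}C_n(t)}\sum_{j=1}^{l}\theta_j\bigl(w_j(t)-w_{j-1}(t)\bigr)}$, keeps $w$ as the decision variable, and shows the resulting objective is concave in $w$ (a nonnegative weighted sum of square roots of affine functions minus a linear term) subject to purely affine constraints \eqref{p5_w}, \eqref{p5_w_c}, \eqref{p4_c_f}; the optimal $w_l^*(t)$ is then computed by CVX and $f_l^*(t)$ recovered by back-substitution through the tight equalities. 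Both routes are legitimate ``convex after variable elimination'' proofs (neither problem is convex in the joint $(f,w)$ variables because of the quadratic equalities, a point you correctly flag). Your direction yields an explicit quadratic program whose concavity reduces to a coefficient-sign check, but it leaves one nonlinear (though convex) inequality constraint from \eqref{p5_w_c}; the paper's direction keeps every constraint affine and aligns with how the solution is actually computed in the algorithm, at the cost of handling square-root terms in the objective. One small caution: after your substitution you should also restate \eqref{p4_c_f} and the nonnegativity $f_l(t)\geq 0$ explicitly (you use the latter when converting $f_{l-1}^2(t)\leq f_l^2(t)$ into the linear ordering), but this is a presentational detail rather than a gap.
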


\begin{proof}
The proof is presented in Appendix K of the supplemental material.
\end{proof}

\par According to Theorem \ref{theorem_p5_convex}, problem $\mathbf{SP2.2}$ is convex and has a global optimal solution. Therefore, the existing optimization tools such as CVX can be applied to obtain the optimal computing resource allocation $f_l^*(t)$. The main processes of FV computing resource allocation are given in Algorithm \ref{resource_allocation_algorithm}. Specifically, problem $\mathbf{SP2.2}$ is solved by using the CVX method to obtain $w_l^*(t)$ (lines \ref{calculate_w_1} to \ref{calculate_w_2}). Then, the rewards of each FV of type $l$ is converted into the rewards for each FV $m$ (lines \ref{convert_w_1} and \ref{convert_w_2}). Subsequently, the computing resources allocation $f_l^*(t)$ for each FV of type $l$ is iteratively calculated based on constraint \eqref{p5_c2} (lines \ref{calculate_f_1} to \ref{calculate_f_2}). Finally, the optimal computing resources that each FV $m$ should allocate to TV $n$ is calculated (lines \ref{convert_f_1} and \ref{convert_f_2}).

\begin{algorithm}[]
    \label{resource_allocation_algorithm}	
    \SetAlgoLined
    \KwIn{$e,\theta, L, \kappa^{\text{FV}}, M_l, c, C_n(t)$}
    \KwOut{$f_{n,m}^*(t)$}
    \textbf{Initialization:} $f_{n,m}(t)=0$, $w_{n,m}(t)=0$;\\
    \For{$n\in\mathcal{N}$}{
        Set $w_l(t)=0$;\label{calculate_w_1}\\
        \textbf{cvx\_begin}\\
        \quad Defining variables $w(L)$;\\
        \quad Objective function (21) of the supplementary document;\\
        \quad Constraints \eqref{p5_w}, \eqref{p5_w_c}, \eqref{p4_c_f};\\
        \textbf{cvx\_end}\label{calculate_w_2}\\
        Mapping $w_{m}(t) \leftarrow w_l(t)$;\label{convert_w_1}\\
        $w_{n,m}(t) = w_m(t)$;\label{convert_w_2}\\
        Set $f_l(t)=0$;\label{calculate_f_1}\\
        $f_1(t)=\sqrt{\frac{\theta_1w_1(t)}{e\kappa^{\text{FV}} C_n(t)}}$;\\
        \For{$l=2$ to $L$}{
            Calculate $f_l(t)$ according to \eqref{p5_c2};\label{calculate_f_2}\\
        }
        Mapping $f_m(t) \leftarrow f_l(t)$;\label{convert_f_1}\\
        $f_{n,m}(t) = f_m(t)$;\label{convert_f_2}\\
    }
    \Return  $f_{n,m}^*(t)$.
  \caption{FV Computing Resource Allocation.}
\end{algorithm}

 \vspace{-1.5em}

\subsection{Task Offloading}

\par The task offloading decision is obtained by proposing a two-sided matching method. Considering the heterogeneous preferences between TVs and servers, we employ the matching game to construct a mutual-satisfied matching between TVs and servers for task offloading. Specifically, the matching between TVs and servers is defined as follows.

\begin{definition}
\label{def_match}
The current matching is defined as $(\mathcal{A},\mathcal{P}(t),\Pi(t))$, where  
\begin{itemize}
\item $\mathcal{A}=(\mathcal{N},\mathcal{M} \cup \mathcal{K})$ consists of two distinct sets of agents, i.e., TVs and servers. 
\item $\mathcal{P}(t)=(\mathcal{P}_{n}(t),\mathcal{P}_{s}(t))$ consists of the preference lists of TVs and servers in time slot $t$. Specifically, each TV $n\in\mathcal{N}$ maintains a descending ordered preference list on the servers, denoted as $\mathcal{P}_{n}(t)=\{s|s\in\mathcal{M} \cup \mathcal{K}, s\succ_{n}s^{\prime}\}$, where $\succ_{n}$ indicates the preference of TV $s\in\mathcal{M} \cup \mathcal{K}$ on servers. Similarly, each server $s$ has a descending ordered preference list on the TVs, denoted as $\mathcal{P}_{s}(t)=\{n|n\in\mathcal{N}, n\succ_{s}n^{\prime}\}$, where $\succ_{s}$ represents the preference of server $s$ on TVs.

\item $\Pi(t)\subseteq \mathcal{N}\times ( \mathcal{M}  \cup \mathcal{K})$ represents the matching between the TVs and servers. Each TV $n$ can be matched with at most one server, i.e., $\Pi_{n}(t)\in \mathcal{M} \cup \mathcal{K}$, while each server $s$ can be matched with multiple TVs, i.e., $\Pi_s(t)\in \mathcal{N}$.
\end{itemize}
\end{definition}

\par The two-sided matching method is shown in Algorithm \ref{algo_match}, which is elaborated as follows.

\subsubsection{Preference List Construction}

\par According to Definition \ref{def_match}, the preference lists for TVs and servers are constructed as follows.

\textit{\textbf{Preference List for TVs.}} For TVs, each TV $n$ aims to minimize the task completion delay by selecting a satisfied server $s$ for task offloading. Thus, the preference value of TV $n$ on server $s$ is given as:
\begin{equation}
\label{prefer_task}
    \Phi_{n,s}(t)=1/T_n(t).
\end{equation}

\noindent Then, the preference list of each TV is constructed by sorting the servers based on the preference values in descending order, which is given as:
\begin{equation}
    s\succ_{n}s^{\prime} \iff \Phi_{n,s}(t)\geq \Phi_{n,s^{\prime}}(t).
\end{equation}

\textit{\textbf{Preference List for Servers.}} For servers, each servers prefer to provide computing service for TVs with low energy consumption. Thus, the preference value of server $s$ on TV $n$ is given as:
\begin{equation}
\label{prefer_fog}
    \Phi_{s,n}(t)=1/E_{n,s}(t).
\end{equation}

\noindent Similarly, the preference list of each server is constructed by sorting the TVs based on the preference values in descending order, which is given as:
\begin{equation}
    n\succ_s n^{\prime} \iff \Phi_{s,n}(t)\geq\Phi_{s,n^{\prime}}(t).
\end{equation}

\subsubsection{Matching Construction}

\par According to the preference lists, the two-sided matching between TVs and servers is constructed according to the following steps.

\par First, each TV $n$ selects the most preferred server $s$ and temporarily adds it to the matching list as follows:
\begin{equation}
\label{update_ml_n}
    \Pi_n(t) = \Pi_n(t) \cup s.
\end{equation}

\par Then, if server $s$ is the most preferred server of TV $n$, TV $n$ is temporarily added to the matching list of server $s$, which is as follows:
\begin{equation}
    \Pi_s(t) = \Pi_s(t) \cup n.\label{update_ml_s}
\end{equation}

\par Moreover, each server $s$ updates its matching list by removing the lower-priority TVs, ensuring that the allocated computational resources do not exceed its maximum resources, which is as follows:
\begin{equation}
\label{retain_TV}
 \sum_{s\in\Pi_s(t)}f_{n,s}^*(t)\leq f_s^{\text{max}}, \  \Pi_s(t) = \Pi_s(t)\backslash D_s,
\end{equation}

\noindent where $D_s$ represents the set of lower-priority TVs.

\par Additionally, the lower-priority TVs are added to the rejection set $D_s$, which is as follows:
\begin{equation}
    \mathcal{R} = \mathcal{R} \cup D_s,\label{update_rejection_set}
\end{equation}

\noindent where $\mathcal{R}$ denotes the TV rejection set.

\par Finally, the matching list $\Pi_{n^{\prime}}(t)$ and preference list $\Phi_{n^{\prime},s}(t)$ are updated for each TV $n^{\prime}\in D_s$ that has been rejected by server $s$, which is as follows: 
\begin{equation}
\label{update_matchlist_preferlist}
 \Pi_{n^{\prime}}(t) =  \varnothing, \  \Phi_{n^{\prime},s}(t) = \Phi_{n^{\prime},s}(t) \backslash s.
\end{equation}

\par The above steps are repeated until all TVs are paired with a server, or until any unmatched TVs have been rejected by all servers.

\subsubsection{Matching Result Analysis}

\par According to Definitions \ref{def_block} and \ref{def_stable}, the result of task offloading obtained by the two-sided matching method is stable and weak-Pareto optimal, as given in Theorems \ref{the_stable} and \ref{the_pareto}. \textcolor{b}{Moreover, the two-sided matching will terminate within a finite number of iterations, as presented in Theorem \ref{the_finite}.}

\begin{definition}
\label{def_block}
    Blocking pair. Assuming that a TV $n$ and a server $s$ are not matched in the current matching result $\Pi(t)$, the matching $\Pi(t)$ is considered blocked by the blocking pair $(n,s)$ if TV $n$ and a server $s$ prefer each other over their current pairs.
\end{definition}

\begin{definition}
\label{def_stable}
    Stable matching. The matching $\Pi(t)$ is stable if and only if there are no blocking pairs. 
\end{definition}

\begin{theorem}
\label{the_stable}
The matching $\Pi$ proposed by this work is stable for each TV $n$ and server $s$ ($n\in\mathcal{N}$ and $s\in\mathcal{M}\cup\mathcal{K}$).
\end{theorem}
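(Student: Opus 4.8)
\textbf{Proof proposal for Theorem~\ref{the_stable} (stability of the matching).}
The plan is to argue by contradiction, following the classical Gale--Shapley stability argument adapted to the many-to-one setting where each server $s$ has a capacity implicitly defined by the resource budget $f_s^{\text{max}}$ through constraint \eqref{retain_TV}. Suppose the matching $\Pi(t)$ returned by Algorithm~\ref{algo_match} is \emph{not} stable. Then by Definitions~\ref{def_block} and \ref{def_stable} there exists a blocking pair $(n,s)$ with $n$ not matched to $s$ in $\Pi(t)$, such that $n$ prefers $s$ to its assigned server $\Pi_n(t)$ (or $n$ is unmatched), and $s$ prefers $n$ to at least one TV currently in $\Pi_s(t)$ (or $s$ has slack capacity). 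The goal is to show this is impossible given how TVs propose in decreasing order of $\Phi_{n,s}(t)$ and how servers retain the highest-priority TVs under \eqref{retain_TV}.

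The key steps, in order, are as follows. First, I would establish the \emph{monotonicity of a server's held set}: once a TV is rejected by server $s$ (added to $D_s$ and then to $\mathcal{R}$ via \eqref{update_rejection_set}), $s$ never again holds a TV of lower priority than the ones it currently retains --- because each update \eqref{retain_TV} keeps the top-priority subset that fits within $f_s^{\text{max}}$, so the ``acceptance threshold'' of $s$ (in terms of $\Phi_{s,n}(t)$) is non-decreasing over the iterations. Second, since $n$ prefers $s$ over its final partner, and TVs propose strictly down their preference list \eqref{update_ml_n}, TV $n$ must have proposed to $s$ at some earlier iteration. Third, $n$ was not retained by $s$ at that point (otherwise $n$ would still be matched to $s$ or to something it prefers even more, by the same descending-proposal discipline and \eqref{update_matchlist_preferlist}), so $s$ rejected $n$ in favor of a set of TVs all with preference value $\ge \Phi_{s,n}(t)$. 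Fourth, by the threshold-monotonicity from step one, every TV in the final $\Pi_s(t)$ has $\Phi_{s,n'}(t) \ge \Phi_{s,n}(t)$, i.e., $s$ weakly prefers each of its final partners to $n$; combined with the capacity constraint \eqref{retain_TV} being binding when $n$ was rejected, $s$ cannot strictly prefer $n$ to any currently held TV nor does it have room for $n$. This contradicts the assumption that $(n,s)$ is a blocking pair, so no blocking pair exists and $\Pi(t)$ is stable.

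The main obstacle I anticipate is handling the capacity constraint correctly, because unlike the standard many-to-one matching with a fixed integer quota, here the ``quota'' of server $s$ is a knapsack-type condition $\sum_{n\in\Pi_s(t)} f^*_{n,s}(t) \le f_s^{\text{max}}$ in which different TVs consume different amounts of resource. This means the set $D_s$ of rejected TVs under \eqref{retain_TV} is not simply ``all but the top $q$ TVs'' but depends on which high-priority TVs happen to fit; in principle a lower-$\Phi_{s,n}$ TV with a small footprint could be retained while a higher-$\Phi_{s,n}$ TV with a large footprint is dropped, which would break the clean threshold argument. I would address this by making explicit the rule used in \eqref{retain_TV}: servers process TVs strictly in decreasing $\Phi_{s,n}(t)$ order and greedily admit each TV whose addition keeps the running sum within $f_s^{\text{max}}$, rejecting the rest; under this greedy rule the ``held set is a prefix-like structure'' claim becomes: \emph{if $n$ is rejected by $s$, then $s$ holds a set of TVs whose total resource demand leaves no room for $f^*_{n,s}(t)$ and all of whom were considered before $n$ was finally dropped.} A secondary, minor obstacle is the bookkeeping for unmatched TVs: after exhausting its preference list via repeated \eqref{update_matchlist_preferlist}, a TV may remain unmatched, and I must confirm that such a TV has genuinely been rejected by \emph{every} server it could block with, which follows directly from the termination condition of the loop (the process repeats ``until any unmatched TV has been rejected by all servers''). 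With these two points pinned down, the contradiction goes through and stability follows.
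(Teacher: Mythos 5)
Your overall strategy coincides with the paper's: assume a blocking pair $(n,s)$, note that since TVs propose strictly down their preference lists $n$ must have proposed to $s$ and been rejected at some iteration, and then invoke a monotonicity property of the set held by $s$ to conclude that every TV in the final $\Pi_s(t)$ is weakly preferred by $s$ to $n$, contradicting Definition~\ref{def_block}. So this is not a different route; it is the standard deferred-acceptance contradiction argument, which is what the paper's Appendix L argument amounts to.

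The genuine gap is precisely in the step you flagged and then claimed to repair. Under the knapsack-type condition \eqref{retain_TV}, where different TVs consume different amounts $f^*_{n,s}(t)$ of the budget $f_s^{\text{max}}$, the assertion in your first step that the ``acceptance threshold'' of $s$ is non-decreasing is false, and the greedy-by-priority admission rule does not restore it. Concretely: suppose $s$ rejects $n$ because a higher-priority, large-demand TV $a$ fills the budget; in a later round a still-higher-priority but small-demand TV $b$ proposes, the greedy rule retains $b$ and displaces $a$, and the freed capacity then allows $s$ to admit a TV $n''$ with $\Phi_{s,n''}(t)<\Phi_{s,n}(t)$ that proposes afterwards. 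At termination $s$ holds $\{b,n''\}$ while having rejected $n$, so your fourth step (``every TV in the final $\Pi_s(t)$ has $\Phi_{s,n'}(t)\ge\Phi_{s,n}(t)$'') does not follow, and $(n,s)$ can in fact satisfy Definition~\ref{def_block} if dropping $n''$ would make room for $n$. This is the well-known failure of substitutability for budget/size-constrained choice functions, and it is exactly why the problem is harder than many-to-one matching with a fixed quota. To close the argument you need an additional ingredient: either a condition that collapses \eqref{retain_TV} to a quota (e.g., each server allocates its admitted TVs equal resource shares, or $f^*_{n,s}(t)$ is independent of $n$), or an explicit invariant of Algorithm~\ref{algo_match} guaranteeing that a server never ends up holding a TV it ranks below one it has previously rejected (for instance, by having $s$ also re-check previously rejected TVs against freed capacity before admitting lower-ranked proposers). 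As written, your step one is a claim that would fail, so the contradiction in step four is not yet established.
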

\begin{proof}
The proof is presented in Appendix L of the supplemental material.
\end{proof}

\begin{theorem}
\label{the_pareto}
    The matching $\Pi$ is weak-Pareto optimal for each TV $n$ and server $s$ ($n\in\mathcal{N}$ and $s\in\mathcal{M}\cup\mathcal{K}$).
\end{theorem}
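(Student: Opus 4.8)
\textbf{Proof proposal for Theorem~\ref{the_pareto} (weak-Pareto optimality of the matching $\Pi$).}

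The plan is to argue by contradiction, leveraging the deferred-acceptance structure of the two-sided matching method and the stability result already established in Theorem~\ref{the_stable}. Recall that a matching is \emph{weak-Pareto optimal} if there is no alternative feasible matching $\Pi'$ that makes \emph{every} agent strictly better off simultaneously. So I would begin by supposing, for contradiction, that such a $\Pi'$ exists: every TV $n\in\mathcal{N}$ strictly prefers its partner in $\Pi'$ to its partner in $\Pi$, i.e.\ $\Phi_{n,\Pi'_n(t)}(t) > \Phi_{n,\Pi_n(t)}(t)$, and likewise every server is strictly better off (in the sense of its preference value over the set of TVs it serves).

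Next I would focus on the TV-proposing nature of the algorithm. Consider any TV $n$ that ends up matched to server $s = \Pi_n(t)$. By construction, $n$ proposed to servers in descending order of $\Phi_{n,\cdot}(t)$ and was rejected by every server it strictly prefers to $s$ before being (tentatively, then finally) accepted by $s$. Let $n^\star$ be a TV for which $\Pi'_{n^\star}(t) = s'$ with $s' \succ_{n^\star} \Pi_{n^\star}(t)$ (such a TV exists since \emph{all} TVs improve under $\Pi'$; in fact take $n^\star$ to be any TV whose assigned server in $\Pi'$ was one it proposed to and got rejected from under $\Pi$ — I would argue such a TV must exist because the ``top choice'' TV in the execution order can only improve by moving to a server it was rejected by). The key step is then a counting/capacity argument: because $n^\star$ was rejected by $s'$ under the algorithm, the server $s'$ must have filled its resource budget $f_{s'}^{\text{max}}$ (constraint~\eqref{retain_TV}) with TVs it strictly prefers to $n^\star$. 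For $\Pi'$ to accommodate $n^\star$ at $s'$ while keeping those higher-priority TVs strictly better off, at least one of them must have been displaced to a still-less-preferred server, contradicting the assumption that every agent strictly improves. Formalizing this requires tracking how the finite resource capacity forces a ``musical chairs'' contradiction: improving one TV's assignment to an over-demanded server necessarily pushes some other TV (or forces the server to a less-preferred roster), so universal strict improvement is impossible.

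I would then close the loop by invoking finiteness (Theorem~\ref{the_finite}) to guarantee the algorithm's output $\Pi$ is well-defined, and stability (Theorem~\ref{the_stable}) to rule out the trivial escape route where $\Pi'$ differs via a blocking pair. The main obstacle I anticipate is handling the \emph{many-to-one} structure cleanly: unlike the classical one-to-one Gale--Shapley setting where TV-optimality of the proposer side gives Pareto optimality almost for free, here each server accepts a \emph{set} of TVs subject to the cumulative resource constraint~\eqref{retain_TV}, so the notion of a server being ``better off'' is about its whole matched set, and the displacement argument must be carried out at the level of sets rather than single partners. I would manage this by ordering the TVs within each server's roster by preference value and arguing that the \emph{lowest}-priority retained TV at any over-demanded server is the pivot that yields the contradiction — if every TV must strictly improve, that pivot TV has nowhere to go without evicting someone it is preferred over, which is the contradiction we need.
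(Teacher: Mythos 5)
Your overall strategy --- argue by contradiction, exploit the fact that under the TV-proposing deferred acceptance any TV matched in $\Pi'$ to a server it strictly prefers to $\Pi_n(t)$ must have proposed to and been rejected by that server, and then lean on the capacity rule \eqref{retain_TV} and on Theorems \ref{the_stable} and \ref{the_finite} --- is the natural route and is in the spirit of the paper's argument. The problem is that the step that is supposed to deliver the contradiction is never actually established, and as stated it does not follow. When $n^\star$ was rejected by $s'$, the TVs filling the budget of $s'$ were only \emph{tentatively} held; they need not be the partners of $s'$ under $\Pi$, and any TV that must vacate $s'$ to make room for $n^\star$ under $\Pi'$ is required \emph{by your own hypothesis} to move to a server it strictly prefers to its $\Pi$-partner, not to a ``still-less-preferred server.'' So the displacement you describe does not by itself contradict universal strict improvement; it merely propagates the demand for improvement to another over-demanded server, and that cascade could in principle be absorbed by servers with slack capacity or resolve cyclically. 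Closing it requires a global device you never supply: for example the classical ``consider the last rejection that occurs during the algorithm'' argument, an explicit counting argument over the finite proposal/rejection sequence (Theorem \ref{the_finite} gives finiteness, not the accounting), or --- cleaner here, since the theorem quantifies over servers as well --- a server-side argument showing that $s'$, having rejected every member of $\Pi'_{s'}(t)$ in favour of TVs it ranks higher under \eqref{retain_TV}, cannot strictly prefer its $\Pi'$ roster, so that not \emph{every} agent can strictly improve.

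A second, related gap is the capacity model itself. Acceptance at a server is governed by a cumulative, heterogeneous resource constraint $\sum_{n\in\Pi_s(t)} f_{n,s}^{*}(t)\leq f_s^{\text{max}}$ rather than a fixed quota, so the textbook Gale--Shapley facts you implicitly invoke (proposer-side weak Pareto optimality, well-behaved responsive set preferences for the accepting side) do not transfer automatically. You flag the many-to-one difficulty honestly, but the proposed ``pivot TV'' fix is again only asserted: ``has nowhere to go without evicting someone it is preferred over'' presupposes that every server the pivot prefers is saturated under $\Pi'$, which you have not shown. Until one of the closing arguments above is carried out precisely, the attempt is a plausible plan with a genuine hole at its decisive step rather than a proof.
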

\begin{proof}
The proof is presented in Appendix M of the supplemental material.
\end{proof}
{\color{b}
\begin{theorem}
\label{the_finite}
  The two-sided matching will terminate within a finite number of iterations.
\end{theorem}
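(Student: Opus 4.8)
The plan is to adapt the standard termination argument for deferred-acceptance procedures to the two-sided matching of Algorithm~\ref{algo_match}. The key structural observation is that a TV never proposes to the same server twice. Indeed, whenever a TV $n^{\prime}$ is rejected by a server $s$ (that is, $n^{\prime}\in D_s$ in~\eqref{update_rejection_set}), the update rule~\eqref{update_matchlist_preferlist} deletes $s$ from the preference list $\Phi_{n^{\prime},\cdot}(t)$ of $n^{\prime}$, so $s$ can never again be selected as the most preferred server of $n^{\prime}$ in~\eqref{update_ml_n}. Consequently, over the whole execution each TV $n\in\mathcal{N}$ issues at most one proposal to each of the $M+K$ servers in $\mathcal{M}\cup\mathcal{K}$, hence at most $M+K$ proposals in total.

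Next I would introduce a monotone, integer-valued potential to certify per-round progress. Let $\Xi$ denote the total number of TV--server pairs that still survive in the TVs' preference lists at the current state, so that $0\le \Xi\le N(M+K)$, with $\Xi=N(M+K)$ initially. I would then show that every round which does not satisfy the stopping condition --- ``all TVs are matched, or every unmatched TV has been rejected by all servers'' --- forces at least one rejection through~\eqref{retain_TV}--\eqref{update_rejection_set}, which by~\eqref{update_matchlist_preferlist} strictly decreases $\Xi$ by at least one. Since $\Xi$ is a nonnegative integer that strictly decreases in every non-terminal round, there can be at most $N(M+K)$ such rounds, and the loop halts. This simultaneously yields the explicit bound $\mathcal{O}(N(M+K))$ on the number of iterations, consistent with the $\mathcal{O}(N)$ complexity stated in the motivations when $M$ and $K$ are treated as constants.

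The step that requires the most care --- and the main obstacle --- is verifying that every non-terminal round indeed produces at least one rejection (equivalently, strictly decreases $\Xi$). I would argue by contradiction: if a round causes no rejection, then every proposal made in that round via~\eqref{update_ml_n}--\eqref{update_ml_s} was retained by the receiving server without displacing anyone, so at the end of the round no TV is unmatched; but this is exactly the first clause of the stopping condition, contradicting non-terminality. One still has to handle the edge case of an unmatched TV whose preference list has become empty: such a TV makes no proposal, yet by construction it has ``been rejected by all servers'', which triggers the second clause of the stopping condition, so it cannot cause an infinite loop either. Assembling these observations gives the finite termination claim together with the stated iteration bound.
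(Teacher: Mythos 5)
Your proof is correct and takes essentially the same route as the paper's argument: since each rejection permanently deletes the rejecting server from that TV's preference list via \eqref{update_matchlist_preferlist}, every TV proposes to each of the $M+K$ servers at most once, which bounds the total number of iterations by $N(M+K)$ and guarantees termination. Your potential $\Xi$ is simply an explicit formalization of this same finite-proposal counting argument, so no substantive difference exists between the two proofs.
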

\begin{proof}
 The proof is presented in Appendix N of the supplemental material.
\end{proof}
}

\par The main processes for task offloading are presented in Algorithm \ref{algo_match}. First, problem $\mathbf{SP1}$ is solved to obtain the computing resource allocation for RSUs (line \ref{f_allocation}). Then, the preference list for TVs and servers are established (lines \ref{calculate_TV_pl} and \ref{sort_pl}). Subsequently, each TV $n$ selects the most preferred server $s$, temporarily adds server $s$ to the matching list of TV $n$, and adds TV $n$ to the matching list of server $s$ (lines \ref{select_p_s} to \ref{add_n_to_s}). Additionally, each server $s$ retains the qualifying TVs to update its matching list (line \ref{update_server_list}). Moreover, add the unqualified TVs to the overall rejection set (line \ref{update_R}). Finally, update the matching list and preference list for each TV $n$ that has been rejected by server $s$ (line \ref{update_ml_pl}).

\vspace{-1em}
\begin{algorithm}[]	
    \label{algo_match}
    \SetAlgoLined
    \KwIn{The sets of TVs $\mathcal{N}$ and servers $\mathcal{M}\cup\mathcal{K}$}
    \KwOut{The optimal matching list $\Pi(t)^*$ and the offloading strategy $\mathbf{O}^*(t)$}
    \textbf{Initialization:}
    $\Pi_n(t) = \varnothing$, $\Pi_s(t) = \varnothing$, $\mathcal{R}=\mathcal{N}$;\\
    \For{$n \in \mathcal{N}$}{
        \For{$s \in \mathcal{M}\cup\mathcal{K}$}{
            Obtain $f_{n,s}^*$ through resource allocation;\label{f_allocation}\\
            Obtain preference values as Eqs. \eqref{prefer_task} and \eqref{prefer_fog};\label{calculate_TV_pl}
        }
    }
    Sort the preference lists of TVs and servers in descending order of preference values.\label{sort_pl}\\
    \While{$\mathcal{R}$ is not empty}{
        \For{$n \in \mathcal{R}$}{
            Select the most preferred server $s$;\label{select_p_s}\\
            Establish the matching lists as Eqs.~\eqref{update_ml_n} and ~\eqref{update_ml_s};\label{add_n_to_s}\\
        }
        \For{$s \in \mathcal{M}\cup\mathcal{K}$}{
            \uIf{$\Pi_s(t)$ is not empty}{
                Update the matching lists as Eq.~\eqref{retain_TV};\label{update_server_list}\\
                Update the rejection set as Eq.~\eqref{update_rejection_set};\label{update_R}\\
                \For{$n^{\prime} \in D_s$}{
                    Update $\Pi_{n^{\prime}}(t)$ and $\Phi_{n^{\prime},s}(t)$ as Eq.~\eqref{update_matchlist_preferlist};\label{update_ml_pl}\\
                }
            }
        }
    }
    \Return  $\Pi(t)$, $\mathbf{O}^*(t)=\{o_{n,s}(t)|s=\Pi_n(t)\}$.
  \caption{Task Offloading.}
\end{algorithm}


\subsection{Main Steps of JCRATOA and Performance Analysis}

\par In this section, we show the main steps and performance analyses of the proposed JCRATOA.

\subsubsection{Main Steps of JCRATOA}

\par The main steps of JCRATOA are presented in Algorithm \ref{JCRATOA_algorithm}. Specifically, the system time and delay are initialized (line 1). Then, in each time slot, the optimal computing resource allocation decisions of RSUs and FVs are obtained (lines \ref{JCRATOA_line_0} to \ref{JCRATOA_line_2}). Moreover, the decisions of task offloading are obtained (line \ref{JCRATOA_line_3}). Furthermore, the task completion delay for TV is calculated based on the decisions of computing resource allocation and task offloading (line \ref{JCRATOA_line_4}). Additionally, update the delay, available computing resources, and mobility states of vehicles (lines \ref{JCRATOA_line_5} to \ref{JCRATOA_line_6}).


\begin{algorithm}[]
    \label{JCRATOA_algorithm}	
    \SetAlgoLined
    \KwIn{$\mathcal{N}, \mathcal{M}, \mathcal{K}, \mathcal{T}$}
    \KwOut{System delay $SL$}
    \textbf{Initialization:} $t=0$, $SL=0$;\\
    \While{$t\leq T$}{\label{JCRATOA_line_0}
        Use Matlab fmincon tool to obtain \textcolor{b}{$f^*_{n,k}(t)$};\label{JCRATOA_line_1}\\
        Call Algorithm \ref{resource_allocation_algorithm} to obtain $f^*_{n,m}(t)$;\label{JCRATOA_line_2}\\
        Call Algorithm \ref{algo_match} to obtain $\Pi^*(t)$ and $\mathbf{O}^*(t)$;\label{JCRATOA_line_3}\\
        \For{$n\in\mathcal{N}$}{
            Calculate the delay $T_n(t)$;\label{JCRATOA_line_4}\\
            $SL(t)=SL(t)+T_n(t)$;\label{JCRATOA_line_5}\\
            Update the computing resources of TVs, FVs and RSUs;\label{JCRATOA_line_6}\\
        }
        Update the mobility of TVs and FVs;\label{JCRATOA_line_7}\\
        Update time $t=t+\tau$;\\
    }
    \Return $SL$.
  \caption{JCRATOA}
\end{algorithm}

 \vspace{-1.5em}

\subsubsection{\textcolor{b}{Complexity Analysis}}

\par For RSU computing resource allocation, the worst-case complexity of problem $\mathbf{SP1}$ is $\mathcal{O}(KN^2)$, where $K$ and $N$ denote the number of FVs and TVs, respectively~\cite{Jee2020Precoding}. For FV computing resource allocation, the worst-case complexity of problem $\mathbf{SP2.2}$ is $\mathcal{O}(NL^3)$, where $L$ is the number of FV types. For task offloading, we can derive that the complexity of preference list construction is $\mathcal{O}(N(M+K))$, where $M+K$ denotes the number of servers~\cite{Li2024Joint}. Moreover, for matching construction, in the worst case, any TV could be rejected $M+K$ times~\cite{Sun2025online}. Each rejection requires updating the preference list of at most $\min\{M+K, N\}$ servers in the next iteration. Therefore, the worst-case complexity of matching construction is $\mathcal{O}((M+K)(N+\min\{M+K,N\}))$, and the complexity of Algorithm \ref{algo_match} is $\mathcal{O}((M+K)(2N+\min\{M+K,N\}))$. In summary, the worst-case complexity of the proposed JCRATOA is $\mathcal{O}((M+K)(2N+\min\{M+K,N\})+KN^2+NL^3)$.

\begin{remark}
\par \textcolor{b}{Note that the complexity reduction for solving the DMOP can cause the performance degradation due to the smaller solution space resulting from problem decomposition. However, the proposed JCRATOA satisfies the requirements of vehicles while meeting the constraints of the system. This is because although decomposing the problem into subproblems reduces the solution space of each subproblem, the optimization objective and constraints of each subproblem remain consistent with those of the original problem, thereby ensuring the feasibility of the solution in meeting the constraints of the original problem.}
\end{remark}

\section{Simulation Results and Analysis}
\label{SIMULATION_RESULTS_AND_ANALYSIS}

\subsection{Simulation Setup}

\subsubsection{Parameters}

\par We consider a 3 km road, where 3 RSUs are deployed and 12 FVs are randomly located initially. Moreover, the communication coverage radius of each TV is 200 m. Additionally, the system timeline is set as 40 s, with the time slot of 1 s. The other parameters are listed in Table \ref{tab_simuParameter}.

\begin{table}[t]
	\setlength{\abovecaptionskip}{-5pt}%
	\setlength{\belowcaptionskip}{0pt}%
	\caption{Simulation parameters}
	\label{tab_simuParameter}
	\renewcommand*{\arraystretch}{1}
	\begin{center}
		\begin{tabular}{p{.06\textwidth}|p{.21\textwidth}|p{.13\textwidth}}
			\hline
			\hline
			\textbf{Symbol}&\textbf{Meaning}&\textbf{Default value}\\
			\hline
				$\mathcal{N}$&The number of TVs& [5, 30]
                \textcolor{b1}{\cite{Shah2022SDN}}\\
            \hline
			$D^{in}_n$&Task size& [300, 1000] KB
            \textcolor{b1}{\cite{Sun2024Bargain}}\\
            \hline
			$t_n^{max}$&Task deadline& $[0.5, 5]$ s
            \textcolor{b1}{\cite{Sun2024Bargain}}\\
            \hline
			$p$&Transmit power& $[20, 50]$ dBm\\
            \hline
			$N_0$&Noise power& $-98$ dBm
            \textcolor{b1}{\cite{Sun2024TJCCT}}\\   
            \hline
			$B$&Channel bandwidth& $[20, 40]$ MHz
            \textcolor{b1}{\cite{Chen2025Efficient}}\\  
            \hline
			$f_{n}$&Computing resources of TV $n$& $[0.5, 1]$ GHz
             \textcolor{b1}{\cite{Sun2024Bargain}}\\ 
            \hline
			$f_{m}$&Computing resources of FV $m$& $[1, 10]$ GHz
            \textcolor{b1}{\cite{Fan2023Joint}}\\  
            \hline
			$f_{k}$&Computing resources of RSU $k$& $30$ GHz
            \textcolor{b1}{\cite{Fan2023Joint}}\\    
			\hline
            \textcolor{b}{$\bar{\textbf{v}}_{v}$} & \textcolor{b}{Asymptotic mean of velocity}& \textcolor{b}{25 m/s}\\
            \hline
            \textcolor{b}{$\varsigma$}& \textcolor{b}{Asymptotic standard deviation of velocity}& \textcolor{b}{5}\\
            \hline
            \textcolor{b}{$\alpha$} & \textcolor{b}{Memory level of velocity}& \textcolor{b}{0.9}
            \textcolor{b1}{\cite{Sun2024TJCCT}}\\
            \hline
            \textcolor{b}{$H_k$} & \textcolor{b}{Effective antenna height at RSU $k$}& \textcolor{b}{10 m} \textcolor{b1}{\cite{3GPPTR389012020}} \\
            \hline
            \textcolor{b}{$H_n$} & \textcolor{b}{Effective antenna height at vehicle $n$}& \textcolor{b}{1.5 m} \textcolor{b1}{\cite{3GPPTR389012020}} \\
            \hline
            \textcolor{b}{$f_c$}&\textcolor{b}{Carrier frequency} &\textcolor{b}{5.9 GHz}\textcolor{b1}{\cite{ETSITR1032571}}\\
            \hline
            \textcolor{b}{$\vartheta^{\text{L}}$} & \textcolor{b}{Shadow fading}& \textcolor{b}{4 dB}\textcolor{b1}{\cite{3GPPTR389012020}} \\
            \hline
		\end{tabular}
	\end{center}
\end{table} 

\vspace{-1.5 em}

\subsubsection{Evaluation Metrics}

\par We evaluate the performance of the JCRATOA by presenting the following indicators. i) Average task completion delay $\frac{1}{N}\sum_{n\in \mathcal{N}}T_n(t)$, which indicates the average delay for completing a task. ii) Average task completion ratio ${N^{\text{succ}}(t)}/\sum_{t\in\mathcal{T}}\sum_{n\in\mathcal{N}}$, which indicates the average ratio of tasks that are completed, where $N^{\text{succ}}(t)$ represents the number of tasks that have been successfully completed. iii) Average energy consumption $\frac{1}{T}\sum_{t\in \mathcal{T}}\sum_{n\in \mathcal{N}}(E_n(t)+E_{n,s}(t))$, which indicates the average energy consumption during the system timeline. {\color{b1} iv) System throughput $\sum_{t \in T} \sum_{n \in N} D_n^{\text{succ}}(t)/T$, which indicates the amount of tasks successfully completed per unit time, where $D_n^{\text{succ}}(t)$ denotes the amount of tasks that is successfully completed by TV $n$ at time slot $t$, and $T$ denotes the system timeline. v) Resource utilization fairness $(\sum_{n=1}^N x_n)^2/(N \sum_{n=1}^N x_n^2)$, which indicates the fairness of computing resource allocation among vehicles by using the Jain's fairness index\cite{Sediq2013Optimal}, where $x_n=\sum_{t \in T}(o_{n, n}(t) f_n+\sum_{s \in K \cup M} o_{n, s}(t) f_{n, s}(t))$ denotes the total amount of computing resources allocated to TV $n$ during the system timeline.}


\subsubsection{Comparison Approaches}

\par We compare JCRATOA with the following baselines:

\begin{itemize}
\item All local offloading (ALO): All TVs process their tasks locally.

\item Nearest RSU offloading (NRO): The tasks of each TV are offloaded to the nearest RSU to which it is connected, and the computing resource allocation of the RSU is determined based on JCRATOA. 

\item Nearest FV offloading (NFO): The tasks of each TV are offloaded to its nearest FV within range, and the computing resource allocation of the FV is determined based on JCRATOA. 

\item Nearest server offloading (NSO):  An optimal server is selected for task offloading from the nearest RSU or FV based on which offers better performance. Additionally, the computing resource allocation of RSUs and FVs is determined based on JCRATOA. 

\item Kuhn–Munkres matching-based task offloading (KMMTO) \cite{Wei2023TBOMC}: The task offloading decision is made by using the Kuhn–Munkres matching method, and the computing resource allocation decision is determined based on JCRATOA. 

\item Binary reverse offloading and Lagrangian dual-based resource allocation (BROLDRA) \cite{Feng2022Latency}: The task offloading decision is made by using a reverse offloading method. Specifically, the tasks of a TV is first uploaded to the nearest RSU, which then decides to process the tasks directly or offload them to the FVs by employing the greedy searching. Moreover, the computing resource allocation is decided by employing the Lagrangian dual method.

\item \textcolor{b}{PPO-based task offloading and computing resource allocation (PTOCRA)\cite{Shang2024Joint}: The task offloading and computing resource allocation are decided by the PPO algorithm.}

\item \textcolor{b}{DDPG-based task offloading and computing resource allocation (DTOCRA) \cite{He2025Low}: The task offloading and computing resource allocation are decided by the DDPG algorithm.}

\end{itemize}

\subsection{System Performance}
In this section, we first evaluate the system performance of the JCRATOA over time with default parameters. Then, we examine the impacts of various parameters on the performance of the proposed JCRATOA.

\subsubsection{Performance Evaluation}

\par \textcolor{b1}{Figs. \ref{fig_time}(a), \ref{fig_time}(b), \ref{fig_time}(c), \ref{fig_time}(d), and \ref{fig_time}(e) illustrate the average task completion delay, average task completion ratio, system throughput, average energy consumption, and resource utilization fairness, respectively over time.} As shown in Fig. \ref{fig_time}, the proposed JCRATOA outperforms the benchmarks in terms of task completion delay, task completion ratio, system throughput, and resource utilization fairness, while exhibiting a relatively higher average energy consumption. For the benchmarks, several factors contribute to their inferior performances in terms of the task completion delay, task completion ratio, system throughput, and resource utilization fairness. {\color{b1}First, the approaches such as ALO, NRO, NFO, and NSO are based on nearest offloading strategies, which result in inefficient task processing and unfair resource utilization as traffic tends to be biased toward geographically close RSUs or FVs.} Additionally, the BROLDRA approach suffers from additional delays and task failures due to the task uploading and forwarding process, as well as the potential inefficient task offloading decisions made by the greedy search algorithm. {\color{b1}The resource allocation decision of BROLDRA also lacks efficiency in motivating FVs to participate in resource sharing, thereby resulting in poor fairness of resource utilization.} Furthermore, the inferior performance of KMMTO in task completion delay, task completion ratio, system throughput, and resource utilization fairness is mainly due to the relative high computational complexity of the Kuhn–Munkres matching method, which has a computational complexity of $\mathcal{O}^3(n)$. \textcolor{b1}{Finally, PTOCRA and DTOCRA show inferior performance in task completion delay, task completion ratio, system throughput, and resource utilization fairness, which is due to the long training periods caused by the complex and hybrid action spaces of the DMOP. Although they achieve advantages in energy consumption, this comes at the cost of task processing efficiency and resource utilization fairness, thus making them unsuitable for delay-sensitive VFC systems.}


\par \textcolor{b1}{For the proposed JCRATOA, its superior performance in task completion delay, task completion ratio, system throughput, and resource utilization fairness can be mainly attributed to two key factors. On the one hand, the contract theory-based inventive mechanism of resource allocation stimulates FVs to contribute their idle computational resources voluntarily, thus effectively expanding the system computing capacity and improving both throughput and resource utilization fairness. On the other hand, the two-sided matching method of task offloading can ensure that each task can be adaptively assigned to a suitable server, which shortens the task completion delay while maintaining a high completion ratio.} \textcolor{b1}{However, the relatively higher energy consumption stems from its prioritization of task processing over energy savings. This trade-off is essential in delay-sensitive VFC systems, where timely execution is critical to support computation-intensive and delay-critical tasks, particularly for safety-related applications. Moreover, vehicles and RSUs generally possess more sufficient and stable power supply, and short-term energy variations have limited immediate impact on real-time performance as their effects manifest over longer timescales. Despite the higher energy consumption, the proposed JCRATOA still meets the energy constraints of TVs and RSUs, as these constraints are considered in our DMOP.} \textcolor{b1}{Consequently, this set of simulation results indicates that the proposed JCRATOA is able to achieve superior performances in task completion delay, task completion ratio, system throughput, and resource utilization fairness, while effectively meeting the satisfying constraints.}

\begin{figure*}[!hbt] 
	\centering
	\setlength{\abovecaptionskip}{2pt}%
	\setlength{\belowcaptionskip}{2pt}%
	\subfigure[Average task completion delay]
	{
		\begin{minipage}[t]{0.18\linewidth}
			\centering
			\includegraphics[scale=0.28]{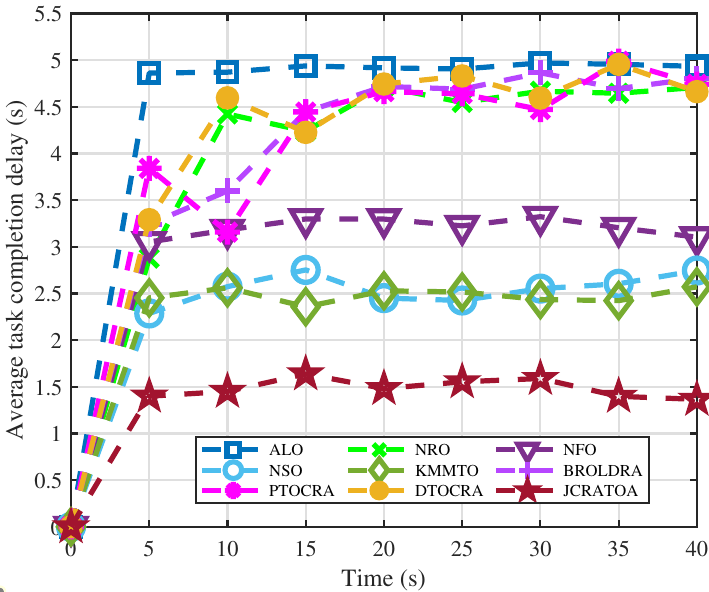}
		\end{minipage}
	}
	\subfigure[Average task completion ratio]
	{
		\begin{minipage}[t]{0.18\linewidth}
			\centering
			\includegraphics[scale=0.28]{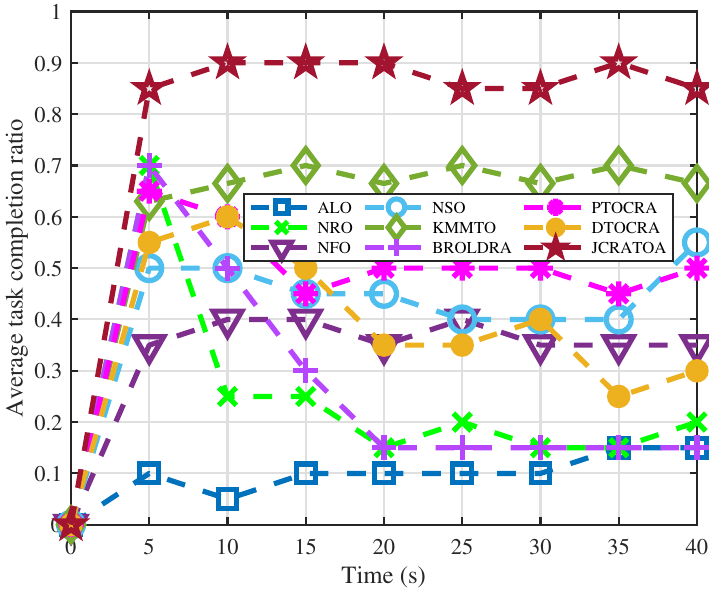}	
		\end{minipage}
	}
    \subfigure[System throughput]
	{
		\begin{minipage}[t]{0.18\linewidth}
			\centering
			\includegraphics[scale=0.28]{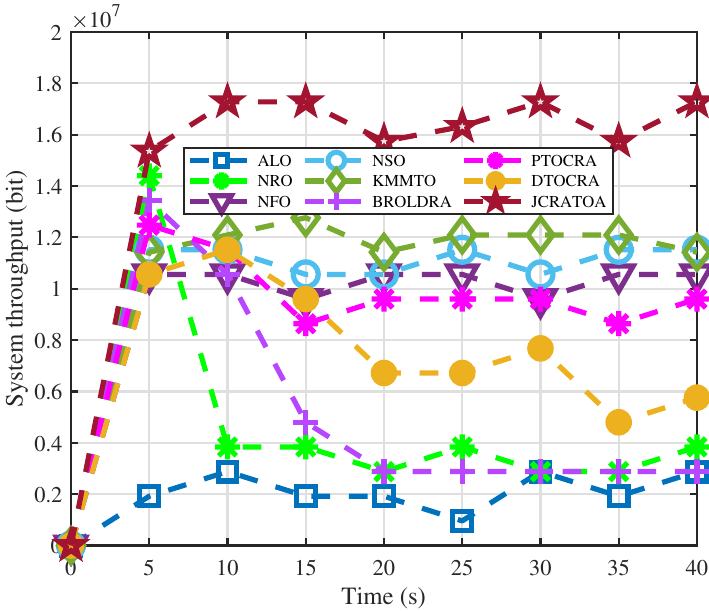}
		\end{minipage}
	}
    \subfigure[Resource utilization fairness]
	{
		\begin{minipage}[t]{0.18\linewidth}
			\centering
			\includegraphics[scale=0.28]{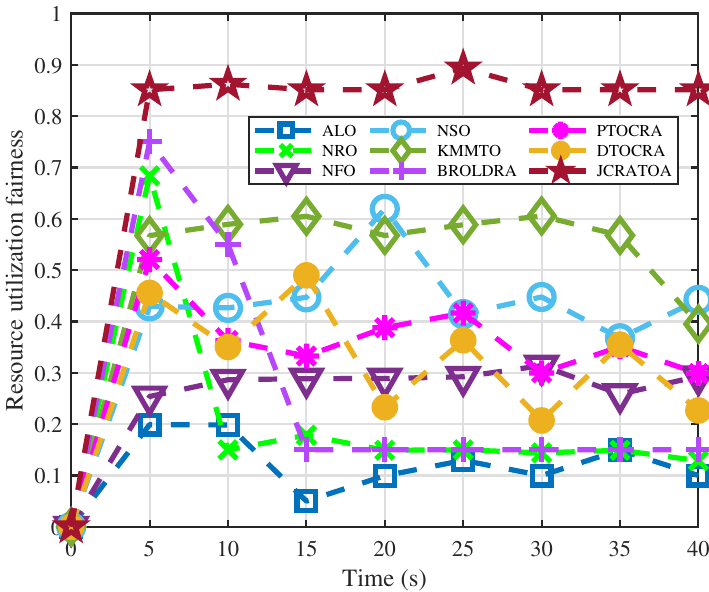}
		\end{minipage}
	}
    \subfigure[Average energy
consumption]
	{
		\begin{minipage}[t]{0.18\linewidth}
			\centering
			\includegraphics[scale=0.28]{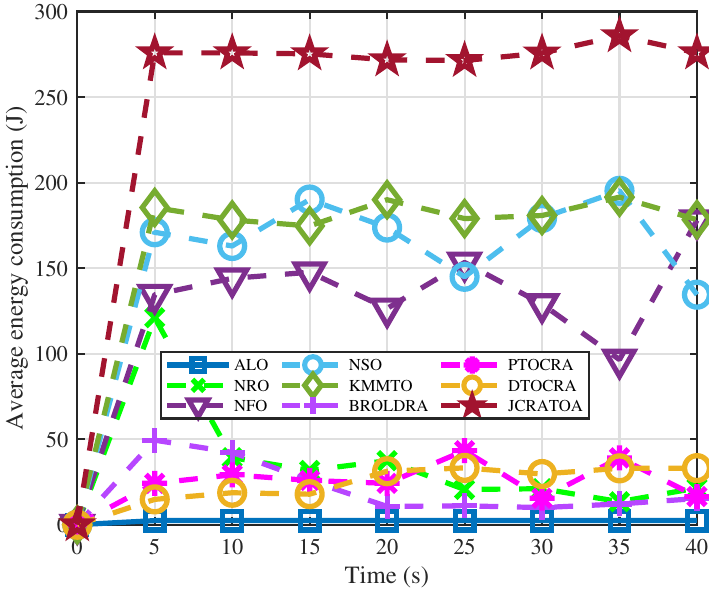}
		\end{minipage}
	}
	\centering
	\caption{\textcolor{b1}{System performance with respect to time.}}
	\label{fig_time}
	\vspace{-1.2 em}
\end{figure*}

\begin{figure*}[!hbt] 
	\centering
	\setlength{\abovecaptionskip}{2pt}%
	\setlength{\belowcaptionskip}{2pt}%
	\subfigure[Average task completion delay]
	{
		\begin{minipage}[t]{0.18\linewidth}
			\centering
			\includegraphics[scale=0.28]{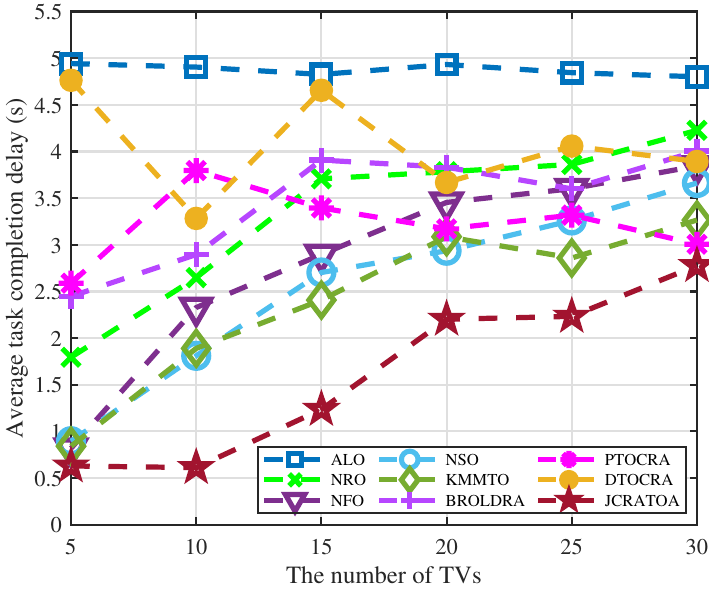}
		\end{minipage}
	}
	\subfigure[Average task completion ratio]
	{
		\begin{minipage}[t]{0.18\linewidth}
			\centering
			\includegraphics[scale=0.28]{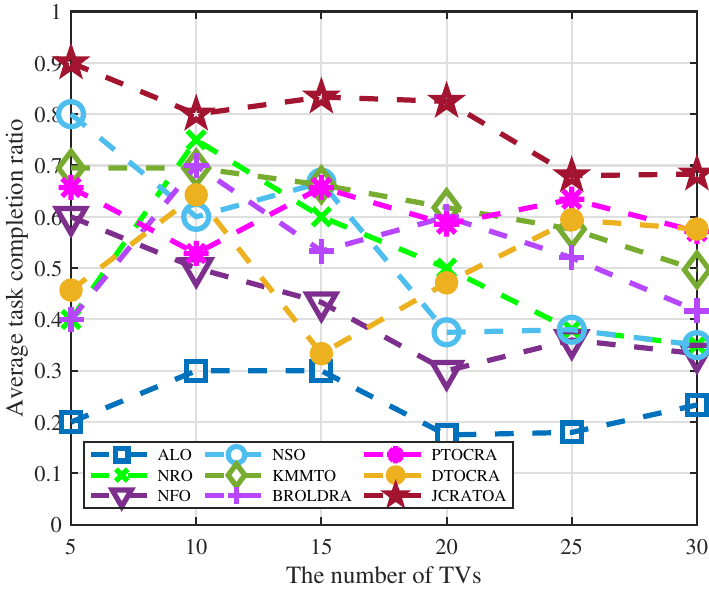}	
		\end{minipage}
	}
    \subfigure[System throughput]
	{
		\begin{minipage}[t]{0.18\linewidth}
			\centering
			\includegraphics[scale=0.28]{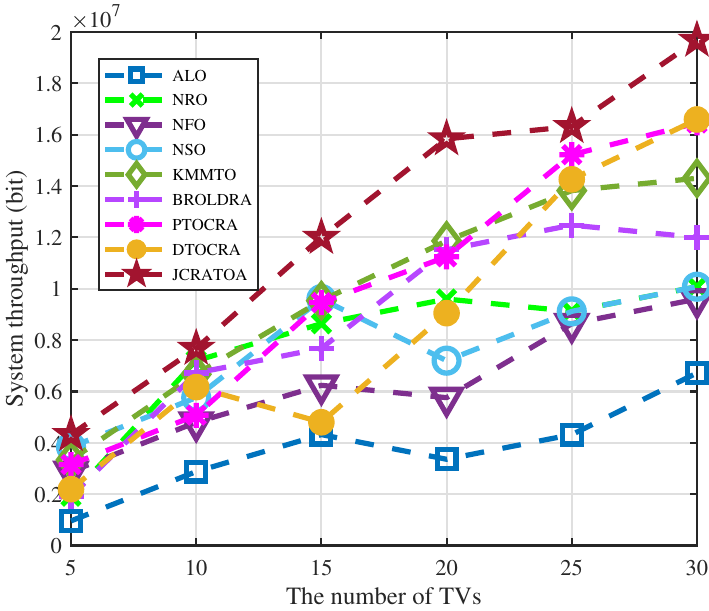}
		\end{minipage}
	}
    \subfigure[Resource utilization fairness]
	{
		\begin{minipage}[t]{0.18\linewidth}
			\centering
			\includegraphics[scale=0.28]{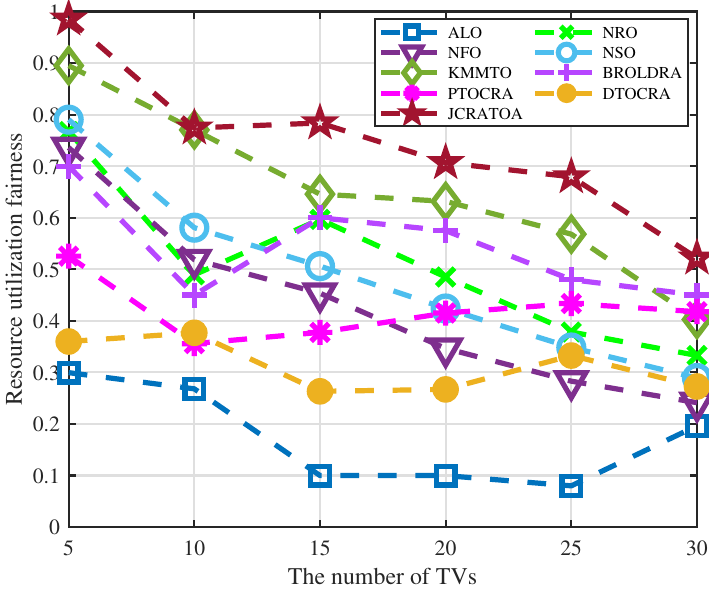}
		\end{minipage}
	}
    \subfigure[Average energy consumption]
	{
		\begin{minipage}[t]{0.18\linewidth}
			\centering
			\includegraphics[scale=0.28]{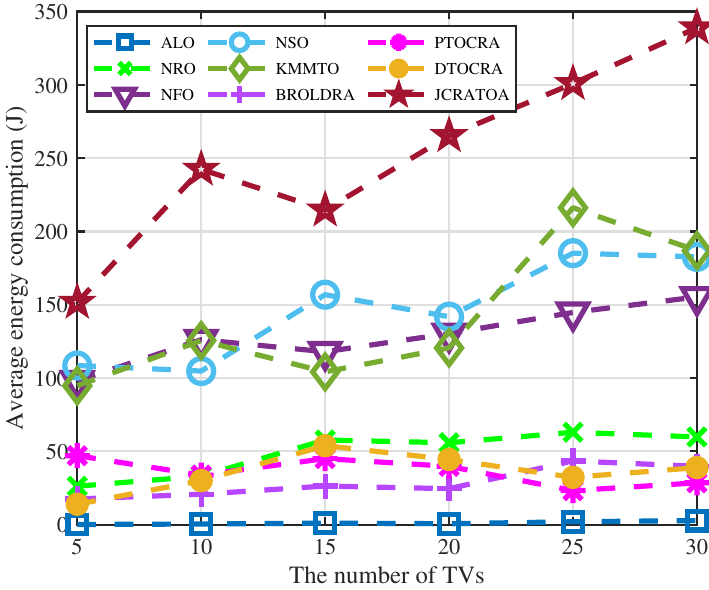}
		\end{minipage}
	}
	\centering
	\caption{\textcolor{b1}{System performance with different number of TVs.}}
	\label{fig_tasknodes}
	\vspace{-1.2em}
\end{figure*}

\begin{figure*}[] 
	\centering
	\setlength{\abovecaptionskip}{2pt}%
	\setlength{\belowcaptionskip}{2pt}%
	\subfigure[Average task completion delay]
	{
		\begin{minipage}[t]{0.18\linewidth}
			\centering
			\includegraphics[scale=0.28]{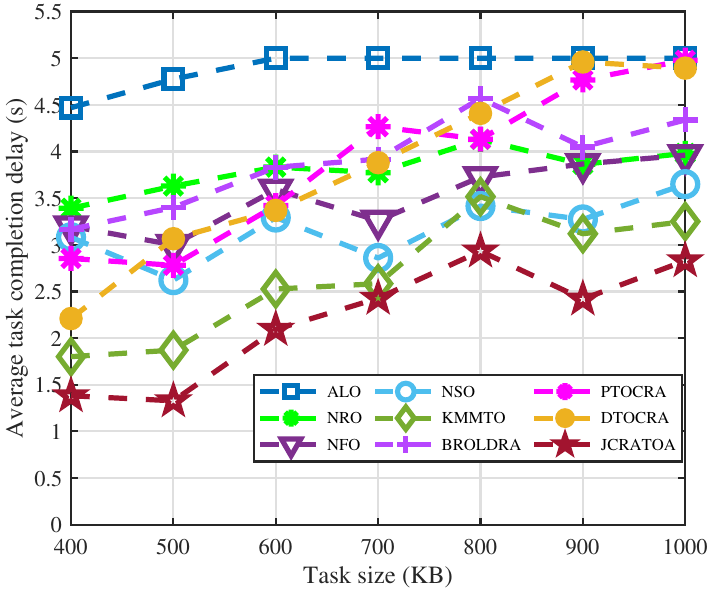}
		\end{minipage}
	}
	\subfigure[Average task completion ratio]
	{
		\begin{minipage}[t]{0.18\linewidth}
			\centering
			\includegraphics[scale=0.28]{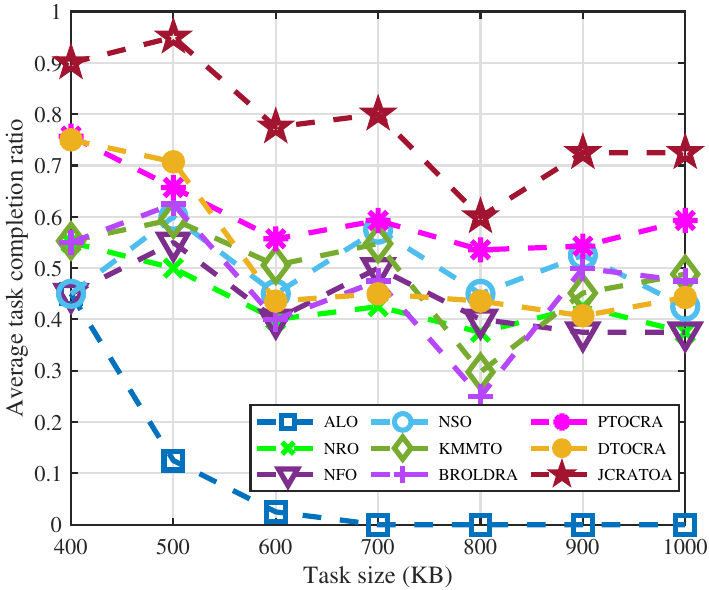}	
		\end{minipage}
	}
    	\subfigure[System throughput]
	{
		\begin{minipage}[t]{0.18\linewidth}
			\centering
			\includegraphics[scale=0.28]{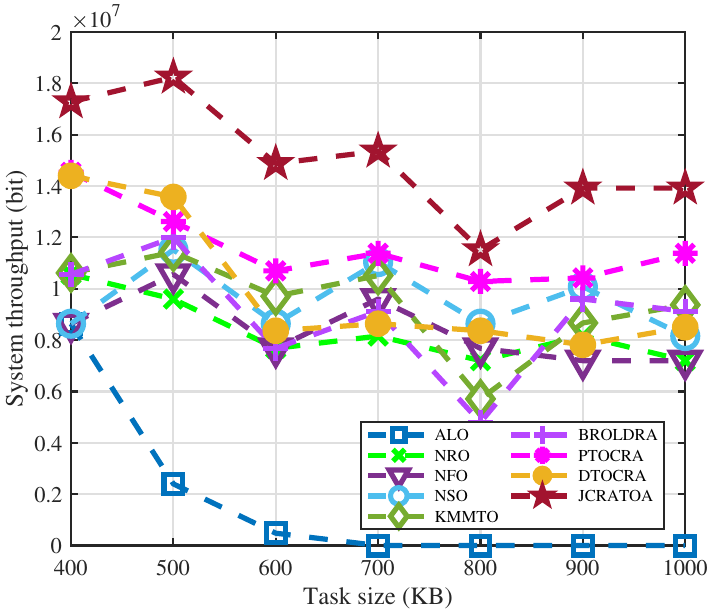}
		\end{minipage}
	}
    	\subfigure[Resource utilization fairness]
	{
		\begin{minipage}[t]{0.18\linewidth}
			\centering
			\includegraphics[scale=0.28]{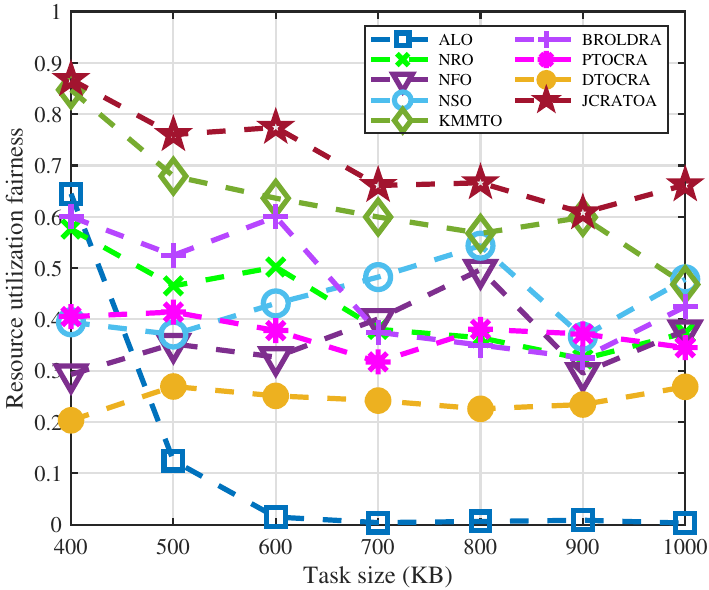}
		\end{minipage}
	}
    \subfigure[Average energy consumption]
	{
		\begin{minipage}[t]{0.18\linewidth}
			\centering
			\includegraphics[scale=0.28]{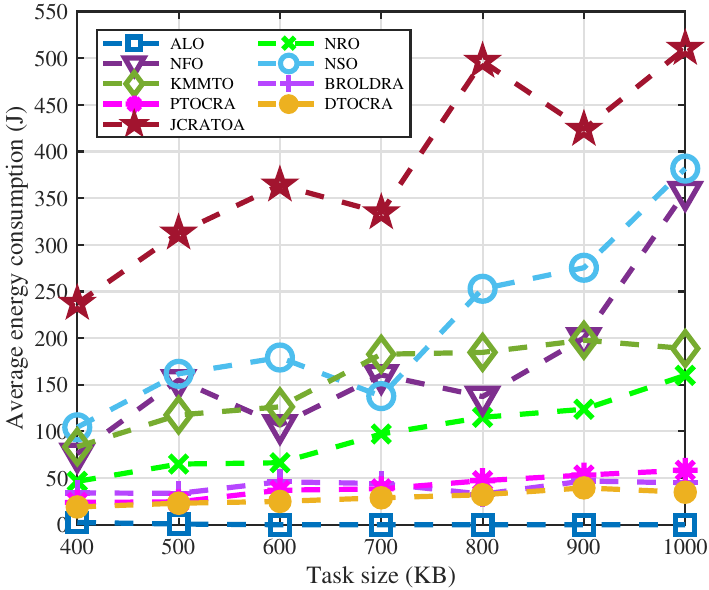}
		\end{minipage}
	}
	\centering
	\caption{\textcolor{b1}{System performance with different task sizes.}}
	\label{fig_tasksize}
	\vspace{-1.2em}
\end{figure*}

\subsubsection{Effect of TV Numbers}

\par Figs. {\color{b1}\ref{fig_tasknodes}(a), \ref{fig_tasknodes}(b), \ref{fig_tasknodes}(c), \ref{fig_tasknodes}(d), and \ref{fig_tasknodes}(e) show the impact of the number of TVs on average task completion delay, average task completion ratio, system throughput, resource utilization fairness, and average energy consumption,} respectively, for different approaches. \textcolor{b1}{As seen in Figs. \ref{fig_tasknodes}(a) to \ref{fig_tasknodes}(d), when the number of TVs increases, there is a general upward trend in task completion delay and system throughput, and a downward trend in task completion ratio and resource utilization fairness across all approaches. This is because more TVs cause heavier workloads and stronger resource contention, leading to longer processing delays and lower completion ratios. Meanwhile, the system throughput still increases as the computing capacity of RSUs and FVs is more fully utilized. However, the intensified load competition also widens the gap in resource allocation among TVs, resulting in reduced resource utilization fairness.} \textcolor{b1}{More specifically, the proposed JCRATOA outperforms the benchmarks in terms of task completion delay, task completion ratio, system throughput, and resource utilization fairness in relatively dense scenario, falling within the ranges of 7.6\% to 87\%, 6.6\% to 371\%, 6.25\% to 78.79\%, and 0.48\% to 87.23\% respectively.} This is because JCRATOA can dynamically offload tasks to the most suitable servers and stimulate resource sharing of FVs by using the two-sided matching method and the incentive mechanism. In contrast, methods like NRO, NFO, and NSO, which rely on the nearest server selection, \textcolor{b1}{struggle to maintain efficient task processing and fair resource utilization as the workload increases.} Moreover, the DRL-based methods of PTOCRA and DTOCRA exhibit evident disadvantages in terms of task processing performance and resource utilization fairness as the number of TVs increases. This is because they require extensive environmental interactions to learn effective policies since the coupled decision space expands rapidly with more TVs, thus resulting in higher computational complexity and longer delays.

\par From Fig. \ref{fig_tasknodes}(e), we observe that the average energy consumption of all approaches shows an overall upward trend as the number of TVs increases, as heavier workloads result in greater computing and uploading energy consumption. {\color{b1}However, the proposed JCRATOA exhibits higher energy consumption, which is a trade-off for achieving lower task completion delays, higher task completion ratios, higher system throughput and higher resource utilization fairness.} Comparatively, the lower energy consumption of the other approaches comes at the expense of task processing efficiency and delay. In summary, the results demonstrate that although the proposed JCRATOA incurs higher energy consumption as the number of TVs increases, it exhibits better scalability in efficiently handling delay-sensitive and computation-intensive tasks.

\subsubsection{Effect of Task Size}
\par {\color{b1}Figs. \ref{fig_tasksize}(a), \ref{fig_tasksize}(b), \ref{fig_tasksize}(c), \ref{fig_tasksize}(d), and \ref{fig_tasksize}(e) illustrate the impact of the task size on average task completion delay, average task completion ratio, system throughput, resource utilization fairness, and average energy consumption}, respectively, for different approaches. \textcolor{b1}{As shown in Figs. \ref{fig_tasksize}(a) to \ref{fig_tasksize}(d), as task size increases, the average task completion delay increases while the average task completion ratio, system throughput, and resource utilization fairness decline across all approaches. This is because larger tasks intensify both the processing and communication burdens, thus leading to longer delays and lower completion ratios under limited resources, while also reducing both throughput and fairness as computing resources become unevenly utilized among vehicles.} Moreover, the proposed JCRATOA achieves significantly superior performance in task processing efficiency and resource utilization fairness. \textcolor{b}{Specifically, in comparison to the other approaches, when the task size reaches 1000 KB, the JCRATOA achieves approximately 43\%, 29\%, 28\%, 22\%, 13\%, 35\%, 43\%, and 42\% performance gains in terms of the average task completion delay compared with ALO, NRO, NFO, NSO, KMMTO, BROLDRA, PTOCRA, and DTOCRA, respectively.} \textcolor{b1}{Additionally, the proposed JCRATOA significantly outperforms the other approaches in task completion ratio, system throughput, and resource utilization fairness, falling within the ranges of 12\% to 7150\%, 18.23\% to 98.99\%, and 27.79\% to 99.51\%, respectively.}

\par From Fig.~\ref{fig_tasksize}(e), we can observe that the average energy consumption of all approaches tends to increase as task size grows, since larger tasks require more computing resources, which consume more energy. Furthermore, although the DRL-based methods of PTOCRA and DTOCRA achieve lower energy consumption, \textcolor{b1}{the energy-saving advantage is obtained at the expense of inferior task processing performance and reduced resource utilization fairness.} In contrast, compared to the comparative approaches, \textcolor{b1}{the proposed JCRATOA exhibits higher energy consumption to maintain relatively lower task completion delay, higher task completion ratio, greater system throughput, and better resource utilization fairness.} In conclusion, the simulation results indicate that the proposed JCRATOA can adapt to heavy-loaded VFC scenarios, thus achieving superior performances of task completion delay, task completion ratio, system throughput, and resource utilization fairness within energy constraints, despite the trade-off of increased energy consumption.
                            

\subsubsection{Effect of FV Types}

\par To evaluate the effectiveness of the contract theory-based incentive mechanism of the proposed JCRATOA, we compare this incentive mechanism with the optimal contract mechanism and linear pricing mechanism~\cite{Kazmi2022ANovel} in Fig. 1 in Appendix O of the supplemental material. Specifically, the optimal contract mechanism under symmetric information (i.e., the MBS is aware of the types of FVs) serves as the upper bound of the MBS utility. Moreover, for the linear pricing mechanism under asymmetric information, the MBS lacks type information and only offers linear pricing options to the stimulate FVs to share the idle resources. The simulation results demonstrate that the contract-based incentive mechanism of the proposed JCRATOA can achieve balanced utility distribution between the MBS and FVs, thereby improving overall system efficiency.


{\color{b1}
\section{Discussion of the Performance Evaluation in a Hardware Environment}
\label{discussion}
\par To demonstrate the feasibility and effectiveness of the proposed approach in a real-world environment, we conduct experiments on an in-vehicle rugged computer Nuvo-9200VTC. The details are presented in Appendix P of the supplementary material. The results indicate that the proposed JCRATOA can operate efficiently on actual hardware and achieve satisfactory performance in terms of task completion delay, task completion ratio, system throughput, and resource utilization fairness under the energy constraints.
}

\section{CONCLUSION}
\label{CONCLUSION}

\par In this work, we have studied joint computing resource allocation and task offloading in VFC system. First, we have presented a hierarchical VFC architecture under asymmetric information, which integrates the computing capabilities of both RSUs and FVs. Moreover, we have formulated the DMOP to minimize the task completion delay of all TVs, while satisfying the energy constraints of TVs, RSUs, and FVs. To solve the DMOP, we have proposed the JCRATOA, which compromises the computing resource allocation and task offloading components. Specifically, we have presented a convex optimization-based method for RSU resource allocation and a contract theory-based incentive mechanism for FV resource allocation. Then, we have designed a two-sided matching method based on matching game to optimize task offloading decisions. \textcolor{b1}{Simulation results have demonstrated that the proposed JCRATOA achieves superior performance in terms of task completion delay, task completion ratio, system throughput, and resource utilization fairness while satisfying the energy constraints of different nodes.} Moreover, the proposed JCRATOA has better scalability in dense vehicular environments and demonstrates superior performance under heavy-loaded scenarios, achieving enhanced performance in terms of task completion delay and task completion ratio while adhering to energy constraints.

\par \textcolor{b}{The main limitation of the proposed JCRATOA is that it is applicable in urban areas where RSUs are readily accessible or easily deployed to serve as terrestrial MEC servers. However, JCRATOA may not be well-suited for remote, rural, mountainous, or hazardous areas where ground infrastructure deployment is challenging or impractical. Therefore, our future work will focus on joint optimization of computing resource allocation, task offloading, and UAV trajectory planning in uncrewed aerial vehicles (UAV)-assisted VFC systems by leveraging the flexibility, mobility, and line-of-sight communication of UAVs.}

\bibliographystyle{IEEEtran}
\bibliography{IEEEabrv,references}
\begin{IEEEbiography}[{\includegraphics[width=1in,height=1.23in,clip,keepaspectratio]{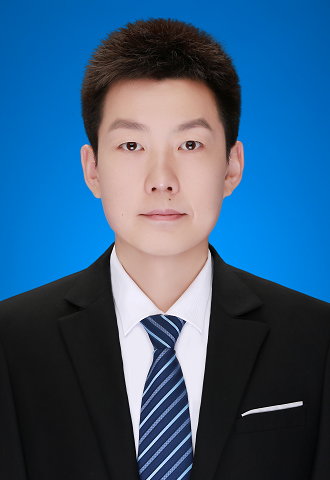}}]{Geng Sun} (Senior Member, IEEE) received the B.S. degree in communication engineering from Dalian Polytechnic University, and the Ph.D. degree in computer science and technology from Jilin University, in 2011 and 2018, respectively. He was a Visiting Researcher with the School of Electrical and Computer Engineering, Georgia Institute of Technology, USA. He is a Professor in the College of Computer Science and Technology at Jilin University. Currently, he is working as a visiting scholar at the College of Computing and Data Science, Nanyang Technological University, Singapore. He has published over 100 high-quality papers, including IEEE TMC, IEEE JSAC, IEEE/ACM ToN, IEEE TWC, IEEE TCOM, IEEE TAP, IEEE IoT-J, IEEE TIM, IEEE INFOCOM, IEEE GLOBECOM, and IEEE ICC. He serves as the Associate Editors of IEEE Communications Surveys \& Tutorials, IEEE Transactions on Communications, IEEE Transactions on Vehicular Technology, IEEE Transactions on Network Science and Engineering, IEEE Transactions on Network and Service Management and IEEE Networking Letters. He serves as the Lead Guest Editor of Special Issues for IEEE Transactions on Network Science and Engineering, IEEE Internet of Things Journal, IEEE Networking Letters. He also serves as the Guest Editor of Special Issues for IEEE Transactions on Services Computing, IEEE Communications Magazine, and IEEE Open Journal of the Communications Society. His research interests include Low-altitude Wireless Networks, UAV communications and Networking, Mobile Edge Computing (MEC), Intelligent Reflecting Surface (IRS), Generative AI and Agentic AI, and deep reinforcement learning.
\end{IEEEbiography}


 \begin{IEEEbiography}[{\includegraphics[width=1in,height=1.23in,clip,keepaspectratio]{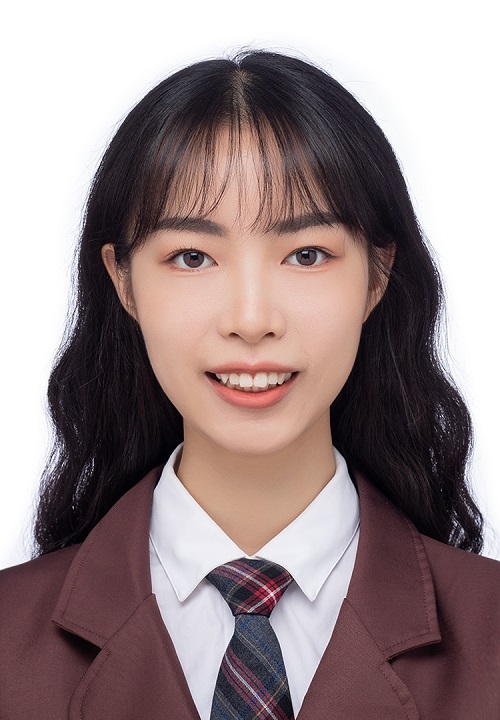}}]{Siyi Chen} received a BS degree in Computer Science and Technology from Harbin University of Science and Technology, Harbin, China, in 2023. She is currently working toward the MS degree in Software Engineering at Jilin University, Changchun, China. Her research interests include mobile edge computing and optimizations.
\end{IEEEbiography}

 
\begin{IEEEbiography}[{\includegraphics[width=1in,height=1.23in,clip,keepaspectratio]{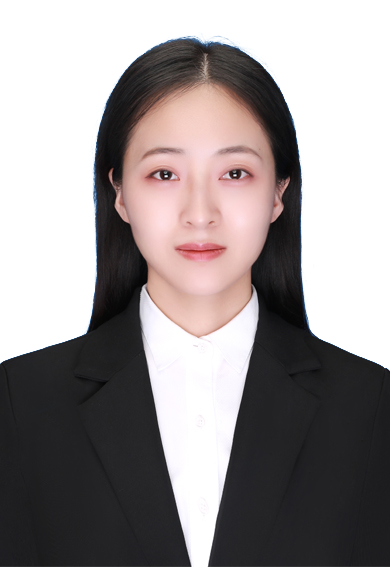}}]{Zemin Sun} received a BS degree in Software Engineering, an MS degree and a Ph.D degree in Computer Science and Technology from Jilin University, Changchun, China, in 2015, 2018, and 2022, respectively. Her research interests include mobile edge computing, UAV communications, and game theory. 
\end{IEEEbiography}

 
\begin{IEEEbiography}[{\includegraphics[width=1in,height=1.23in,clip,keepaspectratio]{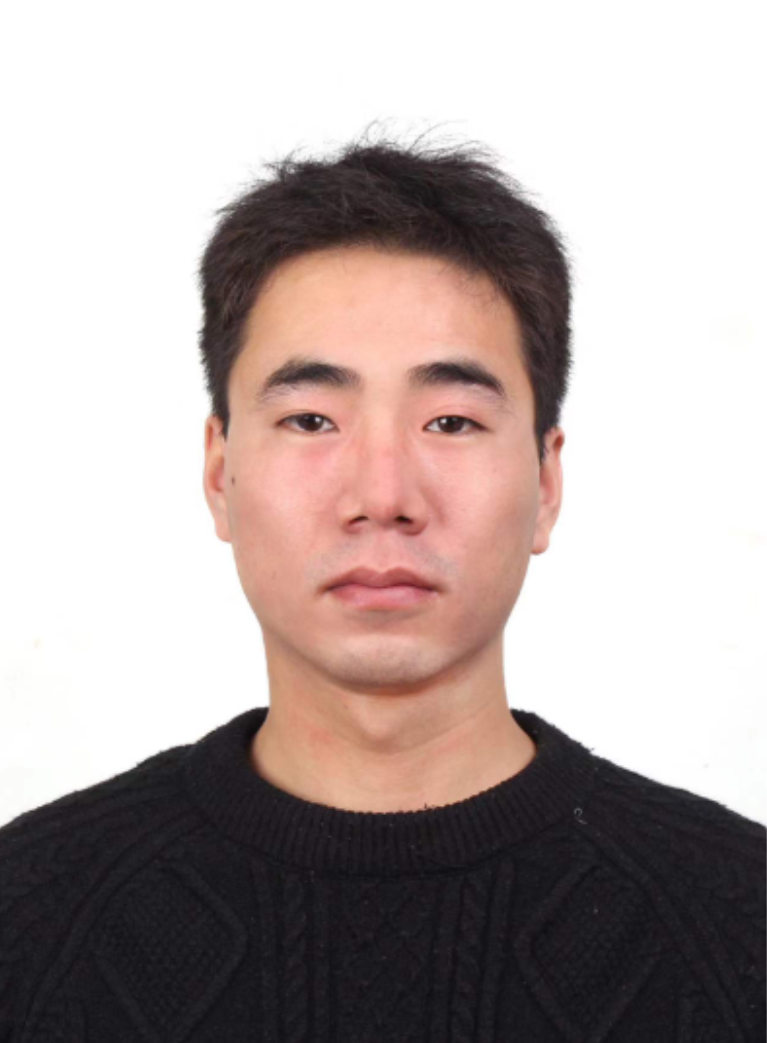}}]{Long He} received a BS degree in Computer Science and Technology from Chengdu University of Technology, Sichuan, China, in 2019. He is currently working toward the PhD degree in Computer Science and Technology at Jilin University, Changchun, China. His research interests include vehicular networks and edge computing.
\end{IEEEbiography}

 
\begin{IEEEbiography}[{\includegraphics[width=1in,height=1.23in,clip,keepaspectratio]{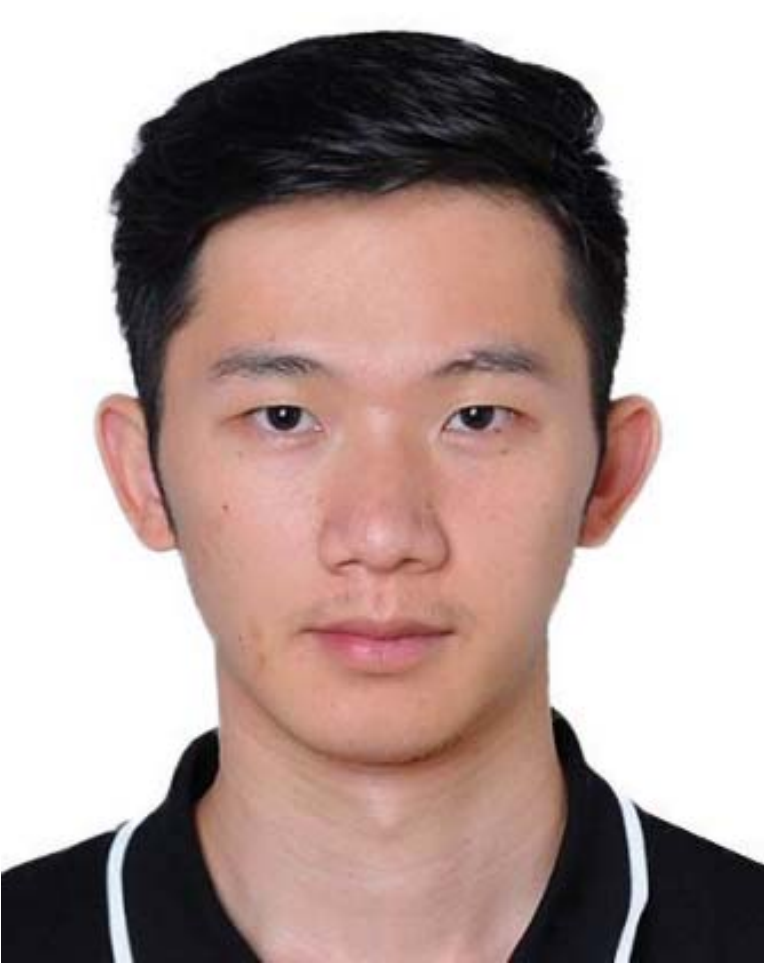}}]{Jiacheng Wang} received the Ph.D. degree from the School of Communication and Information Engineering, Chongqing University of Posts and Telecommunications, Chongqing, China. He is currently a Research Associate in computer science and
engineering with Nanyang Technological University,
Singapore. His research interests include wireless
sensing, semantic communications, and metaverse.
\end{IEEEbiography}


\begin{IEEEbiography}[{\includegraphics[width=1in,height=1.23in,clip,keepaspectratio]{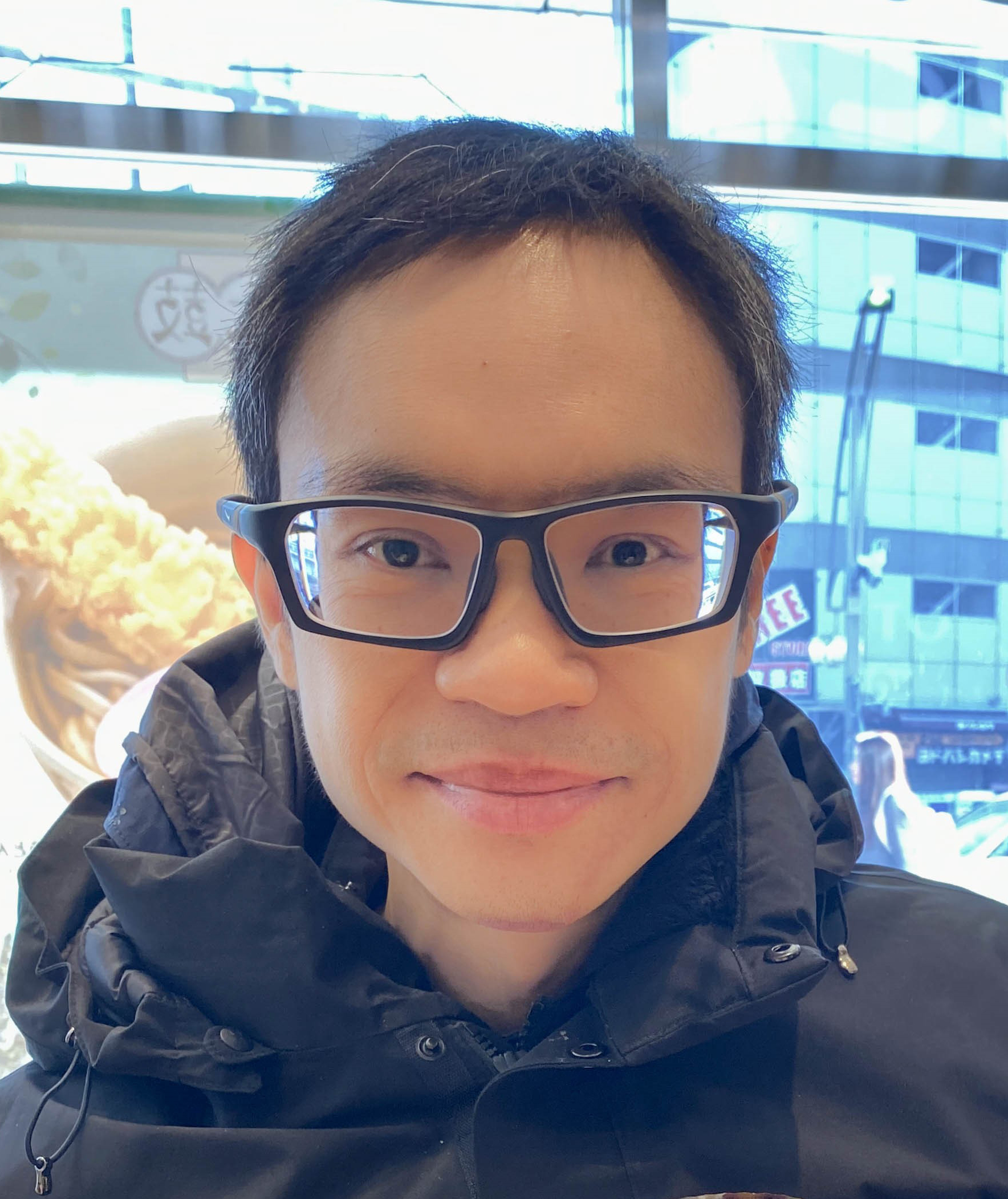}}]{Dusit Niyato}  (Fellow, IEEE) is a professor in the College of Computing and Data Science, at Nanyang Technological University, Singapore. He received B.Eng. from King Mongkuts Institute of Technology Ladkrabang (KMITL), Thailand and Ph.D. in Electrical and Computer Engineering from the University of Manitoba, Canada. His research interests are in the areas of mobile generative AI, edge intelligence, quantum computing and networking, and incentive mechanism design.
\end{IEEEbiography}


\begin{IEEEbiography}[{\includegraphics[width=1in,height=1.23in,clip,keepaspectratio]{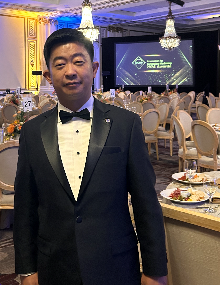}}]{Zhu Han} (S'01–M'04-SM'09-F'14) received the B.S. degree in electronic engineering from Tsinghua University, in 1997, and the M.S. and Ph.D. degrees in electrical and computer engineering from the University of Maryland, College Park, in 1999 and 2003, respectively. From 2000 to 2002, he was an R\&D Engineer of JDSU, Germantown, Maryland. From 2003 to 2006, he was a Research Associate at the University of Maryland. From 2006 to 2008, he was an assistant professor at Boise State University, Idaho. Currently, he is a John and Rebecca Moores Professor in the Electrical and Computer Engineering Department as well as in the Computer Science Department at the University of Houston, Texas. Dr. Han's main research targets on the novel game-theory related concepts critical to enabling efficient and distributive use of wireless networks with limited resources. His other research interests include wireless resource allocation and management, wireless communications and networking, quantum computing, data science, smart grid, carbon neutralization, security and privacy. Dr. Han received an NSF Career Award in 2010, the Fred W. Ellersick Prize of the IEEE Communication Society in 2011, the EURASIP Best Paper Award for the Journal on Advances in Signal Processing in 2015, IEEE Leonard G. Abraham Prize in the field of Communications Systems (best paper award in IEEE JSAC) in 2016, IEEE Vehicular Technology Society 2022 Best Land Transportation Paper Award, and several best paper awards in IEEE conferences. Dr. Han was an IEEE Communications Society Distinguished Lecturer from 2015 to 2018 and ACM Distinguished Speaker from 2022 to 2025, AAAS fellow since 2019, and ACM Fellow since 2024. Dr. Han is a 1\% highly cited researcher since 2017 according to Web of Science. Dr. Han is also the winner of the 2021 IEEE Kiyo Tomiyasu Award (an IEEE Field Award), for outstanding early to mid-career contributions to technologies holding the promise of innovative applications, with the following citation: ``for contributions to game theory and distributed management of autonomous communication networks."
\end{IEEEbiography}


\begin{IEEEbiography}[{\includegraphics[width=1in,height=1.23in,clip,keepaspectratio]{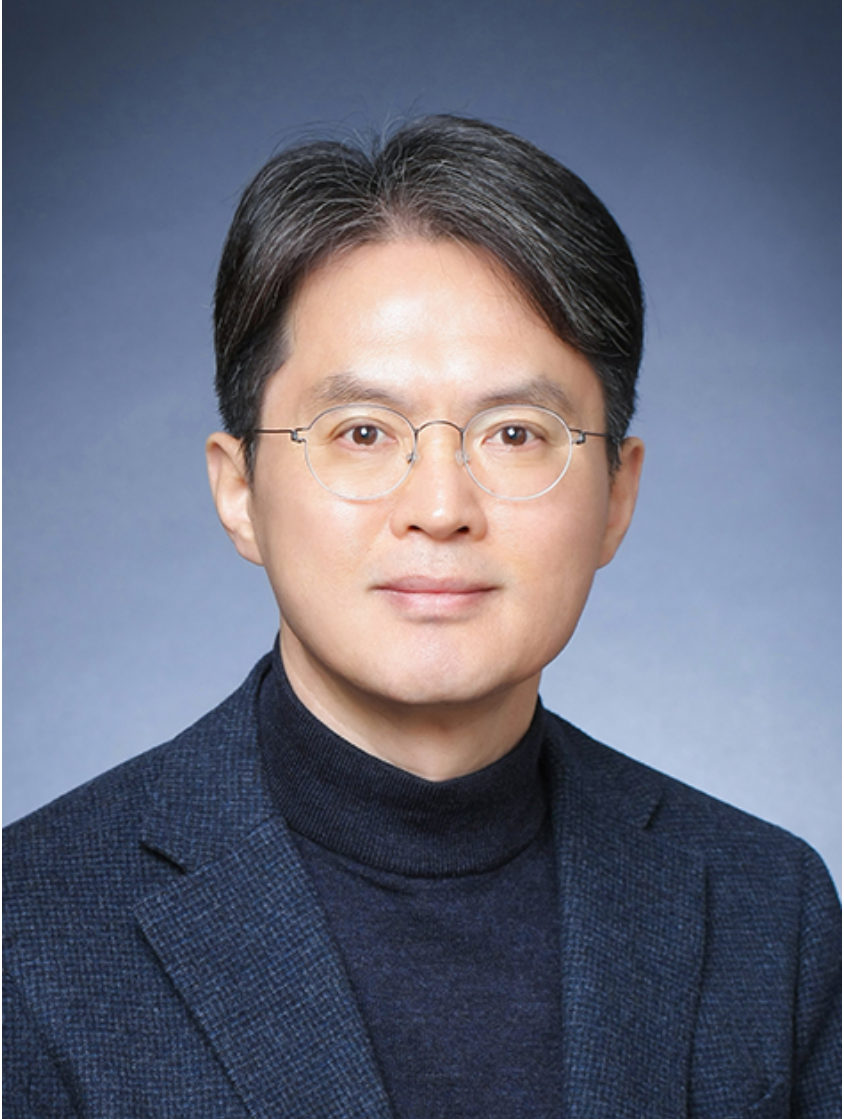}}]{Dong In Kim} (Fellow, IEEE) received the Ph.D. degree in electrical engineering from the University of Southern California, Los Angeles, CA, USA, in 1990. He was a Tenured Professor with the School of Engineering Science, Simon Fraser University, Burnaby, BC, Canada. He is currently a Distinguished Professor with the College of Information and Communication Engineering, Sungkyunkwan University, Suwon, South Korea. He is a Fellow of the Korean Academy of Science and Technology and a Member of the National Academy of Engineering of Korea. He was the first recipient of the NRF of Korea Engineering Research Center in Wireless Communications for RF Energy Harvesting from 2014 to 2021. He received several research awards, including the 2023 IEEE ComSoc Best Survey Paper Award and the 2022 IEEE Best Land Transportation Paper Award. He was selected the 2019 recipient of the IEEE ComSoc Joseph LoCicero Award for Exemplary Service to Publications. He was the General Chair of the IEEE ICC 2022, Seoul. Since 2001, he has been serving as an Editor, an Editor at Large, and an Area Editor of Wireless Communications I for IEEE Transactions on Communications. From 2002 to 2011, he served as an Editor and a Founding Area Editor of Cross-Layer Design and Optimization for IEEE Transactions on Wireless Communications. From 2008 to 2011, he served as the Co-Editor- in-Chief for the IEEE/KICS Journal of Communications and Networks. He served as the Founding Editorin-Chief for the IEEE Wireless Communications Letters from 2012 to 2015. He has been listed as a 2020/2022 Highly Cited Researcher by Clarivate Analytics.
\end{IEEEbiography}

\end{document}